\setlist[enumerate,1]{label={(\roman*)}}
\Crefname{property}{Property}{Properties}
\theoremstyle{plain}
\newtheorem{theorem}{Theorem}[section]
\newtheorem{lemma}[theorem]{Lemma}
\newtheorem{proposition}[theorem]{Proposition}
\newtheorem{corollary}[theorem]{Corollary}
\newtheorem{remark}[theorem]{Remark}
\theoremstyle{definition}
\newtheorem{definition}[theorem]{Definition}
\newcommand{\reals}{\mathbb{R}}
\newcommand{\complexes}{\mathbb{C}}
\newcommand{\naturals}{\mathbb{N}}
\newcommand{\positivereals}{\mathbb{R}_{>0}}
\newcommand{\positiveintegers}{\mathbb{N}_{>0}}
\newcommand{\nonnegativereals}{\mathbb{R}_{\ge 0}}
\newcommand{\distributions}[1][]{\mathcal{P}_{#1}}
\DeclareMathOperator{\probability}{Pr}
\newcommand{\entropy}{H}
\newcommand{\relativeentropy}[3][]{\mathop{D_{#1}}\mathopen{}\left(#2\middle\|#3\right)\mathclose{}}
\newcommand{\typeclass}[2]{T^{#1}_{#2}}
\newcommand{\typeclassprojector}[2]{\Pi^{#1}_{#2}}
\newcommand{\ket}[1]{\left|#1\right\rangle}
\newcommand{\bra}[1]{\left\langle #1\right|}
\newcommand{\ketbra}[2]{\left|#1\middle\rangle\!\middle\langle#2\right|}
\newcommand{\braket}[2]{\left\langle#1\middle|#2\right\rangle}
\newcommand{\loccto}[1][]{\xrightarrow{\textnormal{LOCC}}_{#1}}
\newcommand{\W}{\textnormal{W}}
\newcommand{\EPR}{\textnormal{EPR}}
\DeclareMathOperator{\support}{supp}
\DeclareMathOperator{\boundeds}{\mathcal{B}}
\DeclareMathOperator{\states}{\mathcal{S}}
\DeclareMathOperator{\domain}{dom}
\DeclareMathOperator{\epigraph}{epi}
\DeclareMathOperator{\convexhull}{conv}
\DeclareMathOperator{\Tr}{Tr}
\newcommand{\norm}[2][]{\left\|#2\right\|_{#1}}
\newcommand{\ball}[2]{B_{#1}(#2)}
\newcommand{\setbuild}[2]{\left\{#1\middle|#2\right\}}
\newcommand{\partitions}[1][]{\mathcal{P}_{#1}}
\newcommand{\ed}{\mathop{}\!\mathrm{d}}
\pgfplotsset{compat=1.16}
\pgfplotsset{
  rateplotstyle/.style={
    grid=both, 
    axis x line=bottom,
    axis y line=left,
    xmin=0,
    xmax=0.7,
    xlabel=$r$,
    xlabel style={anchor=north},
    ylabel=$R$,
    ylabel style={anchor=south east},
    width=0.6\textwidth, 
    height=0.45\textwidth, 
  },
  plotline/.style={
    black,
    semithick,
    mark=none,
  },
}
\title{Error exponents for tripartite-to-bipartite entanglement transformations}
\author{P\'eter Vrana}
\affil{Department of Algebra and Geometry, Institute of Mathematics, Budapest University of Technology and Economics, M\H uegyetem rkp. 3., H-1111 Budapest, Hungary.}
\begin{document}
\maketitle

\begin{abstract}
We consider distillation of ebits between a specified pair of subsystems from pure tripartite states by local operations and classical communication. It is known that, allowing an asymptotically vanishing error, the maximal rate is the minimum of the von Neumann entropies of the two corresponding marginals, and under asymptotic stochastic local operations and classical communication the maximal rate is given by a minimization over a one-parameter family of entanglement measures. In this paper, we determine the direct and strong converse error exponents, and the optimal rate for deterministic transformations.
\end{abstract}

\section{Introduction}

One of the main goals in entanglement theory is to characterize transformations between entangled states that are possible by local operations and classical communication. As a rule of thumb, this becomes difficult when more than two subsystems of mixed states are involved.\textsuperscript{[\href{https://en.wikipedia.org/wiki/Wikipedia:Citation_needed}{citation needed}]}

Transformations between pure bipartite states are fairly well understood in various settings. Single-shot deterministic transformability is characterized by the majorization between the vectors of Schmidt coefficients \cite{nielsen1999conditions}, and this can be extended to single-shot probabilistic transformations using weak majorization \cite{vidal1999entanglement}. The optimal protocol for probabilistically obtaining maximally entangled states was found in \cite{lo2001concentrating}. 

Asymptotic resuls for transforming many copies of a state into other states are either obtained by employing single-shot results and analyzing the relevant parameters of i.i.d. states, or independent methods. The central result on transformations between bipartite pure states with asymptotically vanishing error is that the optimal rate is the ratio of the (von Neumann) entropies of entanglement of the two states \cite{bennett1996concentrating}. This has been refined in the case of entanglement concentration, i.e., transforming i.i.d. copies of an arbitrary pure state into maximally entangled states in various ways.

When only deterministic transformations are allowed (probability $1$ for every finite number of copies), then the optimal rate is the min-entropy of entanglement \cite{morikoshi2001deterministic}, which is can be much smaller than the von Neumann entropy of entanglement. At the other end of the scale, when we are satisfied with any nonzero probability, it is an exercise in linear algebra to see that the optimal rate is the R\'enyi entanglement entropy of order $0$, i.e., the logarithm of the number of nonzero Schmidt coefficients. 

Between these two extreme cases, the success probability approaches either $0$ or $1$ exponentially fast with an exponent $r$ depending on the rate. The trade-off between the two has been determined by Hayashi et al.\ in \cite{hayashi2002error} in both cases, and they obtained similar results when approximate transformations are considered with the fidelity replacing the success probability. For probabilistic exact transformations with exponentially decreasing success probability, the optimal rate is known even for arbitrary initial and target states \cite{jensen2019asymptotic}.

For multipartite states, the situation is much more difficult, largely because of the lack of an analogue of the Schmidt decomposition for arbitrary multipartite states. This means that even the structure of different kinds of multipartite entanglement are little understood, which makes it very difficult to formulate and prove strong results on entanglement transformations that hold for all pairs of states.

In this paper we study a restricted class of transformations. The initial state can be any tripartite pure state $\psi$, but the desired target is a maximally entangled pair between the $A$ and $B$ subsystems, which is one of the possible extensions of entanglement concentration. Such transformatinos have been considered in previous work. In the one-shot setting this is known as the entanglement of assistance \cite{cohen1998unlocking,divincenzo1999entanglement,laustsen2003local}. In the SLOCC paradigm, such transformations are connected to matrix subspaces \cite{li2018tripartite}. In the asymptotic setting with asymptotically vanishing error, the optimal rate (measured by the number of ebits obtained per copy) is equal to the minimum of the two marginal von Neumann entropies $\entropy(A)_\psi$ and $\entropy(B)_\psi$ \cite{smolin2005entanglement}. When the transformation is required to be exact, but only succesful with an arbitrarily small nonzero probability (asymptotic SLOCC), the optimal rate has recently been determined \cite{christandl2023weighted}, and can be expressed as a minimization over a one-parameter family of entanglement measures, a subset of the quantum functionals introduced in \cite{christandl2023universal}.

The quantum functionals have been extended to a larger family by including a parameter that is similar to the order parameter in R\'enyi information quantities \cite{vrana2023family}. Such entanglement measures are known to play a role in characterizing the trade-off between the rate and the strong converse exponent for general transformations \cite{jensen2019asymptotic}. These results may give some hope that the error exponents for tripartite-to-bipartite transformations can be determined as well, which is the goal of the present paper.

\subsection{Results}

We show that the close connection between the fidelity and probabilty-based notions of asymptotic transformations that was found in the case of entanglement concentration remains true for transformations from arbitrary pure tripartite states to EPR pairs between specified subsystems. In the direct domain (exponentially decaying error), this simply means (see \cref{sec:preliminaries} for precise definitions of the rates and the appearing entropic quantities)
\begin{equation}
R_F(\psi\to\EPR_{AB},r)=R(\psi\to\EPR_{AB},r),
\end{equation}
while for the rates as functions of the strong converse exponent we have
\begin{equation}
R^*_F(\psi\to\EPR_{AB},r)=\sup_{0\le x\le r}\left[R^*(\psi\to\EPR_{AB},r)+r-x\right]
\end{equation}
(\cref{prop:directexponentsequal,prop:strongconverseexponentrelation}). For this reason, our main focus will be exact probabilistic transformations.

For large success probability or fidelity, we find that the upper bounds obtained by considering bipartitions are tight. Specifically, we show that a tripartite state $\ket{\psi}\in\mathcal{H}_A\otimes\mathcal{H}_B\otimes\mathcal{H}_C$ can be transformed into
\begin{equation}
\left\lceil \min\{\entropy_\infty(A)_\psi,\entropy_\infty(B)_\psi\}-\log\left(4\ln(2d_C^4(d_A+d_B))\right)\right\rceil
\end{equation}
EPR pairs between the $A$ and $B$ subsystems (\cref{thm:oneshotdeterministic}), while it follows from Nielsen's theorem applied to the bipartitions $A:BC$ and $AC:B$ that it is not possible to obtain more than $\left\lfloor \min\{\entropy_\infty(A)_\psi,\entropy_\infty(B)_\psi\}\right\rfloor$. Asymptotically, this implies that the optimal deterministic rate is equal to $\min\{\entropy_\infty(A)_\psi,\entropy_\infty(B)_\psi\}$ (\cref{cor:deterministicrate}).

Relaxing the requirement to a lower bound of $1-2^{-rn+o(n)}$ on either the success probability of an exact transformation or the fidelity of an approximate transformation results in larger rates. The bipartitions give the upper bound $R(\psi\to\EPR_{AB},r)\le\min\{E_A(r),E_B(r)\}$ with
\begin{equation}
E_A(r)=\sup_{\alpha>1}\left[r\frac{1}{1-\alpha}+\entropy_\alpha(A)_\psi\right]
\end{equation}
and analogously for $B$. We show that this is in fact an equality (\cref{thm:directexponent}), by adapting the bipartite techniques and combining them with our new one-shot lower bound for tripartite-to-bipartite transformations.

In contrast, for exponentially decaying success probability or fidelity, the minimum of the two corresponding bipartite upper bounds is not achievable, and to express the rate we need the entanglement measures introduced in \cite{vrana2023family} and further investigated in \cite{bugar2024interpolating,bugar2024explicit}. These are given in terms of representation-theoretic data defined via the local Schur--Weyl projections, and were previously known to provide upper bounds on the asymptotic rate under exponentially decaying success probability. Here we show that the minimum of these upper bounds is achievable and one can even restrict to a two-parameter subfamily in the minimization (\cref{thm:generalscrate}):
\begin{equation}
R^*(\psi\to\EPR_{AB},r)=\inf_{\substack{\alpha\in[0,1)  \\  x\in[0,1]}}\left[r\frac{\alpha}{1-\alpha}+E^{\alpha,(x,1-x,0)}(\psi)\right].
\end{equation}
For a special subclass of states that admit a type of simultaneous Schmidt decomposition, considered in \cite{bugar2024explicit}, an alternative expression is available. Denoting by $P$ the measured distribution in the preferred product basis (see \cref{sec:strongconverse} for details), the expression is
\begin{equation}
R^*(\psi\to\EPR_{AB},r)=\max_{\substack{Q\in\distributions(\support P)  \\  \relativeentropy{Q}{P}\le r}}\min\{\entropy(Q_A),\entropy(Q_B)\}
\end{equation}
(see \cref{thm:freesupportscrate}).

\subsection{Overview}

In \cref{sec:preliminaries} we introduce the precise setup and review some of the necessary background. This includes the comparison of fidelity-based and probability-based asymptotic error criteria, which are related by an inequality in general, but in our case (like for entanglement concentration), there is a precise quantitative relation. The main ideas here are not new, but extend those in \cite{hayashi2002error}, where only bipartite entanglement concentration is considered.

In \cref{sec:deterministic} we prove our one-shot lower bound and as a corollary we obtain the optimal rate for deterministic transformations. In \cref{sec:direct} we address the convergence of the success probability to $1$ using a truncation argument to reduce to the deterministic case. In \cref{sec:strongconverse} we consider exponentially decaying success probability (strong converse). A crucial step here is the derivation of a new expression for the entanglement measures $E^{\alpha,\theta}$, which is similar but better-behaved than the original formula in \cite{vrana2023family}. Since the proofs of \cref{thm:freesupportscrate,thm:generalscrate} are independent and the special case is similar but does not require the new expression, whe chose to present it first separately. In \cref{sec:conclusion} we outline the missing ingredients for a possible extension of the results to multipartite states.

\section{Preliminaries}\label{sec:preliminaries}

\subsection{Notations}

In this paper we consider tripartite pure states. The subsystems will be labelled by $A,B,C$, and the subsystem Hilbert spaces are $\mathcal{H}_A,\mathcal{H}_B,\mathcal{H}_C$, which are always assumed to be of finite dimensions $d_A,d_B,d_C$, respectively. An ebit between subsystems $A$ and $B$ (and in a product state with $C$) will be denoted by $\EPR_{AB}$.

When $\rho$ and $\sigma$ are positive operators on (possibly different) tripartite Hilbert spaces, the notation $\rho\loccto\sigma$ means that there is a trace-nonincreasing LOCC channel $\Lambda$ such that $\Lambda(\rho)=\sigma$ (typically $\rho$ is a state and $\sigma$ has trace at most $1$). Approximate transformations are understood with respect to the fidelity $F(\sigma,\tau)=\norm[1]{\sqrt{\sigma}\sqrt{\tau}}^2$. When $\tau=\ketbra{\varphi}{\varphi}$ is a pure state, this can be simplified to $F(\sigma,\ketbra{\varphi}{\varphi})=\bra{\varphi}\sigma\ket{\varphi}$. $\rho\loccto[\epsilon]\sigma$ means that there is a trace-preserving LOCC channel $\Lambda$ such that $F(\Lambda(\rho),\sigma)\ge 1-\epsilon$.

The set of probability distributions on a finite set $\mathcal{X}$ will be denoted by $\distributions(\mathcal{X})$. We may identify these with maps $P:\mathcal{X}\to\nonnegativereals$ such that $\sum_{x\in\mathcal{X}}P(x)=1$. The support of $P\in\distributions(\mathcal{X})$ is $\support P=\setbuild{x\in\mathcal{X}}{P(x)\neq 0}$. The R\'enyi entropy of order $\alpha\in(0,1)\cup(1,\infty)$ is (using base-$2$ logarithm)
\begin{align}
\entropy_\alpha(P) & = \frac{1}{1-\alpha}\log\sum_{x\in\mathcal{X}}P(x)^\alpha,
\intertext{which is extended continuously to $\alpha\in[0,\infty]$ as}
\entropy_0(P) & = \log\lvert\support P\rvert  \\
\entropy_1(P) & = -\sum_{x\in\support P}P(x)\log P(x)  \\
\entropy_\infty(P) & = -\log\max_{x\in\mathcal{X}}P(x).
\end{align}

For $P,Q\in\distributions(\mathcal{X})$, the relative entropy (or Kullback--Leibler divergence) is
\begin{equation}
\relativeentropy{P}{Q}=\begin{cases}
\sum_{x\in\support P}P(x)\log\frac{P(x)}{Q(x)} & \text{if $\support P\subseteq\support Q$}  \\
\infty & \text{otherwise.}
\end{cases}
\end{equation}

\subsection{Entanglement transformations}

Asymptotic entanglement transformations can be studied under a variety of error criteria. In each case we aim to find the largest rate compatible with a given lower bound on the success probability or fidelity,
\begin{align}
R_{\textnormal{det}}(\rho\to\sigma) & = \sup\setbuild{\limsup_{m\to\infty}\frac{n_m}{m}}{\rho^{\otimes m}\loccto \sigma^{\otimes n_m}}  \\
R(\rho\to\sigma) & = \sup\setbuild{\limsup_{m\to\infty}\frac{n_m}{m}}{\lim_{m\to\infty}p_m=1\text{ and }\rho^{\otimes m}\loccto p_m\sigma^{\otimes n_m}}  \\
R_F(\rho\to\sigma) & = \sup\setbuild{\limsup_{m\to\infty}\frac{n_m}{m}}{\lim_{m\to\infty}\epsilon_m=0\text{ and }\rho^{\otimes m}\loccto[\epsilon_m]\sigma^{\otimes n_m}}  \\
R(\rho\to\sigma,r) & = \sup\setbuild{\limsup_{m\to\infty}\frac{n_m}{m}}{\liminf_{m\to\infty}-\frac{1}{m}\log(1-p_m)\ge r\text{ and }\rho^{\otimes m}\loccto p_m\sigma^{\otimes n_m}}  \\
R_F(\rho\to\sigma,r) & = \sup\setbuild{\limsup_{m\to\infty}\frac{n_m}{m}}{\liminf_{m\to\infty}-\frac{1}{m}\log \epsilon_m\ge r\text{ and }\rho^{\otimes m}\loccto[\epsilon_m]\sigma^{\otimes n_m}}  \\
R^*(\rho\to\sigma,r) & = \sup\setbuild{\limsup_{m\to\infty}\frac{n_m}{m}}{\limsup_{m\to\infty}-\frac{1}{m}\log p_m\le r\text{ and }\rho^{\otimes m}\loccto p_m\sigma^{\otimes n_m}}  \\
R^*_F(\rho\to\sigma,r) & = \sup\setbuild{\limsup_{m\to\infty}\frac{n_m}{m}}{\limsup_{m\to\infty}-\frac{1}{m}\log(1-\epsilon_m)\le r\text{ and }\rho^{\otimes m}\loccto[\epsilon_m]\sigma^{\otimes n_m}}  \\
R^*(\rho\to\sigma,\infty) & = \sup\setbuild{\limsup_{m\to\infty}\frac{n_m}{m}}{\forall m:p_m>0\text{ and }\rho^{\otimes m}\loccto p_m\sigma^{\otimes n_m}}
\end{align}
where in the conditions it is understood that $(n_m)_{m\in\naturals}\in\naturals^\naturals$ and there exists $(p_m)_{m\in\naturals}\in(0,1)^\naturals$ (respectively $(\epsilon_m)_{m\in\naturals}\in(0,1)^\naturals$) with the stated properties and the transformation is possible for all $m$. Note that $R^*(\rho\to\sigma,\infty)$ does not have a counterpart for approximate transformations, since even a separable state can have nonzero fidelity to an arbitrary entangled one.

It is clear from the definitions that
\begin{equation}
R_{\textnormal{det}}(\rho\to\sigma)\le R(\rho\to\sigma,r)\le R(\rho\to\sigma)\le R^*(\rho\to\sigma,r')
\end{equation}
and
\begin{equation}
R_{\textnormal{det}}(\rho\to\sigma)\le R_F(\rho\to\sigma,r)\le R_F(\rho\to\sigma)\le R^*_F(\rho\to\sigma,r')
\end{equation}
and that $R(\rho\to\sigma,r)$ and $R_F(\rho\to\sigma,r)$ are decreasing functions of $r$ while $R^*(\rho\to\sigma,r)$ and $R^*_F(\rho\to\sigma,r)$ are increasing functions. In addition, since $\rho\loccto (1-\epsilon)\sigma$ implies $\rho\loccto[\epsilon]\sigma$, the inequalities
\begin{align}
R(\rho\to\sigma,r) & \le R_F(\rho\to\sigma,r)  \label{eq:directprobabilityvsfidelity}  \\
R(\rho\to\sigma) & \le R_F(\rho\to\sigma)  \\
R^*(\rho\to\sigma,r) & \le R^*_F(\rho\to\sigma,r)
\end{align}
hold.

The tensor product of LOCC channels is also an LOCC channel. Together with the multiplicativity of the success probability (the probability for independently run protocols to be both successful), and the fidelity, this implies that if $\rho_1\loccto p_1\sigma_1$ and $\rho_2\loccto p_2\sigma_2$, then $\rho_1\otimes\rho_2\loccto p_1p_2\sigma_1\otimes\sigma_2$, and if $\rho_1\loccto[\epsilon_1]\sigma_1$ and $\rho_2\loccto[\epsilon_2]\sigma_2$, then $\rho_1\otimes\rho_2\loccto[(1-\epsilon_1)(1-\epsilon_2)]\sigma_1\otimes\sigma_2$.
\begin{proposition}\label{prop:suprates}
\begin{align}
R_{\textnormal{det}}(\rho\to\sigma) & = \sup\setbuild{\frac{n}{m}}{\rho^{\otimes m}\loccto\sigma^{\otimes n}}  \\
R^*(\rho\to\sigma,r) & = \sup\setbuild{\frac{n}{m}}{\rho^{\otimes m}\loccto 2^{-rm}\sigma^{\otimes n}}  \\
R^*(\rho\to\sigma,\infty) & = \sup\setbuild{\frac{n}{m}}{\exists p>0:\rho^{\otimes m}\loccto p\sigma^{\otimes n}}  \\
R^*_F(\rho\to\sigma,r) & = \sup\setbuild{\frac{n}{m}}{\rho^{\otimes m}\loccto[1-2^{-rm}]\sigma^{\otimes n}}
\end{align}
In addition, $r\mapsto R^*(\rho\to\sigma,r)$ and $r\mapsto R^*_F(\rho\to\sigma,r)$ are concave functions.
\end{proposition}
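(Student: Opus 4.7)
The four identities will be established by proving both inequalities; the main tool is that trace-nonincreasing LOCC channels close under tensor products and under trivially adjoining and then tracing out additional input copies, both operations preserving (respectively multiplying) the success probability and the fidelity, as noted just before the proposition.

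For the direction LHS $\ge$ RHS, a single witnessing pair $(m_0,n_0)$ on the right promotes to an admissible sequence by applying $k=\lfloor m/m_0\rfloor$ independent copies of the witnessing map to the first $km_0$ factors of $\rho^{\otimes m}$ and tracing out the remainder. The resulting rate $kn_0/m$ converges to $n_0/m_0$; the success probability equals $p_0^k$ (fidelity $(1-\epsilon_0)^k$), so in the sup formula for $R^*$ or $R^*_F$ (where $p_0=2^{-rm_0}$ or $1-\epsilon_0=2^{-rm_0}$) one has $-\log p_m/m=rkm_0/m\to r$, verifying the $\limsup\le r$ constraint. The $R^*(\rho\to\sigma,\infty)$ case is identical with any $p_0>0$, and $R_{\textnormal{det}}$ is free of probability constraints.

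For the opposite inequality LHS $\le$ RHS, in the deterministic and $\infty$-exponent cases the ratios $n_m/m$ of any admissible sequence are themselves elements of the sup set, so their $\limsup$ is bounded by the sup. For $R^*$ and $R^*_F$ one must convert the asymptotic bound $\limsup -\log p_m/m\le r$ into the pointwise inequality $p\ge 2^{-rm}$; this is achieved by padding. Given a sequence with $\limsup n_m/m=L$, pass to a subsequence $(m_k)$ with $n_{m_k}/m_k\to L$ on which $p_{m_k}\ge 2^{-(r+\delta)m_k}$ holds eventually, then adjoin $s_k=\lceil\delta m_k/r\rceil$ extra copies of $\rho$ and trace them out, yielding $\rho^{\otimes(m_k+s_k)}\loccto p_{m_k}\sigma^{\otimes n_{m_k}}$ with $p_{m_k}\ge 2^{-r(m_k+s_k)}$ by the choice of $s_k$. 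The ratio $n_{m_k}/(m_k+s_k)$ therefore lies in the sup set on the right-hand side and converges to $L/(1+\delta/r)\to L$ as $\delta\to 0$. The fidelity variant is verbatim with $1-\epsilon$ in place of $p$, since partial trace preserves fidelity. This padding step is the only non-routine ingredient and is the main obstacle of the proof.

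Concavity of $r\mapsto R^*(\rho\to\sigma,r)$ and $r\mapsto R^*_F(\rho\to\sigma,r)$ is obtained by tensoring witnesses at two different exponents: given $\epsilon$-optimal $(m_i,n_i)$ at $r_i$ for $i=1,2$, the protocol consisting of $k_1$ copies of the first tensored with $k_2$ copies of the second yields
\begin{equation*}
\rho^{\otimes(k_1m_1+k_2m_2)}\loccto 2^{-(k_1r_1m_1+k_2r_2m_2)}\sigma^{\otimes(k_1n_1+k_2n_2)},
\end{equation*}
realising any rational weight $\lambda=k_1m_1/(k_1m_1+k_2m_2)$ of $r_1$, with rate $\lambda(n_1/m_1)+(1-\lambda)(n_2/m_2)$ at exponent $\lambda r_1+(1-\lambda)r_2$. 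This proves the concavity inequality for rational $\lambda\in[0,1]$; for irrational $\lambda$, assume without loss $r_1<r_2$ and approximate from above by rationals $\lambda'>\lambda$, so that $r'=\lambda'r_1+(1-\lambda')r_2<r$. Monotonicity of $R^*$ in $r$ then gives $R^*(r)\ge R^*(r')\ge\lambda'R^*(r_1)+(1-\lambda')R^*(r_2)$, and letting $\lambda'\to\lambda^+$ completes the argument.
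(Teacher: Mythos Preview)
Your proof is correct and complete. The approach differs from the paper's in a useful way: the paper argues the first identity by introducing $a_m=\max\{n:\rho^{\otimes m}\loccto\sigma^{\otimes n}\}$, noting superadditivity $a_{m_1+m_2}\ge a_{m_1}+a_{m_2}$, and invoking Fekete's lemma to conclude $\limsup a_m/m=\sup_m a_m/m$; it then asserts that the other three identities follow ``in a similar way.'' Your route avoids Fekete and instead proves both inequalities directly, the nontrivial one via the padding argument (adjoin $s_k=\lceil\delta m_k/r\rceil$ idle copies to convert the asymptotic constraint $\limsup(-\log p_m)/m\le r$ into the pointwise bound $p\ge 2^{-r(m_k+s_k)}$). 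This is in fact more careful than the paper's sketch: for $R^*$ and $R^*_F$ the Fekete route still needs exactly this kind of $\delta$-slack step to pass from the $\limsup$ condition to membership in the finite-level sup set, which the paper leaves implicit. Your concavity argument is the same time-sharing idea the paper begins (and does not finish in the displayed proof), with the additional care of extending from rational to irrational weights via monotonicity. One minor caveat: your padding step assumes $r>0$ (otherwise $s_k$ is undefined); this is harmless since the statement is only interesting for $r>0$, but you may want to note it.
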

\begin{proof}
We prove the first equality, the remaining three can be seen in a similar way. If $\sigma$ is separable, then both sides are $\infty$, therefore we may assume that $\sigma$ is entangled. Let $a_m=\max\setbuild{n\in\naturals}{\rho^{\otimes m}\loccto\sigma^{\otimes n}}$. Then clearly
\begin{equation}
R_{\textnormal{det}}(\rho\to\sigma)
 = \limsup_{m\to\infty}\frac{a_m}{m}.
\end{equation}
For all $m_1$ and $m_2$, $\rho^{\otimes m_1}\loccto\sigma^{\otimes a_{m_1}}$ and $\rho^{\otimes m_2}\loccto\sigma^{\otimes a_{m_2}}$, therefore $\rho^{\otimes (m_1+m_2)}\loccto\sigma^{\otimes(a_{m_1}+a_{m_2})}$, which implies $a_{m_1+m_2}\ge a_{m_1}+a_{m_2}$. Clearly also $a_m\ge 0$ for all $m$. By the Fekete lemma, we have
\begin{equation}
\limsup_{m\to\infty}\frac{a_m}{m}=\lim_{m\to\infty}\frac{a_m}{m}=\sup_{m\in\positiveintegers}\frac{a_m}{m}.
\end{equation}

Concavity follows by a time-sharing argument. Let $r_1,r_2>0$ and $\lambda\in(0,1)$. By the previous point, there are sequences $n_{1,m}$, $n_{2,m}$, $p_{1,m}$, and $p_{2,m}$ such that $\rho^{\otimes m}\loccto p_{1,m}\sigma^{\otimes n_{1,m}}$ and $\rho^{\otimes m}\loccto p_{2,m}\sigma^{\otimes n_{2,m}}$ hold for all $m$, $\limsup_{m\to\infty}-\frac{1}{m}\log p_{i,m}\le r_i$ 
\end{proof}

The precise values of the optimal rates are unknown even for transformations between general bipartite pure states. The only exception is when $\rho$ is a pure bipartite state and $\sigma$ is maximally entangled, without loss of generality an ebit $\EPR=(\ket{00}+\ket{11})/\sqrt{2}$. Up to local unitary transformations, a bipartite pure state can be written as
\begin{equation}
\ket{\psi_P}=\sum_{x\in\mathcal{X}}\sqrt{P(x)}\ket{x}_A\otimes\ket{x}_B,
\end{equation}
where $P\in\distributions(\mathcal{X})$ is a probability distribution on a finite set $\mathcal{X}$. The rates are
\begin{align}
R_{\textnormal{det}}(\psi_P\to\EPR) & = \entropy_\infty(P)
 & \text{\cite{morikoshi2001deterministic}}  \\
R(\psi_P\to\EPR,r) & = \sup_{\alpha>1}\left[r\frac{1}{1-\alpha}+\entropy_\alpha(P)\right]
 & \text{\cite{hayashi2002error}}  \\
R(\psi_P\to\EPR) & = \entropy(P)
 & \text{\cite{bennett1996concentrating}}  \\
R^*(\psi_P\to\EPR,r) & = \inf_{\alpha\in[0,1)}\left[r\frac{\alpha}{1-\alpha}+\entropy_\alpha(P)\right]
 & \text{\cite{hayashi2002error}}  \\
R^*(\psi_P\to\EPR,\infty) & = \entropy_0(P) 
\intertext{and}
R_F(\psi_P\to\EPR,r) & = \sup_{\alpha>1}\left[r\frac{1}{1-\alpha}+\entropy_\alpha(P)\right]
 & \text{\cite{hayashi2002error}}  \\
R_F(\psi_P\to\EPR) & = \entropy(P)
 & \text{\cite{bennett1996concentrating}}  \\
R_F^*(\psi_P\to\EPR,r) & = \begin{cases}
\inf_{\alpha\in[0,1)}\left[r\frac{\alpha}{1-\alpha}+\entropy_\alpha(P)\right] & \makebox[0pt][l]{if $0< r\le r_0$}  \\
r-r_0+R_F^*(\psi_P\to\EPR,r_0) & $\makebox[0pt][l]{if $r_0<r$,}$
\end{cases}
\end{align}
where $r_0$ is characterized by $\left.\frac{\ed}{\ed r}R^*(\psi_P\to\EPR,r)\right|_{r=r_0}=1$ \cite{hayashi2002error}.

It is apparent in the above formulas that the fidelity-based and the probabilistic optimal rates are closely related in this case, which is however not true for general transformations. In the case or pure bipartite states (arbitrary initial and target state), the following rates are known:
\begin{align}
R_{\textnormal{det}}(\psi_P\to\psi_Q) & = \min_{\alpha\in[0,\infty]}\frac{\entropy_\alpha(P)}{\entropy_\alpha(Q)} & \text{\cite{jensen2019asymptoticmajorization}}  \\
R(\psi_P\to\psi_Q) & = \inf_{\alpha\in[0,1)}\frac{\entropy_\alpha(P)}{\entropy_\alpha(Q)} & \text{\cite{jensen2019asymptotic}}  \\
R^*(\psi_P\to\psi_Q,r) & = \inf_{\alpha\in[0,1)}\frac{r\frac{\alpha}{1-\alpha}+\entropy_\alpha(P)}{\entropy_\alpha(Q)} & \text{\cite{jensen2019asymptotic}}  \\
R^*(\psi_P\to\psi_Q,\infty) & = \frac{\entropy_0(P)}{\entropy_0(Q)}  \\
R_F(\psi_P\to\psi_Q) & = \frac{\entropy(P)}{\entropy(Q)} & \text{\cite{bennett1996concentrating}}.
\end{align}
In particular, $R(\psi_P\to\psi_Q)$ can be strictly less than $R_F(\psi_P\to\psi_Q)$ in general.

In the case of entanglement concentration, the equality $R(\psi_P\to\EPR,r)=R_F(\psi_P\to\EPR,r)$ and the precise relation between $R^*(\psi_P\to\EPR,r)$ and $R^*_F(\psi_P\to\EPR,r)$ follows from Lemmas 8, 9, 12, and 13 in \cite{hayashi2002error} (the simplest one, \cite[Lemma 8]{hayashi2002error} holds much more generally, in the asymptotic setting its gives the inequality \eqref{eq:directprobabilityvsfidelity}). We now extend these lemmas to transformations from tripartite pure states to maximally entangled states between a specified pair
\begin{equation}
\ket{\phi_d}=\frac{1}{\sqrt{d}}\sum_{i=1}^d\ket{ii0}\in\complexes^d\otimes\complexes^d\otimes\complexes.
\end{equation}
\begin{lemma}\label{lem:highfidelitytoprobability}
Let $\ket{\psi}\in\mathcal{H}_A\otimes\mathcal{H}_B\otimes\mathcal{H}_C$ be a tripartite state vector, and suppose that $\ketbra{\psi}{\psi}\loccto[\epsilon]\ketbra{\phi_d}{\phi_d}$. Then with $L=\lfloor\frac{d}{8}\rfloor$ and $p=1-30\epsilon$ the relation $\ketbra{\psi}{\psi}\loccto p\ketbra{\phi_L}{\phi_L}$ holds.
\end{lemma}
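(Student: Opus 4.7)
The plan is to reduce the tripartite problem to its bipartite analog (Lemmas~12 and~13 of \cite{hayashi2002error}, which give the same constants $L=\lfloor d/8\rfloor$ and $30$) by exploiting that the target $\ket{\phi_d}=\ket{\phi_d^{AB}}\otimes\ket{0}_C$ is a product across the $AB|C$ cut. Concretely, I would extend the given approximate protocol by two further LOCC steps: a local projection by Charlie onto $\ket{0}_C$, followed by the bipartite version of the lemma applied on $AB$ conditioned on all classical outcomes so far.

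Unrolling the LOCC $\Lambda$ witnessing $\ketbra{\psi}{\psi}\loccto[\epsilon]\ketbra{\phi_d}{\phi_d}$ as a sequence of local measurements with classical communication yields conditional pure tripartite outcomes $\ket{\psi_k}$ occurring with probabilities $p_k$ and satisfying $\sum_k p_k|\braket{\phi_d}{\psi_k}|^2\ge 1-\epsilon$. Expanding $\ket{\psi_k}=\sum_c\ket{\xi_{k,c}}_{AB}\otimes\ket{c}_C$ in an orthonormal basis of $\mathcal{H}_C$ containing $\ket{0}_C$, have Charlie measure $\{\ketbra{0}{0}_C,\,I-\ketbra{0}{0}_C\}$. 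Outcome $0$ has conditional probability $\norm[2]{\xi_{k,0}}^2$, leaves Alice and Bob with the normalized bipartite vector $\ket{\tilde\xi_{k,0}}=\ket{\xi_{k,0}}/\norm[2]{\xi_{k,0}}$, and satisfies $|\braket{\phi_d^{AB}}{\xi_{k,0}}|^2=|\braket{\phi_d}{\psi_k}|^2$ since $\bra{\phi_d}=\bra{\phi_d^{AB}}\otimes\bra{0}_C$. Conditioned on the pair $(k,0)$, apply the bipartite lemma to $\ket{\tilde\xi_{k,0}}$ to produce $\ket{\phi_L^{AB}}$ exactly with probability $\pi_k\ge 1-30(1-|\braket{\phi_d^{AB}}{\tilde\xi_{k,0}}|^2)$; this bound is trivially valid when its right-hand side is negative, since $\pi_k\ge 0$. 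Combined with the $\ket{0}_C$ left on Charlie, this reconstructs $\ket{\phi_L}$.

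The overall success probability of the composite LOCC is then
\begin{align*}
\sum_k p_k\norm[2]{\xi_{k,0}}^2\pi_k
&\ge \sum_k p_k\bigl(\norm[2]{\xi_{k,0}}^2-30(\norm[2]{\xi_{k,0}}^2-|\braket{\phi_d}{\psi_k}|^2)\bigr) \\
&= 30\sum_k p_k|\braket{\phi_d}{\psi_k}|^2-29\sum_k p_k\norm[2]{\xi_{k,0}}^2 \\
&\ge 30(1-\epsilon)-29=1-30\epsilon,
\end{align*}
using the fidelity hypothesis and $\sum_k p_k\norm[2]{\xi_{k,0}}^2=\Tr(\ketbra{0}{0}_C\Lambda(\psi))\le 1$. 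The only non-routine ingredient is the bipartite analog from \cite{hayashi2002error}; the tripartite reduction itself is essentially bookkeeping, enabled by the product form of the target across the $AB|C$ cut, so the main obstacle is really just verifying that the bipartite statement can be invoked outcome-by-outcome and that the weighted sum reproduces the claimed constant~$30$.
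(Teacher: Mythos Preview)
Your approach is the paper's approach: decompose the LOCC output into a pure-state ensemble, apply a bipartite fidelity-to-probability conversion to each outcome, and average. Two cosmetic points are worth noting. First, your Charlie-projection step is redundant: by the paper's convention $\ket{\phi_d}\in\complexes^d\otimes\complexes^d\otimes\complexes$, so the output $C$-register is one-dimensional and every $\ket{\psi_k}$ is already $\ket{\xi_{k,0}}_{AB}\otimes\ket{0}_C$ with $\norm{\xi_{k,0}}=1$; the paper simply observes this and skips the measurement. Second, your citation is off: Lemmas~12 and~13 of \cite{hayashi2002error} treat the low-fidelity/low-probability regime, and neither states a single-state bipartite result with constants $L=\lfloor d/8\rfloor$ and $30$. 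The relevant input is their Lemma~9, which for a single bipartite state with fidelity $1-\epsilon'$ yields $\ket{\phi_{L'}}$ with $L'=\lfloor d(1-6\epsilon')/6\rfloor$ and probability $1-6\epsilon'$. The paper invokes Lemma~9 per outcome and then uses a Markov-inequality step to pass from the variable $L_i$ to the fixed $L=\lfloor d/8\rfloor$, which is where the factor $30$ actually appears. Your direct averaging works too once you note that Lemma~9 already implies the single-state bound you need (for $\epsilon'\le 1/30$ one has $L'\ge\lfloor d/8\rfloor$ and $1-6\epsilon'\ge 1-30\epsilon'$; for $\epsilon'>1/30$ the bound is vacuous), so the argument is sound once the reference is corrected.
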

\begin{proof}
Let $\Lambda$ be an LOCC channel such that $\rho:=\Lambda(\ketbra{\psi}{\psi})$ satisfies $\bra{\phi}\rho\ket{\phi}\ge 1-\epsilon$. $\Lambda$ can be written as a composition of measurements, classically controlled isometries and partial traces over the measurement results. By keeping track of all the measurement results, the protocol may be converted into one that transforms $\ket{\psi}$ into an ensemble of pure states $(p_i,\ket{\varphi_i})_{i\in I}$ such that $\sum_{i\in I}p_i\ketbra{\varphi_i}{\varphi_i}=\rho$. As only the first two subsystems of $\phi_d$ are nontrivial, each $\varphi_i$ is a bipartite pure state between $A$ and $B$, tensored with a fixed state on $C$. Let $\epsilon_i=1-\lvert\braket{\phi_d}{\varphi_i}\rvert^2$ and $L_i=\lfloor d(1-6\epsilon_i)/6\rfloor$. By \cite[Lemma 9]{hayashi2002error}, it is possible to transform $\ketbra{\varphi_i}{\varphi_i}$ into $\ket{\phi_{L_i}}$ with probability $1-6\epsilon_i$.

Since $\ketbra{\phi_d}{\phi_d}\loccto\ketbra{\phi_{d'}}{\phi_{d'}}$ whenever $d\ge g'$, it is therefore possible to transform $\ket{\psi}$ into $\ket{\phi_L}$ with probability at least
\begin{equation}
\sum_{\substack{i\in I  \\  L\le L_i}}p_i(1-6\epsilon_i)\ge \sum_{\substack{i\in I  \\  L\le L_i}}p_i-6\sum_{i\in I}p_i\epsilon_i.
\end{equation}
Since
\begin{equation}
1-\epsilon
 \le \bra{\phi}\rho\ket{\phi}
 = \sum_{i\in I}p_i\lvert\braket{\phi_d}{\varphi_i}\rvert^2
 = \sum_{i\in I}p_i(1-\epsilon_i),
\end{equation}
we have
\begin{equation}
\sum_{i\in I}p_i\epsilon_i\le\epsilon.
\end{equation}
The condition $L\le L_i=\lfloor d(1-6\epsilon_i)/6\rfloor$ is equivalent to $\epsilon_i\le \frac{1}{6}-\frac{L}{d}$ and, by the Markov inequality,
\begin{equation}
\sum_{\substack{i\in I  \\  \epsilon_i>\frac{1}{6}-\frac{L}{d}}}p_i\le\frac{\epsilon}{\frac{1}{6}-\frac{L}{d}}.
\end{equation}
It follows that the success probability is at least
\begin{equation}
1-\frac{\epsilon}{\frac{1}{6}-\frac{L}{d}}-6\epsilon
 = 1-12\frac{1-3\frac{L}{d}}{1-6\frac{L}{d}}\epsilon
 = 1-30\epsilon.
\end{equation}
\end{proof}
\begin{proposition}\label{prop:directexponentsequal}
For every tripartite pure state $\ket{\psi}\in\mathcal{H}_A\otimes\mathcal{H}_B\otimes\mathcal{H}_C$ and all $r>0$ we have $R(\psi\to\EPR_{AB},r)=R_F(\psi\to\EPR_{AB},r)$.
\end{proposition}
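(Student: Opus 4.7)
The forward inequality $R(\psi\to\EPR_{AB},r)\le R_F(\psi\to\EPR_{AB},r)$ is already \eqref{eq:directprobabilityvsfidelity}, so the plan is to establish the reverse direction $R_F(\psi\to\EPR_{AB},r)\le R(\psi\to\EPR_{AB},r)$ by using \cref{lem:highfidelitytoprobability} as a black box to convert any fidelity-achieving sequence into a probability-achieving one at essentially the same rate and exponent.

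First I would fix any $R'<R_F(\psi\to\EPR_{AB},r)$ (the case $R_F=0$ is trivial) and pick a sequence $(n_m,\epsilon_m)$ witnessing it, so that $\psi^{\otimes m}\loccto[\epsilon_m]\EPR_{AB}^{\otimes n_m}$, $\liminf_{m\to\infty} -\frac{1}{m}\log\epsilon_m\ge r$, and $\limsup_{m\to\infty} n_m/m\ge R'$. Identifying $\EPR_{AB}^{\otimes n_m}$ with $\phi_{2^{n_m}}$, I would apply \cref{lem:highfidelitytoprobability} with $d=2^{n_m}$ to produce $\psi^{\otimes m}\loccto p_m \phi_{L_m}$ with $L_m=\lfloor 2^{n_m}/8\rfloor$ and $p_m=1-30\epsilon_m$. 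The deterministic compression $\phi_{L_m}\loccto \phi_{2^{k_m}}\cong \EPR_{AB}^{\otimes k_m}$ for $k_m=\lfloor\log L_m\rfloor$ then yields an exact probabilistic transformation to $k_m$ EPR pairs.

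The final step is to track the asymptotic parameters. For all sufficiently large $m$ one has $k_m\ge n_m-4$, so $\limsup_{m\to\infty} k_m/m\ge\limsup_{m\to\infty} n_m/m\ge R'$. Moreover
\begin{equation}
-\frac{1}{m}\log(1-p_m)=-\frac{1}{m}\log 30-\frac{1}{m}\log\epsilon_m,
\end{equation}
whose $\liminf$ coincides with that of $-\frac{1}{m}\log\epsilon_m$ and is therefore at least $r$. Thus $R(\psi\to\EPR_{AB},r)\ge R'$, and letting $R'\uparrow R_F(\psi\to\EPR_{AB},r)$ finishes the argument.

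The only nontrivial ingredient is \cref{lem:highfidelitytoprobability} itself, which is already proved above by unravelling an LOCC protocol into its pure-state ensemble and invoking the bipartite result \cite[Lemma 9]{hayashi2002error} branchwise; once that lemma is in hand, the present proposition is essentially bookkeeping with the $\limsup$/$\liminf$ definitions. The only subtlety to watch is that the $L_m\to 2^{k_m}$ rounding loses at most a constant number of ebits per block, hence zero asymptotic rate, and that the factor $30$ in $1-p_m=30\epsilon_m$ contributes a vanishing additive $\frac{\log 30}{m}$ to the success-probability exponent.
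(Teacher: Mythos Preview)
Your proposal is correct and follows essentially the same approach as the paper: both directions use \eqref{eq:directprobabilityvsfidelity} for the easy inequality and \cref{lem:highfidelitytoprobability} applied to $d=2^{n_m}$ for the reverse, tracking that the constant loss of ebits and the factor $30$ vanish in the limit. The paper is slightly terser (observing directly that $\lfloor 2^{n_m}/8\rfloor=2^{n_m-3}$ for $n_m\ge 3$, so no extra rounding to $k_m$ is needed), but your more explicit bookkeeping with $R'$, $L_m$, and $k_m$ is equally valid.
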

\begin{proof}
As noted above, the inequality $R(\psi\to\EPR_{AB},r)\le R_F(\psi\to\EPR_{AB},r)$ holds more generally (see \eqref{eq:directprobabilityvsfidelity}).

For the other direction, suppose that $\ketbra{\psi}{\psi}^{\otimes m}\loccto[\epsilon_m]\EPR_{AB}^{\otimes n_m}$ for some sequences $(\epsilon_m)_{m\in\naturals}$ and $(n_m)_{m\in\naturals}$. By \cref{lem:highfidelitytoprobability}, $\ketbra{\psi}{\psi}^{\otimes m}\loccto p_m\EPR_{AB}^{\otimes(n_m-3)}$ also holds with $p_m=1-30\epsilon_m$. Since
\begin{equation}
\liminf_{m\to\infty}-\frac{1}{m}\log(1-p_m)
 = \liminf_{m\to\infty}-\frac{1}{m}\log(30\epsilon_m)
 = \liminf_{m\to\infty}-\frac{1}{m}\log \epsilon_m
\end{equation}
and
\begin{equation}
\limsup_{m\to\infty}\frac{n_m-3}{m}
 = \limsup_{m\to\infty}\frac{n_m}{m},
\end{equation}
every feasible point in the definition of $R_F(\psi\to\EPR_{AB},r)$ is also a feasible point for $R(\psi\to\EPR_{AB},r)$, which implies $R(\psi\to\EPR_{AB},r)\ge R_F(\psi\to\EPR_{AB},r)$.
\end{proof}

\begin{lemma}\label{lem:lowprobabilitytofidelity}
Let $\ket{\psi}\in\mathcal{H}_A\otimes\mathcal{H}_B\otimes\mathcal{H}_C$ be a tripartite state vector, and suppose that $\ketbra{\psi}{\psi}\loccto p\ketbra{\phi_d}{\phi_d}$. Then for all $L\ge d$ the relation $\ketbra{\psi}{\psi}\loccto[1-p\frac{d}{L}]\ketbra{\phi_L}{\phi_L}$ holds.
\end{lemma}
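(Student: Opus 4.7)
The plan is to convert the given trace-nonincreasing probabilistic LOCC protocol into a trace-preserving one whose output has fidelity at least $pd/L$ with $\ket{\phi_L}$, by embedding $\ket{\phi_d}$ into $\complexes^L\otimes\complexes^L$ on the success branch and preparing an arbitrary fixed product state on the failure branch. The bound will then follow from the elementary computation $\lvert\braket{\phi_L}{\tilde\phi_d}\rvert^2=d/L$ for the embedded state $\ket{\tilde\phi_d}$.

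Concretely, I would fix a trace-nonincreasing LOCC channel $\Lambda$ with $\Lambda(\ketbra{\psi}{\psi})=p\ketbra{\phi_d}{\phi_d}$ and view it through its LOCC-tree realization. On each leaf corresponding to a ``success'' branch, the produced vector is proportional to $\ket{\phi_d}$ (since the sum of the rank-one projectors on the individual branch outputs is $p\ketbra{\phi_d}{\phi_d}$, each summand must point along $\ket{\phi_d}$); on the remaining ``failure'' leaves the original protocol aborts. I would modify the tree by appending, on each success leaf, the local isometries $\complexes^d\hookrightarrow\complexes^L$ on $A$ and $B$ that extend the computational basis, producing the embedded state $\ket{\tilde\phi_d}=\frac{1}{\sqrt{d}}\sum_{i=1}^d\ket{ii0}\in\complexes^L\otimes\complexes^L\otimes\mathcal{H}_C$, and by replacing each failure leaf with the local preparation of a fixed product state in $\complexes^L\otimes\complexes^L\otimes\mathcal{H}_C$. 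Both steps are purely local and triggered by the classical communication already carried out in $\Lambda$, so the modified protocol $\Lambda'$ is a trace-preserving LOCC channel with output
\begin{equation}
\Lambda'(\ketbra{\psi}{\psi})=p\ketbra{\tilde\phi_d}{\tilde\phi_d}+(1-p)\tau
\end{equation}
for some state $\tau$ supported on $\complexes^L\otimes\complexes^L\otimes\mathcal{H}_C$.

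I would then compute the fidelity with $\ketbra{\phi_L}{\phi_L}$ directly:
\begin{equation}
\bra{\phi_L}\Lambda'(\ketbra{\psi}{\psi})\ket{\phi_L}=p\lvert\braket{\phi_L}{\tilde\phi_d}\rvert^2+(1-p)\bra{\phi_L}\tau\ket{\phi_L}\ge p\lvert\braket{\phi_L}{\tilde\phi_d}\rvert^2,
\end{equation}
and use that, since $L\ge d$, $\braket{\phi_L}{\tilde\phi_d}=\frac{1}{\sqrt{Ld}}\sum_{i=1}^d 1=\sqrt{d/L}$, so the fidelity is at least $pd/L$. This is precisely the claimed relation $\ketbra{\psi}{\psi}\loccto[1-pd/L]\ketbra{\phi_L}{\phi_L}$. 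No substantial obstacle is anticipated; the only point that needs care is to verify that the completion of the trace-nonincreasing protocol to a trace-preserving one stays within LOCC, which is immediate because both the embedding isometries and the fixed-state preparations are local and classically conditioned on outcomes already communicated during $\Lambda$.
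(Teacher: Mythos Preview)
Your proposal is correct and follows essentially the same approach as the paper: complete the probabilistic LOCC to a trace-preserving one by outputting an arbitrary (separable) state on the failure branch, then lower bound the fidelity with $\ket{\phi_L}$ by $p\lvert\braket{\phi_L}{\phi_d}\rvert^2=pd/L$. The paper's proof is just a terser version of yours, citing the bipartite analogue and noting that any separable $\rho$ works for the completion, without spelling out the LOCC-tree bookkeeping or the embedding isometries that you made explicit.
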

\begin{proof}
The proof of \cite[Lemma 12]{hayashi2002error} extends in a straightforward way. There exists a state $\rho$ such that $\ketbra{\psi}{\psi}\loccto p\ketbra{\phi_d}{\phi_d}+(1-p)\rho$ (e.g., true for any separable $\rho$), and
\begin{equation}
F(p\ketbra{\phi_d}{\phi_d}+(1-p)\rho,\ketbra{\phi_L}{\phi_L})
 = \bra{\phi_L}(p\ketbra{\phi_d}{\phi_d}+(1-p)\rho)\ket{\phi_L}
 \ge p\lvert\braket{\phi_L}{\phi_d}\rvert
 = p\frac{d}{L}.
\end{equation}
\end{proof}

Next we prove a variant of \cite[Lemma 13]{hayashi2002error} that avoids the assumption that the closest state to $\ket{\phi_d}$ that is reachable from $\ket{\psi}$ by LOCC is a pure state, which is only known to hold for bipartite states. We use the following inequality from their proof.
\begin{lemma}\label{lem:sumlogestimate}
If $a_1,a_2,\dots,a_d\in\reals$, then
\begin{equation}
\sum_{j=1}^da_j\le a_1+(\ln d)\max_{j\in[d]}ja_j.
\end{equation}
\end{lemma}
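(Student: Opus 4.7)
My plan is short and elementary: isolate the $j=1$ contribution from the sum, apply the defining bound of the maximum to each remaining term, and then reduce the resulting harmonic tail to a logarithm. Setting $M=\max_{j\in[d]}ja_j$, the defining inequality $ja_j\le M$, divided by the positive integer $j$, gives $a_j\le M/j$ for every $j\in[d]$. Keeping the $j=1$ summand untouched and applying this bound to each term with $j\ge 2$ yields
\begin{equation}
\sum_{j=1}^d a_j \;=\; a_1 + \sum_{j=2}^d a_j \;\le\; a_1 + M\sum_{j=2}^d\frac{1}{j}.
\end{equation}

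The remaining step is the standard harmonic-to-integral comparison. Since $x\mapsto 1/x$ is decreasing on $(0,\infty)$, one has $1/j\le\int_{j-1}^{j}\ed x/x$ for every $j\ge 2$, and summing telescopes to $\sum_{j=2}^d 1/j\le\int_1^d \ed x/x=\ln d$. Substituting this into the previous display produces $\sum_{j=1}^d a_j\le a_1+M\ln d$, which is exactly the claimed inequality.

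I do not expect a serious obstacle here, since the entire argument consists of two elementary one-line estimates chained together. The only thing worth verifying is that the direction of the inequality is preserved when the harmonic bound is multiplied by $M$: this requires $M\ge 0$, which is automatic in the intended application, where the quantities $a_j$ arise from non-negative spectral data (so in particular $M\ge a_1\ge 0$). With that in hand, the lemma follows without further work.
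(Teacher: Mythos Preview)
The paper does not actually supply a proof of this lemma; it is simply quoted as ``the following inequality from their proof'' in \cite{hayashi2002error}. Your argument---bound each $a_j$ by $M/j$, keep $a_1$, and control the harmonic tail $\sum_{j=2}^d 1/j\le\ln d$ by the integral comparison---is exactly the natural elementary proof, and there is nothing to compare it against in the paper itself.

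Your caveat about needing $M\ge 0$ is not a minor technicality but essential: the lemma \emph{as literally stated} (for arbitrary real $a_j$) is false. For instance, with $d=2$, $a_1=-1$, $a_2=-1/2$ one has $\sum_j a_j=-3/2$, $M=\max(-1,-1)=-1$, and the right-hand side is $-1-\ln 2<-3/2$. So the inequality really does require $M\ge 0$ (equivalently, it suffices that some $a_j\ge 0$). You correctly observe that in the intended application---the proof of \cref{lem:lowfidelitytoprobability}, where $a_j=\sum_i p_i\sqrt{F_i/F}\sqrt{Q^{(i)}(j)}\ge 0$---this is automatic. With that hypothesis added, your proof is complete and correct.
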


\begin{lemma}\label{lem:lowfidelitytoprobability}
Let $\ket{\psi}\in\mathcal{H}_A\otimes\mathcal{H}_B\otimes\mathcal{H}_C$ be a tripartite state vector, and suppose that $\ketbra{\psi}{\psi}\loccto[\epsilon]\ketbra{\phi_d}{\phi_d}$.
Then there exists an integer $L\le d$ and $P\in(0,1]$ such that
\begin{equation}
\sqrt{PL}\ge\frac{\sqrt{d(1-\epsilon)}-1}{\ln d}
\end{equation}
and the relation $\ketbra{\psi}{\psi}\loccto P\ketbra{\phi_L}{\phi_L}$ holds.
\end{lemma}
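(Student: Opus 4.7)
The plan is to adapt the bipartite argument of Lemma~13 of \cite{hayashi2002error} while handling the fact that the output of the LOCC channel from a tripartite input is in general a mixed state on $A\otimes B$. First, as in the proof of \cref{lem:highfidelitytoprobability}, tracking the classical outcomes of the $\loccto[\epsilon]$ channel produces an ensemble $(p_i,\ket{\varphi_i})_{i\in I}$ of bipartite pure states on $A\otimes B$ (tensored with a fixed state on $C$) with $\sum_i p_i f_i\ge 1-\epsilon$, where $f_i=\lvert\braket{\phi_d}{\varphi_i}\rvert^2$. Accepting only outcome $i$ gives $\ketbra{\psi}{\psi}\loccto p_i\ketbra{\varphi_i}{\varphi_i}$. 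Let $\lambda^{(i)}_L$ be the Schmidt coefficients of $\varphi_i$ in decreasing order (padded with zeros so that $L\in\{1,\dots,d\}$), and set $\mu_L=\sum_i p_i\lambda^{(i)}_L$; this is a decreasing sequence with $\mu_1\le 1$.

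The central step is the inequality $\sqrt{d(1-\epsilon)}\le\sum_{L=1}^d\sqrt{\mu_L}$. Starting from the bipartite bound $\sqrt{df_i}\le\sum_L\sqrt{\lambda^{(i)}_L}$, which follows from the Schmidt decomposition of $\varphi_i$ and the triangle inequality, I would square, take the $p_i$-weighted sum, expand $(\sum_L\sqrt{\lambda^{(i)}_L})^2$ into a double sum, and bound each cross term by Cauchy--Schwarz as $\sum_i p_i\sqrt{\lambda^{(i)}_L\lambda^{(i)}_{L'}}\le\sqrt{\mu_L\mu_{L'}}$. This yields $d\sum_i p_i f_i\le(\sum_L\sqrt{\mu_L})^2$ and hence the claim. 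The hard part is recognising this Cauchy--Schwarz aggregation: the more direct route of applying the bipartite lemma to each $\varphi_i$ and averaging produces a bound involving $\sum_i p_i\sqrt{df_i}$, which by Jensen can be much smaller than $\sqrt{d\sum_i p_i f_i}$ and need not reach $\sqrt{d(1-\epsilon)}$.

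Given this inequality, \cref{lem:sumlogestimate} applied to the sequence $a_L=\sqrt{\mu_L}$, together with $\sqrt{\mu_1}\le 1$, gives
\begin{equation*}
\max_{L\in\{1,\dots,d\}}L\sqrt{\mu_L}\ge\frac{\sum_L\sqrt{\mu_L}-1}{\ln d}\ge\frac{\sqrt{d(1-\epsilon)}-1}{\ln d}.
\end{equation*}
Take $L$ to attain this maximum and set $P=L\mu_L$. The bound $\lambda^{(i)}_L\le 1/L$ (a consequence of $\sum_{L'=1}^L\lambda^{(i)}_{L'}\le 1$) yields $\mu_L\le 1/L$ and hence $P\le 1$, while $\sqrt{PL}=L\sqrt{\mu_L}$ matches the required lower bound.

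To realise $\ketbra{\psi}{\psi}\loccto P\ketbra{\phi_L}{\phi_L}$, compose the ensemble-producing protocol with Vidal's probabilistic bipartite LOCC transformation $\varphi_i\to\phi_L$, which succeeds with probability at least $L\lambda^{(i)}_L$: each term in Vidal's minimisation formula for the maximum conversion probability is at least $L\lambda^{(i)}_L$ since $\sum_{j\ge\ell}\lambda^{(i)}_j\ge(L-\ell+1)\lambda^{(i)}_L$ for $\ell\le L$. The overall success probability is $\sum_i p_i L\lambda^{(i)}_L=L\mu_L=P$, completing the construction.
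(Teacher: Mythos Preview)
Your argument is correct and follows the same overall skeleton as the paper: decompose the LOCC output into a pure-state ensemble, control a sum of square roots via \cref{lem:sumlogestimate}, and then apply the bipartite truncation/Vidal step at the optimal index $L$. The algebraic organisation of the ensemble averaging differs, however. The paper introduces a fidelity-tilted weighting $p_i\sqrt{F_i/F}$, applies \cref{lem:sumlogestimate} to the terms $a_j=\sum_i p_i\sqrt{F_i/F}\sqrt{Q^{(i)}(j)}$, and only afterwards uses a single Cauchy--Schwarz step to compare $j\,a_j$ with $\sqrt{PL}$. You instead form the averaged Schmidt profile $\mu_L=\sum_i p_i\lambda^{(i)}_L$ directly, and your Cauchy--Schwarz step (on the \emph{pairwise} products $\sum_i p_i\sqrt{\lambda^{(i)}_L\lambda^{(i)}_{L'}}\le\sqrt{\mu_L\mu_{L'}}$) delivers the clean inequality $d(1-\epsilon)\le(\sum_L\sqrt{\mu_L})^2$ before \cref{lem:sumlogestimate} is even invoked. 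Both routes yield the identical bound, but yours avoids the somewhat opaque tilting trick and makes the role of the averaged coefficients $\mu_L$ explicit throughout; the paper's version, on the other hand, stays closer to the original bipartite argument of \cite{hayashi2002error}. One small remark: your phrase ``padded with zeros so that $L\in\{1,\dots,d\}$'' tacitly handles only the case of Schmidt rank at most $d$; if some $\varphi_i$ has rank exceeding $d$ (the paper allows $m\ge d$), your inequality $\sqrt{df_i}\le\sum_{L=1}^d\sqrt{\lambda^{(i)}_L}$ still holds because the maximal overlap with $\phi_d$ involves only the top $d$ Schmidt coefficients, so this is a cosmetic point rather than a gap.
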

\begin{proof}
As in the proof of \cref{lem:highfidelitytoprobability}, $\ket{\psi}$ can be transformed into an ensemble of pure biseparable (product with $C$) states $(p_i,\ket{\varphi_i})_{i\in I}$ such that $\sum_{i\in I}p_i(1-\epsilon_i)\ge 1-\epsilon$, where $\epsilon_i=1-\lvert\braket{\phi_d}{\varphi_i}\rvert^2$. Let $F_i=1-\epsilon_i$ and $F=\sum_{i\in I}p_iF_i$. We may assume that there is an ordered basis in which the states $\varphi_i$ and $\phi_d$ are all in Schmidt form with decreasing coefficients, as aligning the basis and sorting is possible with local unitary transformations, and would only increase the fidelities. Let
\begin{equation}
\ket{\varphi_i}=\sum_{j=1}^m\sqrt{Q^{(i)}(j)}\ket{j}\otimes\ket{j}
\end{equation}
for some $m\ge d$ (padding with zeros if necessary). Then
\begin{equation}
\sqrt{F_i}=\sum_{j=1}^d\frac{1}{\sqrt{d}}\sqrt{Q^{(i)}(j)},
\end{equation}
therefore
\begin{equation}
\begin{split}
\sqrt{dF}
 & = \frac{\sqrt{d}}{\sqrt{F}}F  \\
 & = \frac{\sqrt{d}}{\sqrt{F}}\sum_{i\in I}p_iF_i  \\
 & = \frac{\sqrt{d}}{\sqrt{F}}\sum_{i\in I}p_i\sqrt{F_i}\sum_{j=1}^d\frac{1}{\sqrt{d}}\sqrt{Q^{(i)}(j)}  \\
 & = \sum_{j=1}^d\sum_{i\in I}p_i\sqrt{\frac{F_i}{F}}\sqrt{Q^{(i)}(j)}  \\
 & \le 1+(\ln d)\max_{j\in[d]}j\sum_{i\in I}p_i\sqrt{\frac{F_i}{F}}\sqrt{Q^{(i)}(j)},
\end{split}
\end{equation}
in the last step using \cref{lem:sumlogestimate} and that the terms are bounded by $1$.

Let $j$ be an index where the maximum is attained. For the state $\ket{\varphi_i}$, we apply truncation at $Q^{(i)}(j)$ (see \cite[Section 2]{hayashi2002error} or the proof of \cref{thm:directexponent} below for more details on this step), after which it is possible to distill $\ket{\phi_L}$ with $L=j$ by Nielsen's theorem, with an overall probability $P_i\ge jQ^{(i)}(j)$.

Averaging over the ensemble (the outcomes of the approximate transformation), the total probability for transforming $\ket{\psi}$ into $\ket{\phi_L}$ satisfies
\begin{equation}
P=\sum_{i\in I}p_iP_i\ge j\sum_{i\in I}p_iQ^{(i)}(j)
\end{equation}

By the choice of $j$ and using the Cauchy--Schwarz inequality,
\begin{equation}
\begin{split}
\frac{\sqrt{dF}-1}{\ln d}
 & \le j\sum_{i\in I}p_i\sqrt{\frac{F_i}{F}}\sqrt{Q^{(i)}(j)}  \\
 & \le j\sqrt{\sum_{i\in I}p_i\frac{F_i}{F}}\sqrt{\sum_{i\in I}p_iQ^{(i)}(j)}  \\
 & = j\sqrt{\sum_{i\in I}p_iQ^{(i)}(j)}  \\
 & \le \sqrt{PL}.
\end{split}
\end{equation}
\end{proof}

\begin{proposition}\label{prop:strongconverseexponentrelation}
For every tripartite pure state $\ket{\psi}\in\mathcal{H}_A\otimes\mathcal{H}_B\otimes\mathcal{H}_C$ and all $r>0$ we have
\begin{equation}
R^*_F(\psi\to\EPR_{AB},r)=\sup_{0\le x\le r}\left[R^*(\psi\to\EPR_{AB},x)+r-x\right].
\end{equation}
\end{proposition}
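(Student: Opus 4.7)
The plan is to adapt the bipartite strategy of \cite[Lemmas 12 and 13]{hayashi2002error}, using \cref{lem:lowprobabilitytofidelity} for the $\ge$ direction and \cref{lem:lowfidelitytoprobability} for the $\le$ direction.

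For achievability ($\ge$), I fix $x\in[0,r]$ and take a sequence $\psi^{\otimes m}\loccto p_m\EPR_{AB}^{\otimes n_m}$ whose rate $n_m/m$ approaches $R^*(\psi\to\EPR_{AB},x)$ with $\limsup_{m\to\infty}-\frac{1}{m}\log p_m\le x$. Applying \cref{lem:lowprobabilitytofidelity} with target dimension $L=2^{n_m+\lceil(r-x)m\rceil}\ge 2^{n_m}$ produces a fidelity-based protocol to $\EPR_{AB}^{\otimes(n_m+\lceil(r-x)m\rceil)}$ with $1-\epsilon_m\ge p_m\cdot 2^{-\lceil(r-x)m\rceil}$, and hence $\limsup-\frac{1}{m}\log(1-\epsilon_m)\le x+(r-x)=r$. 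The resulting rate is $R^*(\psi\to\EPR_{AB},x)+(r-x)$, and supremizing over $x\in[0,r]$ delivers the desired lower bound on $R^*_F(\psi\to\EPR_{AB},r)$.

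For the converse ($\le$), I start from a sequence $\psi^{\otimes m}\loccto[\epsilon_m]\EPR_{AB}^{\otimes n_m}$ with $y_m:=-\frac{1}{m}\log(1-\epsilon_m)$ satisfying $\limsup y_m\le r$ and $\limsup n_m/m=R^*_F(\psi\to\EPR_{AB},r)$. On any subsequence where $2^{n_m}(1-\epsilon_m)\le 1$ one has $n_m\le y_m m$, so $R^*_F\le r\le R^*(\psi\to\EPR_{AB},0)+r$, which is dominated by the $x=0$ value in the supremum (using $R^*(\cdot,0)\ge 0$). Otherwise $2^{n_m}(1-\epsilon_m)\to\infty$ along the subsequence of interest, and \cref{lem:lowfidelitytoprobability} with $d=2^{n_m}$ supplies integers $L_m\le 2^{n_m}$ and $P_m\in(0,1]$ with $\psi^{\otimes m}\loccto P_m\phi_{L_m}$. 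Setting $k_m=\lfloor\log L_m\rfloor$ and distilling via Nielsen's theorem yields $\psi^{\otimes m}\loccto P_m\EPR_{AB}^{\otimes k_m}$.

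Taking logarithms in the estimate $\sqrt{P_m L_m}\ge(\sqrt{2^{n_m}(1-\epsilon_m)}-1)/(n_m\ln 2)$ and noting that, once $2^{n_m}(1-\epsilon_m)\to\infty$, the $-1$ and the $n_m\ln 2$ factor contribute only $O(m^{-1}\log m)$, I obtain $(k_m/m)+x_m\ge(n_m/m)-y_m+o(1)$ with $x_m:=-\frac{1}{m}\log P_m$. Combined with $k_m\le n_m$ (which comes from $L_m\le 2^{n_m}$), this delivers the two crucial estimates $x_m\le y_m+o(1)\le r+o(1)$ and $n_m/m\le k_m/m+y_m-x_m+o(1)$. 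Passing to a sub-subsequence along which $x_m\to x_*\in[0,r]$, $y_m\to y_*\le r$, and $k_m/m\to K$, and invoking \cref{prop:suprates} together with the continuity of $R^*(\psi\to\EPR_{AB},\cdot)$ (a consequence of its concavity) to conclude $K\le R^*(\psi\to\EPR_{AB},x_*)$, I arrive at $R^*_F\le K+y_*-x_*\le R^*(\psi\to\EPR_{AB},x_*)+r-x_*\le\sup_{0\le x\le r}[R^*(\psi\to\EPR_{AB},x)+r-x]$. The delicate point is ensuring $x_*\le r$: this relies essentially on the bound $L_m\le 2^{n_m}$ in \cref{lem:lowfidelitytoprobability}, without which one would only recover the weaker supremum over all $x\ge 0$.
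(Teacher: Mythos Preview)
Your approach is essentially the same as the paper's: \cref{lem:lowprobabilitytofidelity} for $\ge$ and \cref{lem:lowfidelitytoprobability} for $\le$, followed by the same asymptotic analysis of $\log(P_mL_m)$ and the crucial observation that $L_m\le 2^{n_m}$ forces $x_m\le y_m+o(1)\le r+o(1)$. Two small points are worth flagging. First, there is a sign slip: since $x_m=-\tfrac{1}{m}\log P_m$, the inequality coming from $\sqrt{P_mL_m}\ge(\sqrt{2^{n_m}(1-\epsilon_m)}-1)/(n_m\ln 2)$ reads $(k_m/m)-x_m\ge(n_m/m)-y_m+o(1)$, not $(k_m/m)+x_m$; your subsequent deductions ($x_m\le y_m+o(1)$ and $n_m/m\le k_m/m+y_m-x_m+o(1)$) are the correct ones for the corrected sign, so this is clearly just a typo. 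Second, the appeal to ``continuity of $R^*(\psi\to\EPR_{AB},\cdot)$ as a consequence of concavity'' is only guaranteed on $(0,\infty)$; at $x_*=0$ concavity does not by itself give right-continuity. The paper sidesteps this by noting that whenever $x_m\in[0,r)$ the value $R^*(\psi\to\EPR_{AB},x_m)+r-x_m$ is already a feasible point in the supremum, so no limit (and hence no continuity) is needed there; continuity is invoked only at $x_*=r>0$. Your argument is easily repaired the same way, or alternatively by observing that if $x_m\to 0$ then the sequence $(P_m,k_m)$ itself witnesses $K\le R^*(\psi\to\EPR_{AB},0)$ directly from the definition.
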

\begin{proof}
We show that the right hand side is a lower bound on the left hand side. Let $0\le x\le r$, and let $(p_m)_{m\in\naturals}$, $(n_m)_{m\in\naturals}$ such that $\lim_{m\to\infty}-\frac{1}{m}\log p_m=x$ and $\lim_{m\to\infty}\frac{n_m}{m}=R^*(\psi\to\EPR_{AB},x)$ and $\ketbra{\psi}{\psi}^{\otimes m}\loccto p_m\ketbra{\EPR_{AB}}{\EPR_{AB}}^{\otimes n_m}$ for all $m$. With $R=R^*(\psi\to\EPR_{AB},x)+r-x$, $\epsilon_m=1-p_m\frac{2^{n_m}}{2^{\lfloor Rm\rfloor}}$, for all $m$ we have
\begin{equation}
\ketbra{\psi}{\psi}^{\otimes m}\loccto[\epsilon_m]\EPR_{AB}^{\otimes\lfloor Rm\rfloor}
\end{equation}
by \cref{lem:lowprobabilitytofidelity}. Since
\begin{equation}
\begin{split}
\lim_{m\to\infty}-\frac{1}{m}\log(1-\epsilon_m)
 & = \lim_{m\to\infty}-\frac{1}{m}\log(p_m\frac{2^{n_m}}{2^{\lfloor Rm\rfloor}})  \\
 & = x-R^*(\psi\to\EPR_{AB},x)+R  \\
 & = x-R^*(\psi\to\EPR_{AB},x)+R^*(\psi\to\EPR_{AB},x)+r-x
   = r
\end{split}
\end{equation}
and $\lfloor Rm\rfloor/m\to R$, we have
\begin{equation}
R^*_F(\psi\to\EPR_{AB},r)\ge R=R^*(\psi\to\EPR_{AB},x)+r-x.
\end{equation}

We show that the left hand side is a lower bound on the right hand side. Let $0<r$, $(\epsilon_m)_{m\in\naturals}$ and $(n_m)_{m\in\naturals}$ such that $\ketbra{\psi}{\psi}^{\otimes m}\loccto[\epsilon_m]\EPR^{\otimes n_m}$ holds for all $m$, $\lim_{m\to\infty}-\frac{1}{m}(1-\epsilon_m)=r$ and $\lim_{m\to\infty}\frac{n_m}{m}=R^*_F(\psi\to\EPR_{AB},r)$. By the first part, $R^*_F(\psi\to\EPR_{AB},r)\ge r$ holds. If this is an equality, then with $x=0$ the inequality $R^*_F(\psi\to\EPR_{AB},r)\le R^*(\psi\to\EPR_{AB},x)+r-x$ holds. We may therefore assume that $R^*_F(\psi\to\EPR_{AB},r)>r$.

By \cref{lem:lowfidelitytoprobability}, for all $m$ there exist $L_m\le 2^{n_m}$ and $P_m\in[0,1]$ such that $\ketbra{\psi}{\psi}^{\otimes m}\loccto P_m\ketbra{\phi_{L_m}}{\phi_{L_m}}\loccto P_m\EPR_{AB}^{\otimes\lfloor \log L_m\rfloor}$ and the inequality
\begin{equation}
\sqrt{P_mL_m}\ge\frac{\sqrt{2^{n_m}(1-\epsilon_m)}-1}{(\ln 2)n_m}
\end{equation}
holds. By \cref{prop:suprates}, with $x_m=-\frac{1}{m}\log P_m$ we have for all $m$
\begin{equation}
R^*(\psi\to\EPR_{AB},x_m)\ge\frac{1}{m}\lfloor\log L_m\rfloor.
\end{equation}
It follows that
\begin{equation}
\begin{split}
\liminf_{m\to\infty}R_F(\psi\to\EPR_{AB},x_m)-x_m
 & \ge \liminf_{m\to\infty}\frac{1}{m}\lfloor\log L_m\rfloor-x_m  \\
 & = \liminf_{m\to\infty}\frac{1}{m}\log P_mL_m  \\
 & = \liminf_{m\to\infty}\frac{1}{m}\log\left(\frac{\sqrt{2^{n_m}(1-\epsilon_m)}-1}{(\ln 2)n_m}\right)^2  \\
 & = R^*_F(\psi\to\EPR_{AB},r)-r.
\end{split}
\end{equation}
If the sequence $x_m$ has an accumulation point in $[0,r)$, then this implies that $R^*_F(\psi\to\EPR_{AB},r)\le\sup_{0\le x< r}\left[R^*(\psi\to\EPR_{AB},x)+r-x\right]$. Otherwise note that
\begin{equation}
\begin{split}
P_m=\frac{P_mL_m}{L_m}\ge\frac{P_mL_m}{2^{n_m}}\ge\left(\frac{\sqrt{2^{n_m}(1-\epsilon_m)}-1}{(\ln 2)n_m}\right)^2,
\end{split}
\end{equation}
therefore
\begin{equation}
\limsup_{m\to\infty}x_m\le r,
\end{equation}
which implies that this is a limit. $R^*(\psi\to\EPR_{AB},\cdot)$ is concave and therefore continuous in $\positivereals$, therefore in this case we obtain $R^*_F(\psi\to\EPR_{AB},r)\le\left[R^*(\psi\to\EPR_{AB},r)+r-r\right]$.
\end{proof}

\subsection{Types}

An $n$-type on a finite set $\mathcal{X}$ is a probability distribution that assigns probabilities that are multiples of $\frac{1}{n}$ to each element. The set of $n$-types will be denoted by $\distributions[n](\mathcal{X})$. For $P\in\distributions[n](\mathcal{X})$, the type class $\typeclass{n}{P}$ is the set of strings in $\mathcal{X}^n$ in which each symbol $x$ occurs $nP(x)$ times. If $\ket{x}_{x\in\mathcal{X}}$ is a preferred basis in a Hilbert space $\mathcal{H}$, then the type class projector is
\begin{equation}
\typeclassprojector{n}{P}\sum_{(x_1,\dots,x_n)\in\typeclass{n}{P}}\ketbra{x_1}{x_1}\otimes\dots\otimes\ketbra{x_n}{x_n}\in\boundeds(\mathcal{H}^{\otimes n}).
\end{equation}

For an $n$-type $P$ on a product set $\mathcal{X}\times\mathcal{Y}$ its marginals $P_X(x)=\sum_{y\in\mathcal{Y}}P(x,y)$ and $P_Y(y)=\sum_{x\in\mathcal{X}}P(x,y)$ are $n$-types as well, and the relation $\typeclass{n}{P}\subseteq\typeclass{n}{P_X}\times\typeclass{n}{P_Y}$ holds (under the canonical identification $(\mathcal{X}\times\mathcal{Y})^n\simeq\mathcal{X}^n\times\mathcal{Y}^n$), and accordingly $\typeclassprojector{n}{P}\le\typeclassprojector{n}{P_X}\otimes\typeclassprojector{n}{P_Y}$.

The number of $n$-types over $\mathcal{X}$ is polynomially bounded as $\lvert\distributions[n](\mathcal{X})\rvert\le(x+1)^{\lvert\mathcal{X}\rvert}$. The cardinalities of the type classes satisfy
\begin{equation}
\frac{1}{(n+1)^{\lvert\mathcal{X}\rvert}}2^{n\entropy(Q)}\le\lvert\typeclass{n}{Q}\rvert\le 2^{n\entropy(Q)}.
\end{equation}
More generally, if $P$ is a measure on $\mathcal{X}$, then
\begin{equation}
\frac{1}{(n+1)^{\lvert\mathcal{X}\rvert}}2^{-n\relativeentropy{Q}{P}}\le P^{\otimes n}(\typeclass{n}{Q})\le 2^{-n\relativeentropy{Q}{P}}.
\end{equation}

Let $\partitions[n]$ denote the set of partitions of the integer $n$, identified with nonincreasing nonnegative integer sequences, with traing zeros added or removed whenever convenient. By the Schur--Weyl decomposition, for any finite-dimensional Hilbert space $\mathcal{H}$ and $n\in\naturals$ the space $\mathcal{H}^{\otimes n}$ is isomorphic to
\begin{equation}
\mathcal{H}^{\otimes n}\simeq\bigoplus_{\lambda\in\partitions[n]}\mathbb{S}_\lambda(\mathcal{H})\otimes[\lambda]
\end{equation}
as representations of $U(\mathcal{H})\times S_n$, where $\mathbb{S}_\lambda(\mathcal{H})$ is an irreducible representation of the unitary group $U(\mathcal{H})$ if $\lambda$ has at most $\dim\mathcal{H}$ nonzero entries and the zero representation otherwise, while $[\lambda]$ is an irreducible representation of the symmetric group $S_n$. We have the dimension estimates
\cite[eqs. (6.16) and (6.21)]{hayashi2017group} (see also \cite{hayashi2002exponents,christandl2006spectra,harrow2005applications})
\begin{equation}\label{eq:unitarydimensionbound}
\dim\mathbb{S}_\lambda(\mathcal{H})\le(n+1)^{d(d-1)/2}
\end{equation}
\begin{equation}\label{eq:symmetricdimensionbound}
\frac{1}{(n+d)^{(d+2)(d-1)/2}}2^{n\entropy(\lambda/n)}\le \dim[\lambda]\le 2^{n\entropy(\lambda/n)}.
\end{equation}
We denote the projection onto the subspace corresponding to $\lambda$ by $P^{\mathcal{H}}_\lambda$ or by $P_\lambda$ when the Hilbert space is understood.

A key ingredient in the construction of the entanglement measures $E^{\alpha,\theta}$ in \cite{vrana2023family} is the large deviation rate function for the simultaneous empirical Young diagram measurement. For this we introduce simplified notation for tuples of partitions and normalized versions thereof. With $\overline{\partitions}$ we denote the set of nonincreasing nonnegative real-valued finite sequences that summing to $1$ (again, up to adding or removing trailing zeros). A useful way to view these is as (limits of) normalized partitions: if $\lambda\in\partitions[n]$, then $\frac{1}{n}\lambda\in\overline{\partitions}$.

A finite sequence $x=(x_1,x_2,\dots,x_d)$ majorizes $y=(y_1,y_2,\dots,y_d)$ (denoted $x\succcurlyeq y$), both assumed to be ordered nonincreasingly, if
\begin{equation}
\sum_{i=1}^m x_i\ge\sum_{i=1}^m y_i
\end{equation}
for $m=1,\dots,d$, with equality for $m=d$. In the context of integer partitions, this relation is usually known as the dominance order. By applying to the decreasingly ordered and padded versions, we extend majorization to arbitrary finite sequences of real numbers. A basic property is that if $P\in\distributions(\mathcal{X})$ and $Q\in\distributions(\mathcal{Y})$, then $P\preccurlyeq Q$ implies $\entropy(P)\ge\entropy(Q)$.

In what follows, $\lambda$ will denote a \emph{triple} of partitions $\lambda=(\lambda_A,\lambda_B,\lambda_C)$ of the same integer $n$, and likewise $\overline{\lambda}=(\overline{\lambda}_A,\overline{\lambda}_B,\overline{\lambda}_C)\in\overline{\partitions}^3$. We write $\ball{\epsilon}{\overline{\lambda}}$ for the $\epsilon$-ball around $\overline{\lambda}$. The precise distance measure is not particularly important, but for concreteness we take the maximum of the $\ell^1$ norms, i.e. $\overline{\mu}\in\ball{\epsilon}{\overline{\lambda}}$ iff $\max\{\norm[1]{\overline{\mu}_A-\overline{\lambda}_A},\norm[1]{\overline{\mu}_B-\overline{\lambda}_B},\norm[1]{\overline{\mu}_C-\overline{\lambda}_C}\}<\epsilon$.

Triples of partitions are used for indexing products of Schur--Weyl projections. For $\lambda=(\lambda_A,\lambda_B,\lambda_C)\in\partitions[n]^2$ we set $P_\lambda=P^{\mathcal{H}_A}_{\lambda_A}\otimes P^{\mathcal{H}_B}_{\lambda_B}\otimes P^{\mathcal{H}_C}_{\lambda_C}$, which acts on $(\mathcal{H}_A\otimes\mathcal{H}_B\otimes\mathcal{H}_C)^{\otimes n}$. With these notations, for a state vector $\psi\in\mathcal{H}_A\otimes\mathcal{H}_B\otimes\mathcal{H}_C$ and $\overline{\lambda}\in\overline{\partitions}^3$ we define
\begin{equation}
I_\psi(\overline{\lambda})=\lim_{\epsilon\to 0}\lim_{n\to\infty}-\frac{1}{n}\log\sum_{\substack{\lambda\in\partitions[n]^3  \\  \frac{1}{\lambda}\in\ball{\epsilon}{\overline{\lambda}}}}\norm{P_\lambda\psi^{\otimes n}}^2.
\end{equation}

The entanglement measures $E^{\alpha,\theta}$ for $\alpha\in(0,1)$ and $\theta\in\distributions(\{A,B,C\})$ are defined as
\begin{equation}
E^{\alpha,\theta}(\psi)=\sup_{\overline{\lambda}\in\overline{\partitions}^3}\left[\theta(A)\entropy(\overline{\lambda}_A)+\theta(B)\entropy(\overline{\lambda}_B)+\theta(C)\entropy(\overline{\lambda}_C)-\frac{\alpha}{1-\alpha}I_\psi(\overline{\lambda})\right].
\end{equation}

\section{Deterministic transformations}\label{sec:deterministic}

We start by recalling a tail bound for rectangular matrix Rademacher series.
\begin{proposition}[{\cite[Corollary 4.2]{tropp2012user}}]\label{prop:Rademachertailbound}
Let $Z_1,\dots,Z_d$ be $d_1\times d_2$ matrices and $\gamma_1,\gamma_2,\dots,\gamma_d$ independent Rademacher random variables (i.e., uniformly distributed on $\{-1,1\}$). With
\begin{equation}
\sigma^2:=\max\left\{\norm[\infty]{\sum_{k=1}^d Z_kZ_k^*},\norm[\infty]{\sum_{k=1}^d Z_k^*Z_k}\right\}
\end{equation}
the inequality
\begin{equation}
\probability\left(\norm[\infty]{\sum_{k=1}^d \gamma_kZ_k}\ge t\right)\le(d_1+d_2)e^{-\frac{1}{2}\frac{t^2}{\sigma^2}}
\end{equation}
holds for all $t\ge 0$.
\end{proposition}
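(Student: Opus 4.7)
The plan is to combine the matrix Laplace transform method with the Hermitian dilation trick, reducing the rectangular statement to a subgaussian tail bound for a self-adjoint Rademacher series. For a $d_1\times d_2$ matrix $Z$, let $\mathcal{D}(Z)$ be the $(d_1+d_2)\times(d_1+d_2)$ Hermitian matrix with $Z$ in the upper-right block and $Z^*$ in the lower-left block. One verifies $\norm[\infty]{\mathcal{D}(Z)}=\norm[\infty]{Z}$ and that $\mathcal{D}(Z)^2$ is block-diagonal with blocks $ZZ^*$ and $Z^*Z$. Setting $A_k=\mathcal{D}(Z_k)$, linearity gives $\sum_k\gamma_k\mathcal{D}(Z_k)=\mathcal{D}\bigl(\sum_k\gamma_k Z_k\bigr)$, and $\norm[\infty]{\sum_k A_k^2}=\sigma^2$. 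Thus it suffices to prove that for self-adjoint $A_1,\dots,A_d$ on a $D$-dimensional space with $D=d_1+d_2$, one has $\probability(\norm[\infty]{\sum_k\gamma_k A_k}\ge t)\le D\exp(-t^2/(2\sigma^2))$.

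Next I would apply the standard matrix Chernoff bound: for any Hermitian random matrix $Y$ and $\theta>0$, Markov's inequality applied to $\Tr e^{\theta Y}$ yields $\probability(\lambda_{\max}(Y)\ge t)\le e^{-\theta t}\mean\Tr e^{\theta Y}$. Since the Rademacher distribution is symmetric, $\sum_k\gamma_k A_k$ has the same law as its negative, so a single application of the bound controls both tails with the single prefactor $D$ (rather than $2D$). What remains is to estimate the matrix moment generating function $\mean\Tr\exp(\theta\sum_k\gamma_k A_k)$.

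The heart of the argument, and the step I expect to be the main obstacle, is the matrix Hoeffding-type inequality
\begin{equation*}
\mean\Tr\exp\Bigl(\theta\sum_k\gamma_k A_k\Bigr)\le\Tr\exp\Bigl(\tfrac{\theta^2}{2}\sum_k A_k^2\Bigr).
\end{equation*}
For a single Rademacher, $\mean_\gamma e^{\theta\gamma A}=\cosh(\theta A)\preceq e^{\theta^2 A^2/2}$ by functional calculus and the scalar inequality $\cosh x\le e^{x^2/2}$. Peeling the remaining expectations one at a time without breaking the traced-exponential structure is exactly where Lieb's concavity theorem (concavity of $X\mapsto\Tr\exp(H+\log X)$ on the positive cone) is essential; this gives the subadditivity of matrix cumulant generating functions underlying Tropp's framework.

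Combining the MGF estimate with $\Tr\exp(\tfrac{\theta^2}{2}\sum_k A_k^2)\le D\exp(\tfrac{\theta^2\sigma^2}{2})$ and optimizing at $\theta=t/\sigma^2$ produces the factor $(d_1+d_2)e^{-t^2/(2\sigma^2)}$, completing the proof. Everything outside the Lieb-based MGF bound is essentially a direct computation, so the conceptual difficulty is concentrated entirely in that one inequality.
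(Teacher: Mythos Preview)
The paper does not prove this proposition; it is simply quoted from Tropp's survey as \cite[Corollary 4.2]{tropp2012user} and used as a black box. Your outline is essentially Tropp's own proof: Hermitian dilation reduces the rectangular case to a self-adjoint Rademacher series, then the matrix Laplace transform method together with Lieb's concavity (giving the subadditivity of matrix cumulant generating functions) yields the subgaussian tail. So there is nothing to compare against in the present paper, and your sketch is the standard argument.

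One small imprecision: your justification for the prefactor $D=d_1+d_2$ rather than $2D$ is not quite right. Symmetry of the Rademacher distribution only tells you that $\probability(\lambda_{\max}(Y)\ge t)=\probability(-\lambda_{\min}(Y)\ge t)$; a union bound over the two tails would then give $2D$, not $D$. The correct reason is specific to the dilation: the spectrum of $\mathcal{D}(W)$ consists of $\pm\sigma_i(W)$, so $\lambda_{\max}\bigl(\mathcal{D}(W)\bigr)=\norm[\infty]{W}$ already, and one only needs the upper tail of $\lambda_{\max}$. With that correction the argument is complete.
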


\begin{proposition}\label{prop:goodvector}
Let $\ket{\psi}\in\mathcal{H}_A\otimes\mathcal{H}_B\otimes\mathcal{H}_C$ be a pure state vector, and let $\{\ket{k}\}_{k\in[d_C]}$ be a basis that diagonalizes $\Tr_{AB}\ketbra{\psi}{\psi}$. Let $\gamma_1,\gamma_2,\dots,\gamma_{d_C}$ be independent Rademacher random variables and $\ket{v}=\sum_{k=1}^{d_C}\gamma_k\ket{k}$, and consider the random vector $\ket{\phi}=(I\otimes I\otimes\bra{v})\ket{\psi}\in\mathcal{H}_A\otimes\mathcal{H}_B$. Then $\norm{\phi}=1$ almost surely, and
\begin{equation}
\probability\left(E_\infty(\phi)\le h\right)\le(d_A+d_B)e^{-\frac{1}{2}2^{\min\{\entropy_\infty(A)_\psi,\entropy_\infty(B)_\psi\}-h}}.
\end{equation}
\end{proposition}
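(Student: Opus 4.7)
The plan is to apply \cref{prop:Rademachertailbound} to the matrix representation of $\ket{\phi}$ viewed as an element of $\mathcal{H}_A\otimes\mathcal{H}_B$, once the Schmidt decomposition across the $AB{:}C$ bipartition has been used to reveal the Rademacher-sum structure.

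First, since $\{\ket{k}\}_{k\in[d_C]}$ diagonalizes $\Tr_{AB}\ketbra{\psi}{\psi}$, the state $\ket{\psi}$ admits a Schmidt decomposition
\begin{equation*}
\ket{\psi}=\sum_{k=1}^{d_C}\sqrt{q_k}\ket{\tilde\alpha_k}_{AB}\otimes\ket{k}_C
\end{equation*}
with $\braket{\tilde\alpha_k}{\tilde\alpha_l}=\delta_{kl}$ and $q_k\ge 0$, $\sum_k q_k=1$. Substituting into the definition of $\ket{\phi}$ gives $\ket{\phi}=\sum_k \gamma_k\sqrt{q_k}\ket{\tilde\alpha_k}_{AB}$, whose squared norm is $\sum_k q_k=1$ for every choice of signs, proving $\norm{\phi}=1$ almost surely.

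Next I would encode each $\ket{\tilde\alpha_k}=\sum_{ij}(A_k)_{ij}\ket{i}_A\otimes\ket{j}_B$ by a matrix $A_k$ and set $Z_k=\sqrt{q_k}A_k$, so that the matrix $\Phi$ of $\ket{\phi}$ is the Rademacher series $\Phi=\sum_{k=1}^{d_C}\gamma_k Z_k$. The key identifications are
\begin{equation*}
\sum_{k=1}^{d_C}Z_k Z_k^*=\sum_{k=1}^{d_C}q_k\,\Tr_B\ketbra{\tilde\alpha_k}{\tilde\alpha_k}=\Tr_B\Tr_C\ketbra{\psi}{\psi}=\rho_A,
\end{equation*}
and analogously $\sum_k Z_k^*Z_k$ coincides with $\rho_B$ (up to a transpose that is irrelevant for the operator norm). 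Consequently
\begin{equation*}
\sigma^2=\max\bigl\{\norm[\infty]{\rho_A},\norm[\infty]{\rho_B}\bigr\}=2^{-\min\{\entropy_\infty(A)_\psi,\entropy_\infty(B)_\psi\}}.
\end{equation*}

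Finally I would translate the event $\{E_\infty(\phi)\le h\}$ into a statement about $\norm[\infty]{\Phi}$: the largest Schmidt coefficient of a normalized bipartite vector equals the operator norm of its coefficient matrix, so $E_\infty(\phi)\le h$ is equivalent to $\norm[\infty]{\Phi}\ge 2^{-h/2}$. Applying \cref{prop:Rademachertailbound} with $t=2^{-h/2}$ then gives exactly the claimed bound. There is no real obstacle beyond bookkeeping; the only point requiring mild care is the identification of the matrix sums $\sum_k Z_kZ_k^*$ and $\sum_k Z_k^*Z_k$ with the $A$- and $B$-marginals of $\psi$, which follows because the orthonormal Schmidt vectors $\ket{\tilde\alpha_k}$ weighted by $q_k$ reconstruct $\rho_{AB}=\Tr_C\ketbra{\psi}{\psi}$.
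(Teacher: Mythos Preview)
Your proof is correct and follows essentially the same approach as the paper: identify the matrix of $\ket{\phi}$ as a Rademacher series in the ``slices'' of $\ket{\psi}$, compute $\sigma^2$ as the maximum of the operator norms of the $A$- and $B$-marginals of $\psi$, and apply \cref{prop:Rademachertailbound} with $t=2^{-h/2}$. The only cosmetic difference is that you pass through the $AB{:}C$ Schmidt decomposition to define the slices $Z_k=\sqrt{q_k}A_k$, whereas the paper writes them directly as $(Z_k)_{ij}=(\bra{i}\otimes\bra{j}\otimes\bra{k})\ket{\psi}$; these are the same matrices, and the remaining computations coincide.
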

\begin{proof}
The norm can be written as
\begin{equation}
\begin{split}
\norm{\phi}^2
 & = \braket{\phi}{\phi}  \\
 & = \bra{\psi}(I\otimes I\otimes\ketbra{v}{v})\ket{\psi}  \\
 & = \Tr\ketbra{v}{v}\Tr_{AB}\ketbra{\psi}{\psi}  \\
 & = \sum_{k=1}^d\lvert\gamma_k\rvert^2\bra{k}\left(\Tr_{AB}\ketbra{\psi}{\psi}\right)\ket{k},
\end{split}
\end{equation}
since $\Tr_{AB}\ketbra{\psi}{\psi}$ is diagonal in this basis by assumption. As the distribution of $\gamma_i$ is concentrated at $\pm 1$, the sum is equal to $\Tr\Tr_{AB}\ketbra{\psi}{\psi}=1$ with probability $1$.

Choose arbitrary orthonormal bases in $\mathcal{H}_A$ and $\mathcal{H}_B$, and consider the $d_A\times d_B$ matrices $(Z_k)_{ij}:=(\bra{i}\otimes\bra{j}\otimes\bra{k})\ket{\psi}$ and $Z_{ij}=(\bra{i}\otimes\bra{j})\ket{\phi}$, i.e., $Z=\sum_{i=k}^{d_C}\gamma_k Z_k$. The matrices $Z_1,\dots,Z_{d_C}$ may be pictured as the ``slices'' of $\ket{\psi}$ and $Z$ is a matrix representation of the pure bipartite state vector $\ket{\phi}$. Then
\begin{equation}
\begin{split}
\Tr_B\ketbra{\phi}{\phi}
 & = \sum_{i,i'=1}^{d_A}\sum_{j=1}^{d_B}(\ketbra{i}{i}\otimes\bra{j})\ketbra{\phi}{\phi}(\ketbra{i'}{i'}\otimes\ket{k})  \\
 & = \sum_{i,i'=1}^{d_A}\sum_{j=1}^{d_B}Z_{ij}\overline{Z_{ji'}}\ketbra{i}{i'}  \\
 & = \sum_{i,i'=1}^{d_A}(ZZ^*)_{ii'}\ketbra{i}{i'},
\end{split}
\end{equation}
therefore
\begin{equation}
\norm[\infty]{\Tr_B\ketbra{\phi}{\phi}}=\norm[\infty]{ZZ^*}=\norm[\infty]{Z}^2=\norm[\infty]{\sum_{k=1}^{d_C}\gamma_k Z_k}^2.
\end{equation}
In a similar way,
\begin{equation}
\begin{split}
\Tr_{BC}\ketbra{\psi}{\psi}
 & = \sum_{i,i'=1}^{d_A}\sum_{j=1}^{d_B}\sum_{k=1}^{d_C}(\ketbra{i}{i}\otimes\bra{j}\otimes\bra{k})\ketbra{\psi}{\psi}(\ketbra{i'}{i'}\otimes\ket{j}\otimes\ket{k})  \\
 & = \sum_{i,i'=1}^{d_A}\sum_{j=1}^{d_B}\sum_{k=1}^{d_C} (Z_k)_{ij}\overline{(Z_k)_{i'j}}\ketbra{i}{i'}  \\
 & = \sum_{k=1}^{d_C}\sum_{i,i'=1}^{d_A}(Z_kZ_k^*)_{ii'}\ketbra{i}{i'}
\end{split}
\end{equation}
and
\begin{equation}
\begin{split}
\Tr_{AC}\ketbra{\psi}{\psi}
 & = \sum_{i=1}^{d_A}\sum_{j,j'=1}^{d_B}\sum_{k=1}^{d_C}(\bra{i}\otimes\ketbra{j}{j}\otimes\bra{k})\ketbra{\psi}{\psi}(\ket{i}\otimes\ketbra{j'}{j'}\otimes\ket{k})  \\
 & = \sum_{i=1}^{d_A}\sum_{j,j'=1}^{d_B}\sum_{k=1}^{d_C} (Z_k)_{ij}\overline{(Z_k)_{ij'}}\ketbra{i}{j'}  \\
 & = \sum_{k=1}^{d_C}\sum_{j,j'=1}^{d_A}\overline{(Z_k^*Z_k)_{jj'}}\ketbra{j}{j'},
\end{split}
\end{equation}
therefore
\begin{align}
\norm[\infty]{\Tr_{BC}\ketbra{\psi}{\psi}} & = \norm[\infty]{\sum_{k=1}^{d_C}Z_kZ_k^*}  \\
\norm[\infty]{\Tr_{AC}\ketbra{\psi}{\psi}} & = \norm[\infty]{\sum_{k=1}^{d_C}Z_k^*Z_k}.
\end{align}

In terms of entropies,
\begin{equation}
\begin{split}
\sigma^2
 & := \max\left\{\norm[\infty]{\sum_{k=1}^d Z_kZ_k^*},\norm[\infty]{\sum_{k=1}^d Z_k^*Z_k}\right\}  \\
 & = \max\left\{2^{-\entropy_\infty(A)_\psi},2^{-\entropy_\infty(B)_\psi}\right\}  \\
 & = 2^{-\min\{\entropy_\infty(A)_\psi,\entropy_\infty(B)_\psi\}},
\end{split}
\end{equation}
and $E_\infty(\phi)\le h$ is equivalent to
\begin{equation}
2^{-h}\le\norm[\infty]{\Tr_B\ketbra{\phi}{\phi}}=\norm[\infty]{\sum_{k=1}^{d_C}\gamma_k Z_k}^2
.\end{equation}
From \cref{prop:Rademachertailbound} we conclude
\begin{equation}
\begin{split}
\probability\left(E_\infty(\phi)\le h\right)
 & = \probability\left(\norm[\infty]{\sum_{k=1}^{d_C}\gamma_k Z_k}\ge 2^{-h/2}\right)  \\
 & \le (d_A+d_B)e^{-\frac{1}{2}\frac{2^{-h}}{\sigma^2}}  \\
 & = (d_A+d_B)e^{-\frac{1}{2}2^{\min\{\entropy_\infty(A)_\psi,\entropy_\infty(B)_\psi\}-h}}.
\end{split}
\end{equation}
\end{proof}

In particular, \cref{prop:goodvector} implies that there exists a vector $\ket{v}$ such that the local projection of $\ket{\psi}$ onto the subspace spanned by $\ket{v}$ is highly entangled: any choice with $h<\min\{\entropy_\infty(A)_\psi,\entropy_\infty(B)_\psi\}-\log\left(2\ln(d_A+d_B)\right)$ ensures
\begin{equation}
(d_A+d_B)e^{-\frac{1}{2}2^{\min\{\entropy_\infty(A)_\psi,\entropy_\infty(B)_\psi\}-h}}<1.
\end{equation}
To construct a rank-$1$ POVM with the property that every post-measurement state has a high entropy, we need to find many such vectors, however. More precisely, given a family of vectors $(\ket{v_i})_{i\in I}$ in $\mathcal{H}_C$, suitable multiples of the rank-$1$ operators $\ketbra{v_i}{v_i}$ form a POVM if and only if the cone that they generate contains $I$.
\begin{remark}
If $\ket{\psi}$ is sufficiently symmetric, then finding a single vector $\ket{v}$ is sufficient. More precisely, if a group $G$ acts unitarily on $\mathcal{H}_A$, $\mathcal{H}_B$ and $\mathcal{H}_C$ and $\ketbra{\psi}{\psi}$ is $G$-invariant, then the orbit of $\ket{v}$ gives rise to states on $AB$ with equal min-entropy of entanglement. If in addition the group is compact and the representation on $\mathcal{H}_C$ is irreducible, then the Haar average of the orbit of $\ketbra{v}{v}$ is a multiple of $I$.
\end{remark}

In the absence of symmetries, a possible strategy is to sample several vectors independently from the distribution described in \cref{prop:goodvector}. Since the uniform convex combination of \emph{all} rank-$1$ matrices with $\pm 1$ entries is $I$, by sampling sufficiently many vectors $\ket{v_1},\ket{v_2},\dots,\ket{v_n}$, we can ensure that with high probability the identity is in the convex hull of the operators $\ketbra{v_1}{v_1},\dots,\ketbra{v_n}{v_n}$. On the other hand, sampling too many vectors increases the probability that at least one of them leads to a measurement outcome with low entanglement, which is not suitable for deterministic transformations. For this reason, we need a good estimate on the number of samples needed to form a POVM. As a crude lower bound, we clearly need at least $d_C$ vectors, otherwise there are no full-rank operators in the cone.

To find an upper bound, we will use a recent result of Hayakawa, Lyons and Oberhauser \cite{hayakawa2023estimating}. In a general setting, $n$ vectors $X_1,\dots,X_n$ are drawn from a probability distribution $\mu$ on $\reals^m$. Given a point $\theta\in\reals^m$, one defines $N_\mu(\theta)$ as the smallest $n$ such that $\probability(\theta\in\convexhull\{X_1,\dots,X_n\})\ge\frac{1}{2}$. To state the bound on $N_\mu(\theta)$, we need the following definition, proposed by Tukey \cite{tukey1974mathematics} for finite sets of points and later generalized to probability distributions and extensively studied in \cite{rousseeuw1999depth}.
\begin{definition}
Let $\mu$ be a probability measure on $\reals^m$ and $\theta\in\reals^m$. The halfspace depth of $\theta$ is
\begin{equation}
\alpha_\mu(\theta)=\inf_{c\in\reals^m\setminus\{0\}}\mu(\setbuild{x\in\reals^m}{\langle c,x-\theta\rangle\le 0}).
\end{equation}
\end{definition}
\begin{theorem}[{\cite[Theorem 16]{hayakawa2023estimating}}]\label{thm:depthbound}
For every probability distribution $\mu$ on $\reals^m$ and point $\theta\in\reals^m$ we have
\begin{equation}
N_\mu(\theta)\le\left\lceil\frac{3m}{\alpha_\mu(\theta)}\right\rceil.
\end{equation}
\end{theorem}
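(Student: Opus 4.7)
The plan is to show that when $n \ge \lceil 3m/\alpha_\mu(\theta)\rceil$ i.i.d.\ samples $X_1,\dots,X_n$ are drawn from $\mu$, the point $\theta$ lies in $\convexhull\{X_1,\dots,X_n\}$ with probability at least $\tfrac12$. Write $\alpha=\alpha_\mu(\theta)$. The natural starting point is the separating hyperplane theorem: $\theta\notin\convexhull\{X_1,\dots,X_n\}$ if and only if there exists a nonzero $c\in\reals^m$ with $\langle c,X_i-\theta\rangle>0$ for every $i$. The definition of halfspace depth gives, for \emph{any fixed} $c$, that a single sample lands in the strict open halfspace on the wrong side of $\theta$ with probability at most $1-\alpha$, so by independence the probability that a fixed direction is uniformly ``bad'' is at most $(1-\alpha)^n\le e^{-n\alpha}$. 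This is already exponentially small in the regime of interest; the real work is uniformizing over the continuum of candidate directions.

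To discretize, I would restrict to ``combinatorially relevant'' normals: on the failure event, one may take $c$ to be the outward normal to a facet of $\convexhull\{X_1,\dots,X_n\}$ facing $\theta$, and such a facet is supported (generically) by some subset $S\subset[n]$ of cardinality $m$. For each such $S$ the direction $c(S)$ is measurable with respect to $X_S$, so independence of $X_{[n]\setminus S}$ together with the halfspace-depth bound applied to the realized $c(S)$ yields
\begin{equation*}
\probability\bigl(\langle c(S),X_j-\theta\rangle>0\text{ for all } j\notin S\,\big|\,X_S\bigr)\le(1-\alpha)^{n-m},
\end{equation*}
and a union bound over $\binom{n}{m}$ subsets produces the basic inequality $\probability(\theta\notin\convexhull\{X_1,\dots,X_n\})\le\binom{n}{m}(1-\alpha)^{n-m}$. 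A genericity/perturbation step is needed to handle non-generic configurations, but these form a null set under mild assumptions on $\mu$ and can otherwise be absorbed by passing to a limit.

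The hard part is extracting the tame constant $3$. Plugging in $\binom{n}{m}\le(en/m)^m$ and $(1-\alpha)^{n-m}\le e^{-\alpha(n-m)}$ is too lossy: it produces a threshold of order $(m/\alpha)\log(1/\alpha)$, with an extraneous logarithmic factor that cannot be removed by Sauer--Shelah/VC counting alone. The route I would take to remove it is a Wendel-style symmetrization, packaging the failure event as a sum over which of the samples lie in a random pair of opposite halfspaces; this yields a closed-form expression (or a sharp bound) for the failure probability as a partial binomial sum in $\alpha$ and $1-\alpha$, from which a direct estimate gives the threshold $\lceil 3m/\alpha\rceil$ after monotonicity checks. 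The main obstacle is thus not the one-shot concentration of a single halfspace but the combinatorial bookkeeping that converts per-direction exponential tails into a uniform bound with the optimal \emph{linear} dependence on $m/\alpha$, and it is here that the argument of Hayakawa--Lyons--Oberhauser does the essential work.
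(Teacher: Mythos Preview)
The paper does not prove this statement at all: \cref{thm:depthbound} is quoted verbatim from \cite[Theorem~16]{hayakawa2023estimating} and used as a black box in the proof of \cref{thm:oneshotdeterministic}. So there is no ``paper's own proof'' to compare against; the only question is whether your sketch stands on its own.

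It does not. You correctly identify that the separating-hyperplane reduction together with the facet-normal discretization and a union bound over $\binom{n}{m}$ subsets yields only
\[
\probability\bigl(\theta\notin\convexhull\{X_1,\dots,X_n\}\bigr)\le\binom{n}{m}(1-\alpha)^{n-m},
\]
which forces $n$ of order $(m/\alpha)\log(1/\alpha)$ rather than $m/\alpha$. You then propose to remove the logarithm via a ``Wendel-style symmetrization'' producing a closed-form partial binomial sum, but this step is not carried out and is in fact the entire content of the theorem. Wendel's classical argument relies on the distribution being symmetric about $\theta$ (so that reflecting each $X_i$ independently is measure-preserving), and for a general $\mu$ with halfspace depth $\alpha<\tfrac12$ there is no such symmetry to exploit; one cannot simply pair up halfspaces and sum. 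The actual proof in \cite{hayakawa2023estimating} uses a different mechanism to avoid the union bound altogether, and your sketch gives no indication of what that mechanism is beyond deferring to the reference. As written, the proposal establishes the weaker bound $N_\mu(\theta)=O\bigl((m/\alpha)\log(1/\alpha)\bigr)$ but not the claimed $\lceil 3m/\alpha\rceil$.
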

While computing $\alpha_\mu(\theta)$ does not seem to be easy in general, fortunately in our case basic general properties provide sufficiently good bounds, thanks to the symmetry of the distribution.
\begin{proposition}[{\cite[Proposition 1]{rousseeuw1999depth}}]
For every $\mu$ the function $\theta\mapsto\alpha_\mu(\theta)$ is quasi-concave.
\end{proposition}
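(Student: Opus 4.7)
The plan is to verify the defining inequality of quasi-concavity directly: for $\theta_1,\theta_2\in\reals^m$ and $\lambda\in[0,1]$, writing $\theta_\lambda=\lambda\theta_1+(1-\lambda)\theta_2$, I want to establish
\begin{equation}
\alpha_\mu(\theta_\lambda)\ge\min\{\alpha_\mu(\theta_1),\alpha_\mu(\theta_2)\}.
\end{equation}
Since $\alpha_\mu(\theta_\lambda)$ is itself an infimum over directions $c\in\reals^m\setminus\{0\}$, the natural strategy is to prove the pointwise bound $\mu(\setbuild{x\in\reals^m}{\langle c,x-\theta_\lambda\rangle\le 0})\ge\min\{\alpha_\mu(\theta_1),\alpha_\mu(\theta_2)\}$ for each such $c$ separately, and only then take the infimum in $c$ on the left.

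For the pointwise step, I will use the elementary fact that translating the bounding hyperplane of a closed halfspace in the direction of its outward normal only enlarges the halfspace. For fixed $c\neq 0$ the map $\theta\mapsto\langle c,\theta\rangle$ is affine, so $\langle c,\theta_\lambda\rangle=\lambda\langle c,\theta_1\rangle+(1-\lambda)\langle c,\theta_2\rangle$ lies between $\langle c,\theta_1\rangle$ and $\langle c,\theta_2\rangle$, and in particular at least one of the two indices $i\in\{1,2\}$ satisfies $\langle c,\theta_i\rangle\le\langle c,\theta_\lambda\rangle$. For that index the set inclusion
\begin{equation}
\setbuild{x\in\reals^m}{\langle c,x-\theta_i\rangle\le 0}\subseteq\setbuild{x\in\reals^m}{\langle c,x-\theta_\lambda\rangle\le 0}
\end{equation}
is immediate from $\langle c,x\rangle\le\langle c,\theta_i\rangle\le\langle c,\theta_\lambda\rangle$. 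Taking $\mu$-measure and using monotonicity gives $\mu(\setbuild{x\in\reals^m}{\langle c,x-\theta_\lambda\rangle\le 0})\ge\mu(\setbuild{x\in\reals^m}{\langle c,x-\theta_i\rangle\le 0})\ge\alpha_\mu(\theta_i)\ge\min\{\alpha_\mu(\theta_1),\alpha_\mu(\theta_2)\}$. Finally taking the infimum over $c$ on the left-hand side yields the desired inequality.

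The argument is genuinely short and there is no real obstacle; the only point that deserves care is that the direction $c$ realising (or approximating) the infimum defining $\alpha_\mu(\theta_\lambda)$ may be unrelated to the corresponding near-optimisers at $\theta_1$ and $\theta_2$, but performing the comparison for every direction $c$ before passing to the infimum avoids this entirely. Equivalently, the same computation shows that each superlevel set $\setbuild{\theta\in\reals^m}{\alpha_\mu(\theta)\ge t}$ is convex, which is the other common formulation of quasi-concavity and could be used as an alternative phrasing of the proof.
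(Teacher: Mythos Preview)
Your proof is correct and is the standard direct verification of quasi-concavity for halfspace depth. The paper itself does not prove this proposition; it simply cites it from \cite{rousseeuw1999depth}, so there is no in-paper argument to compare against.
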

\begin{proposition}[{\cite[Proposition 9]{rousseeuw1999depth}}]
For every probability distribution $\mu$ on $\reals^m$ there exists $\theta\in\reals^m$ such that $\alpha_\mu(\theta)\ge\frac{1}{m+1}$.
\end{proposition}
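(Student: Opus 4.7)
The plan is to obtain $\theta$ as a classical centerpoint of $\mu$ via Helly's theorem. For each $\epsilon>0$ let $\mathcal{F}_\epsilon$ denote the family of closed halfspaces $H\subseteq\reals^m$ with $\mu(H)\ge\tfrac{m}{m+1}+\epsilon$; equivalently, their complementary open halfspaces have $\mu$-mass at most $\tfrac{1}{m+1}-\epsilon$. I would first observe that any $m+1$ members $H_1,\dots,H_{m+1}\in\mathcal{F}_\epsilon$ share a point: the union of their complements has total $\mu$-mass at most $(m+1)\bigl(\tfrac{1}{m+1}-\epsilon\bigr)=1-(m+1)\epsilon<1$, hence $\bigcap_i H_i\ne\emptyset$. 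By the standard Helly theorem in $\reals^m$, every finite subfamily of $\mathcal{F}_\epsilon$ then has a common point.

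Next I would pass from finite to infinite intersections by a compactness argument. For every unit vector $u\in\reals^m$, the halfspace $\{x:\langle u,x\rangle\le t\}$ has $\mu$-mass tending to $1$ as $t\to\infty$ and hence lies in $\mathcal{F}_\epsilon$ for all sufficiently large $t$. Choosing $u=\pm e_1,\dots,\pm e_m$ produces $2m$ members of $\mathcal{F}_\epsilon$ whose intersection is a bounded cube $B_\epsilon$. Consequently $K_\epsilon:=\bigcap_{H\in\mathcal{F}_\epsilon}H\subseteq B_\epsilon$, and the family $\{H\cap B_\epsilon\}_{H\in\mathcal{F}_\epsilon}$ consists of compact convex subsets of $B_\epsilon$ with the finite intersection property (from the previous step, applied to any finite sub-collection augmented by the $2m$ halfspaces bounding $B_\epsilon$). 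Hence $K_\epsilon$ is a non-empty compact convex set. Letting $\epsilon\downarrow 0$ yields a nested decreasing family of non-empty compact convex sets, so $K:=\bigcap_{\epsilon>0}K_\epsilon$ is non-empty.

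Finally I would verify that every $\theta\in K$ has $\alpha_\mu(\theta)\ge\tfrac{1}{m+1}$. If for some $c\in\reals^m\setminus\{0\}$ the measure $\mu(\{x:\langle c,x-\theta\rangle\le 0\})$ were strictly less than $\tfrac{1}{m+1}$, then the complementary open halfspace would have $\mu$-mass strictly greater than $\tfrac{m}{m+1}$; by continuity of $\mu$ from below one could choose $s>0$ and $\epsilon>0$ with $\mu(\{x:\langle c,x-\theta\rangle\ge s\})\ge\tfrac{m}{m+1}+\epsilon$, placing this closed halfspace in $\mathcal{F}_\epsilon$ while excluding $\theta$, a contradiction. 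The main subtlety is the finite-to-infinite passage for Helly's theorem, where one must exploit that $\mu$ is a probability measure (so its mass concentrates in arbitrarily large bounded regions) to obtain the compactness needed to close the argument.
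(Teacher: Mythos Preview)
Your argument is correct: this is the classical centerpoint theorem, and the Helly-based proof you outline is the standard one. The finite-intersection step is sound, the compactness reduction via the $2m$ axis-aligned halfspaces is handled cleanly, and the final verification that any $\theta\in K$ has depth at least $\tfrac{1}{m+1}$ via continuity from below is fine.

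There is nothing to compare against in the paper itself: the proposition is quoted from \cite{rousseeuw1999depth} and used as a black box, with no proof supplied. Your write-up therefore goes beyond what the paper does. If anything, one could streamline slightly by working directly with the family $\mathcal{F}_0$ of closed halfspaces of mass $>\tfrac{m}{m+1}$ (so that any $m+1$ complements still have total mass $<1$), which removes the need for the outer limit in $\epsilon$; but your two-parameter version with $\epsilon>0$ is equally valid and arguably makes the compactness step more transparent.
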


We can now prove our one-shot deterministic bound for tripartite to bipartite entanglement transformations.
\begin{theorem}\label{thm:oneshotdeterministic}
Let $\ket{\psi}\in\mathcal{H}_A\otimes\mathcal{H}_B\otimes\mathcal{H}_C$ be a pure state vector. Then $\ketbra{\psi}{\psi}\loccto\EPR_{AB}^{\otimes n}$ where
\begin{equation}
n=\left\lceil \min\{\entropy_\infty(A)_\psi,\entropy_\infty(B)_\psi\}-\log\left(4\ln(2d_C^4(d_A+d_B))\right)
\right\rceil.
\end{equation}
\end{theorem}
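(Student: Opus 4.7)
I would have Charlie perform a rank-$1$ POVM on $\mathcal{H}_C$ whose elements are proportional to projectors $\ketbra{v_i}{v_i}$ of i.i.d.\ Rademacher vectors $\ket{v_1},\dots,\ket{v_N}$ sampled as in \cref{prop:goodvector}, after which Alice and Bob apply Nielsen's theorem on each conditional bipartite branch. The argument is probabilistic: I would fix $N$ and show that with positive probability a single realization simultaneously forms a valid POVM (i.e.\ $I_{\mathcal{H}_C}\in\convexhull\{\ketbra{v_i}{v_i}\}$) and yields post-measurement states $\ket{\phi_i}=(I\otimes I\otimes\bra{v_i})\ket{\psi}$ all of sufficiently high min-entropy of entanglement.

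The crucial ingredient is a lower bound on the halfspace depth $\alpha_\mu(I)$, where $\mu$ is the law of $\ketbra{v}{v}$. I would exploit invariance of $\mu$ under the finite group $G$ generated by diagonal $\pm 1$ unitaries and permutation unitaries on $\mathcal{H}_C$. This invariance passes to $\alpha_\mu$, and combined with its quasi-concavity the upper level set $\{\theta:\alpha_\mu(\theta)\ge\frac{1}{m+1}\}$ --- nonempty by the Rousseeuw--Ruts bound with $m=d_C^2-1$ the dimension of the trace-$d_C$ affine hyperplane supporting $\mu$ --- is a convex $G$-invariant set. Its $G$-average is $G$-fixed and has trace $d_C$; conjugation by $D_\epsilon$ forces diagonality and permutation invariance then forces proportionality to $I$, so the average equals $I$ itself. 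Hence $\alpha_\mu(I)\ge\frac{1}{d_C^2}$, and by \cref{thm:depthbound} a choice $N\le 3d_C^2(d_C^2-1)<3d_C^4$ already ensures $\probability(I\in\convexhull\{\ketbra{v_i}{v_i}\})\ge\tfrac{1}{2}$.

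With this $N$, \cref{prop:goodvector} and a union bound give $\probability(\exists i:E_\infty(\phi_i)\le h)\le N(d_A+d_B)\exp\bigl(-\tfrac{1}{2}\cdot 2^{\min\{\entropy_\infty(A)_\psi,\entropy_\infty(B)_\psi\}-h}\bigr)$, which drops below $\tfrac{1}{2}$ as soon as $h$ is slightly under $\min\{\entropy_\infty(A)_\psi,\entropy_\infty(B)_\psi\}-\log(4\ln(2d_C^4(d_A+d_B)))$. Hence both events occur for some realization; fixing it, I would write $I=\sum_i c_i\ketbra{v_i}{v_i}$ with $c_i\ge 0$, $\sum_i c_i=1$ and let Charlie measure with the POVM $\{c_i\ketbra{v_i}{v_i}\}$. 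Since $\|\phi_i\|=1$ almost surely, outcome $i$ occurs with probability $c_i$ and leaves $AB$ in the pure state $\ket{\phi_i}$; the bound $\lambda_{\max}(\Tr_B\ketbra{\phi_i}{\phi_i})<2^{-h}$ then lets Alice and Bob distill the claimed $n$ EPR pairs deterministically via Nielsen's theorem in each branch.

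The main obstacle I expect is the depth estimate $\alpha_\mu(I)\ge\frac{1}{d_C^2}$: Rousseeuw--Ruts only guarantees \emph{some} deep point, whereas I need the bound precisely at $I$. The symmetrization step above does the job by combining $G$-invariance of $\mu$, quasi-concavity of $\alpha_\mu$, and uniqueness of the $G$-fixed trace-$d_C$ Hermitian operator, but it is the step where the argument is not a direct quote of the cited results. The remaining work --- the union bound over samples and the per-branch Nielsen distillation --- is essentially constant tracking to match the specific logarithmic factor in the statement.
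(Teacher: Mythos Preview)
Your approach is the same as the paper's: build a rank-$1$ POVM on $\mathcal{H}_C$ out of i.i.d.\ Rademacher projectors by combining \cref{prop:goodvector}, the Rousseeuw--Ruts deep-point bound, quasi-concavity of the halfspace depth, and \cref{thm:depthbound}, then apply Nielsen's theorem on each branch.

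The one quantitative point that needs tightening is your choice of ambient dimension $m$. The Rademacher projectors $\ketbra{v}{v}$ are not merely trace-$d_C$ Hermitian: since the entries of $\ket{v}$ are real $\pm 1$, each $\ketbra{v}{v}$ is \emph{real symmetric with every diagonal entry equal to $1$}, so the affine hull of $\support\mu$ has dimension only $m=d_C(d_C-1)/2$. The paper uses this smaller $m$ in both the Rousseeuw--Ruts and \cref{thm:depthbound} steps to get $N_\mu(I)\le d_C^4$; with $N=d_C^4$ samples the union bound at $h=n$ is strictly below $d_C^4(d_A+d_B)\cdot(2d_C^4(d_A+d_B))^{-1}=\tfrac12$, since $n<\min\{\entropy_\infty(A)_\psi,\entropy_\infty(B)_\psi\}-\log(2\ln(2d_C^4(d_A+d_B)))$ by the ceiling. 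With your $m=d_C^2-1$ you only obtain $N<3d_C^4$, and the same computation at $h=n$ gives a bound below $3/2$, which is vacuous. Your workaround of taking $h$ strictly below the threshold does push the probability under $\tfrac12$, but then $E_\infty(\phi_i)>h$ no longer forces $E_\infty(\phi_i)\ge n=\lceil\cdot\rceil$, so the Nielsen step can fall one ebit short. Once you restrict to the correct $d_C(d_C-1)/2$-dimensional slice, the permutation part of your group $G$ is no longer needed (the deep point already has all-$1$ diagonal, and diagonal sign flips alone average it to $I$), and the constants match the statement exactly.
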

\begin{proof}
Let $\mu$ be the distribution of the random vectors (operators) $X_i=\ketbra{v_i}{v_i}$, where the $\ket{v_i}$ are uniformly drawn from $\{1,-1\}^{d_C}\in\complexes^{d_C}\simeq\mathcal{H}_C$, where the identification is through any basis that diagonalizes $\Tr_{AB}\ketbra{\psi}{\psi}$, as in \cref{prop:goodvector}. Since the diagonal of $X_i$ almost surely consists of only $1$ entries, and all the other entries are real, the real dimension of the affine hull of the support of $\mu$ is $m=d_C(d_C-1)/2$. By the above properties, there is a matrix $\theta$ with all-$1$ diagonal such that $\alpha_\mu(\theta)\ge\frac{1}{d_C(d_C-1)/2+1}$. The distribution of $X_i$ is invariant under conjugations by diagonal orthogonal matrices, therefore the same lower bound applies to $\alpha_\mu(O\theta)$ for every diagonal orthogonal matrix $O$. By the quasi-concavity, we have
\begin{equation}
\begin{split}
\frac{1}{d_C(d_C-1)/2+1}
 & \le \min_{O}\alpha_\mu(O\theta)  \\
 & \le \alpha_\mu\left(\frac{1}{2^{d_C}}\sum_{O}O\theta\right)  \\
 & = \alpha_\mu(I).
\end{split}
\end{equation}
From \cref{thm:depthbound} we conclude
\begin{equation}
N_\mu(I)
 \le \left\lceil\frac{3d_C(d_C-1)/2}{\alpha_\mu(I)}\right\rceil
 \le \left\lceil3\frac{d_C(d_C-1)}{2}\left(\frac{d_C(d_C-1)}{2}+1\right)\right\rceil
 \le d_C^4.
\end{equation}

Draw $d_C^4$ vectors $\ket{v_1},\dots,\ket{\smash{v_{d_C^4}}}$ from the distribution described in \cref{prop:goodvector} and let $\ket{\phi_i}=(I\otimes I\otimes\bra{v_i})\ket{\psi}$. By the union bound,
\begin{equation}
\probability(\exists i\in[d_C^4]:E_\infty(\ket{\phi_i})\le h)\le d_C^4(d_A+d_B)e^{-\frac{1}{2}2^{\min\{\entropy_\infty(A)_\psi,\entropy_\infty(B)_\psi\}-h}},
\end{equation}
which is less than $\frac{1}{2}$ if
\begin{equation}
h<\min\{\entropy_\infty(A)_\psi,\entropy_\infty(B)_\psi\}-\log\left(2\ln(2d_C^4(d_A+d_B))\right).
\end{equation}
In particular, this holds for $h=n$.

By the definition of $N_\mu(I)$, with probability at least $1/2$ the identity matrix is in the convex hull of $X_1,\dots,X_{d_C^4}$. By the union bound, there is at least one realization of the random variables such that both conditions are satisfied. If $\sum_{i=1}^{d_C^4}\lambda_iX_i=I$ is such a convex combination, then the operators $E_i:=\lambda_iX_i$ form a POVM on $\mathcal{H}_C$ with the property that the min-entropy of entanglement of every post-measurement state $(I\otimes I\otimes E_i)\ket{\psi}$ is at least $n$. The protocol now consists of $C$ performing the measurement described by the POVM, communicating the result to $A$ and $B$, who can transform $\ket{\psi_i}$ into $n$ EPR pairs by Nielsen's theorem.
\end{proof}

\begin{corollary}\label{cor:deterministicrate}
$\displaystyle R_{\textnormal{det}}(\psi\to\EPR_{AB})=\min\{\entropy_\infty(A)_\psi,\entropy_\infty(B)_\psi\}$.
\end{corollary}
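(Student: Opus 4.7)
The plan is to establish the equality by proving matching upper and lower bounds, both of which reduce to results already in hand.

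For the lower bound (achievability of the rate), I would apply \cref{thm:oneshotdeterministic} to the tensor-power state $\ket{\psi}^{\otimes m}$, regarded as a tripartite state on $\mathcal{H}_A^{\otimes m}\otimes\mathcal{H}_B^{\otimes m}\otimes\mathcal{H}_C^{\otimes m}$ with local dimensions $d_A^m,d_B^m,d_C^m$. Since the min-entropy is additive under tensor products, $\entropy_\infty(A)_{\psi^{\otimes m}}=m\entropy_\infty(A)_\psi$ and analogously for $B$. The one-shot theorem then yields a deterministic LOCC transformation producing
\begin{equation}
n_m=\left\lceil m\min\{\entropy_\infty(A)_\psi,\entropy_\infty(B)_\psi\}-\log\bigl(4\ln(2d_C^{4m}(d_A^m+d_B^m))\bigr)\right\rceil
\end{equation}
ebits. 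The subtracted term is $O(\log m)$, so dividing by $m$ and taking $m\to\infty$ gives $\liminf_{m\to\infty}n_m/m\ge\min\{\entropy_\infty(A)_\psi,\entropy_\infty(B)_\psi\}$, establishing $R_{\textnormal{det}}(\psi\to\EPR_{AB})\ge\min\{\entropy_\infty(A)_\psi,\entropy_\infty(B)_\psi\}$.

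For the upper bound (converse), I would use the fact that any tripartite LOCC protocol is in particular bipartite LOCC across the cut $A:BC$, and similarly $AC:B$. Across the cut $A:BC$, the state $\ket{\psi}$ is a pure bipartite state whose reduced state on $A$ has min-entropy $\entropy_\infty(A)_\psi$, and the target $\EPR_{AB}^{\otimes n}$ is also pure bipartite across this cut with min-entropy $n$. The Morikoshi--Koashi result (cited in the preliminaries as $R_{\textnormal{det}}(\psi_P\to\EPR)=\entropy_\infty(P)$) therefore gives $R_{\textnormal{det}}(\psi\to\EPR_{AB})\le\entropy_\infty(A)_\psi$. The analogous argument for the cut $AC:B$ gives $R_{\textnormal{det}}(\psi\to\EPR_{AB})\le\entropy_\infty(B)_\psi$, and taking the minimum closes the gap.

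There is no real obstacle here: the work was already done in proving \cref{thm:oneshotdeterministic}. The only points that require a moment of care are (i) verifying that the dimension-dependent correction in the one-shot bound is sublinear in $m$ (it is logarithmic since $d_C^{4m}(d_A^m+d_B^m)$ grows exponentially but its logarithm and further log are linear and then logarithmic in $m$), and (ii) noting that Morikoshi--Koashi applies to the bipartite cuts because both the initial and target states are pure across them, with the min-entropy of the $A$-marginal (respectively $B$-marginal) being the same whether one groups $BC$ or $AC$ together.
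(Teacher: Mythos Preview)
Your proposal is correct and follows essentially the same approach as the paper: apply \cref{thm:oneshotdeterministic} to $\psi^{\otimes m}$, use additivity of the min-entropy, and observe that the dimensional correction is $O(\log m)$ for the achievability; then invoke the bipartite Morikoshi--Koashi result across the cuts $A:BC$ and $AC:B$ for the converse. The paper's proof is the same, just with the asymptotic estimate of the correction term written out slightly more explicitly.
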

\begin{proof}
The right hand side is an upper bound on the left hand side since $\entropy_\infty(A)$ is the optimal deterministic rate when joint operations on $BC$ are also allowed, and likewise, $\entropy_\infty(B)$ is the optimal deterministic rate for LOCC operations with respect to the bipartition $B:AC$.

For the achievability we apply \cref{thm:oneshotdeterministic} to $\ket{\psi}^{\otimes m}$:
\begin{equation}
\begin{split}
R_{\textnormal{det}}(\psi\to\EPR_{AB})
 & \ge \lim_{m\to\infty}\frac{1}{m}\left(\left\lceil \min\{\entropy_\infty(A)_{\psi^{\otimes m}},\entropy_\infty(B)_{\psi^{\otimes m}}\}-\log\left(4\ln(2d_C^{4m}(d_A^m+d_B^m))\right)\right\rceil\right)  \\
 & = \lim_{m\to\infty}\frac{1}{m}\left(\left\lceil m\min\{\entropy_\infty(A)_\psi,\entropy_\infty(B)_\psi\}-\log\left(4\ln(2d_C^{4m}(d_A^m+d_B^m))\right)\right\rceil\right)  \\
 & \ge \lim_{m\to\infty}\frac{1}{m}\left(\left\lceil m\min\{\entropy_\infty(A)_\psi,\entropy_\infty(B)_\psi\}-\log\left(4m\ln(4d_C^{4}\max\{d_A,d_B\})\right)\right\rceil\right)  \\
 & = \min\{\entropy_\infty(A)_\psi,\entropy_\infty(B)_\psi\}.
\end{split}
\end{equation}
\end{proof}

\section{Direct exponents}\label{sec:direct}

In order to prove the achievability for the direct exponents, we extend the truncation method of \cite{hayashi2002error} by combining two truncating measurements acting on different subsystems. We use the following lemma.
\begin{lemma}\label{lem:localcontraction}
Let $M\in\boundeds(\mathcal{H}_A)$, $\norm[\infty]{M}\le 1$, and $\rho\in\states(\mathcal{H}_A\otimes\mathcal{H}_B)$. Then $\Tr_A (M\otimes I)\rho(M^*\otimes I)\le\Tr_A\rho$.
\end{lemma}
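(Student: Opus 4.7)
The plan is to use the defining property of the partial trace as the adjoint of the map $X\mapsto I\otimes X$, reducing the operator inequality on $\mathcal{H}_B$ to a family of scalar inequalities indexed by positive test operators on $\mathcal{H}_B$. The only real ingredient is that $\norm[\infty]{M}\le 1$ is equivalent to $M^*M\le I$, so that $I-M^*M$ is positive semidefinite.

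Concretely, I would proceed as follows. First, note that it suffices to show $\Tr\bigl[X\cdot\Tr_A[(M\otimes I)\rho(M^*\otimes I)]\bigr]\le\Tr[X\cdot\Tr_A\rho]$ for every positive operator $X\in\boundeds(\mathcal{H}_B)$. By the definition of the partial trace (or a direct basis computation together with the cyclicity of the full trace), the left hand side equals
\begin{equation}
\Tr\bigl[(I\otimes X)(M\otimes I)\rho(M^*\otimes I)\bigr]
 = \Tr\bigl[(M^*M\otimes X)\rho\bigr],
\end{equation}
while the right hand side equals $\Tr[(I\otimes X)\rho]$. Their difference is therefore $\Tr[((I-M^*M)\otimes X)\rho]$.

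The final step is to observe that this quantity is nonnegative: since $\norm[\infty]{M}\le 1$ gives $I-M^*M\ge 0$ and $X\ge 0$, the operator $(I-M^*M)\otimes X$ is positive semidefinite, and the trace of the product of two positive semidefinite operators (here together with $\rho\ge 0$) is nonnegative. Since $X\ge 0$ was arbitrary, this yields $\Tr_A[(M\otimes I)\rho(M^*\otimes I)]\le\Tr_A\rho$ in the operator order. There is no real obstacle here; the only point requiring a moment of care is the identity $\Tr\bigl[(I\otimes X)(M\otimes I)\rho(M^*\otimes I)\bigr]=\Tr[(M^*M\otimes X)\rho]$, which follows immediately from cyclicity of the full trace once one rewrites $(I\otimes X)(M\otimes I)=M\otimes X$ and pulls $(M^*\otimes I)$ around to the front.
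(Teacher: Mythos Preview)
Your proof is correct and follows essentially the same route as the paper: test against an arbitrary positive $X\in\boundeds(\mathcal{H}_B)$, use the defining property of the partial trace together with cyclicity to rewrite the trace as $\Tr[(M^*M\otimes X)\rho]$, and then invoke $M^*M\le I$. The only cosmetic difference is that the paper phrases the key step as $M^*M\otimes X\le I\otimes X$ and chains inequalities directly, whereas you compute the difference $\Tr[((I-M^*M)\otimes X)\rho]\ge 0$; these are the same argument.
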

\begin{proof}
Let $X\in\boundeds(\mathcal{H}_B)_{\ge 0}$. Then $M^*M\otimes X\le I\otimes X$, therefore
\begin{equation}
\begin{split}
\Tr X\Tr_A (M\otimes I)\rho(M^*\otimes I)
 & = \Tr (I\otimes X)(M\otimes I)\rho(M^*\otimes I)  \\
 & = \Tr (M^*M\otimes X)\rho  \\
 & \le \Tr (I\otimes X)\rho  \\
 & = \Tr X\Tr_A \rho.
\end{split}
\end{equation}
\end{proof}

\begin{proposition}\label{prop:simultaneoustruncation}
Let $\ket{\psi}\in\mathcal{H}_A\otimes\mathcal{H}_B\otimes\mathcal{H}_C$ be a pure state vector and let $M_A\in\boundeds(\mathcal{H}_A)$ and $M_B\in\boundeds(\mathcal{H}_B)$ be contractions. The success probabilities
\begin{align}
p_{\textnormal{s},A} & := \norm{(M_A\otimes I\otimes I)\psi}^2  \\
p_{\textnormal{s},B} & := \norm{(I\otimes M_B\otimes I)\psi}^2  \\
p_{\textnormal{s},AB} & := \norm{(M_A\otimes M_B\otimes I)\psi}^2
\end{align}
and post-measurement states
\begin{align}
\psi'_{A} & := \frac{(M_A\otimes I\otimes I)\psi}{\sqrt{p_{\textnormal{s},A}}}  \\
\psi'_{B} & := \frac{(I\otimes M_B\otimes I)\psi}{\sqrt{p_{\textnormal{s},B}}}  \\
\psi'_{AB} & := \frac{(M_A\otimes M_B\otimes I)\psi}{\sqrt{p_{\textnormal{s},AB}}}
\end{align}
satisfy the inequalities
\begin{align}
p_{\textnormal{s},AB} & \ge 1-(1-p_{\textnormal{s},A})-(1-p_{\textnormal{s},B})  \\
\entropy_\infty(A)_{\psi'_{AB}} & \ge \entropy_\infty(A)_{\psi'_{A}}-\log\frac{p_{\textnormal{s},A}}{p_{\textnormal{s},AB}}  \label{eq:truncationABA} \\
\entropy_\infty(B)_{\psi'_{AB}} & \ge \entropy_\infty(B)_{\psi'_{B}}-\log\frac{p_{\textnormal{s},B}}{p_{\textnormal{s},AB}}.  \label{eq:truncationABB}
\end{align}
\end{proposition}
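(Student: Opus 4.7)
The plan is to treat the probability inequality and the two entropy inequalities separately, using a single operator inequality for each. Throughout I work with $\Pi_A=M_A^*M_A$ and $\Pi_B=M_B^*M_B$, which satisfy $0\le\Pi_A,\Pi_B\le I$ since $M_A,M_B$ are contractions, and with the bipartite marginal $\rho_{AB}=\Tr_C\ketbra{\psi}{\psi}$ and its single-system marginal $\rho_A=\Tr_B\rho_{AB}=\Tr_{BC}\ketbra{\psi}{\psi}$.

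For the probability inequality I would first verify the operator identity
\begin{equation}
I\otimes I-\Pi_A\otimes\Pi_B=(I-\Pi_A)\otimes I+\Pi_A\otimes(I-\Pi_B)
\end{equation}
on $\mathcal{H}_A\otimes\mathcal{H}_B$, and then use $\Pi_A\le I$ to upper bound the second term by $I\otimes(I-\Pi_B)$. Taking the expectation value in $\ket{\psi}$ (tensoring with $I$ on $C$) and recognizing the three expectations as $1-p_{\textnormal{s},AB}$, $1-p_{\textnormal{s},A}$ and $1-p_{\textnormal{s},B}$ gives the union-bound style inequality at once.

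For the entropy inequalities, the main idea is to prove the sharper operator inequality
\begin{equation}
p_{\textnormal{s},AB}\,\Tr_{BC}\ketbra{\psi'_{AB}}{\psi'_{AB}}\le p_{\textnormal{s},A}\,\Tr_{BC}\ketbra{\psi'_A}{\psi'_A},
\end{equation}
after which \eqref{eq:truncationABA} follows by applying $\|\cdot\|_\infty$ to both sides (the operator norm is monotone on positive operators) and taking $-\log$, since $\entropy_\infty(A)_\phi=-\log\norm[\infty]{\Tr_{BC}\ketbra{\phi}{\phi}}$. To establish this operator inequality I would first unfold the left-hand side:
\begin{equation}
p_{\textnormal{s},AB}\Tr_{BC}\ketbra{\psi'_{AB}}{\psi'_{AB}}=M_A\bigl[\Tr_B(I\otimes M_B)\rho_{AB}(I\otimes M_B^*)\bigr]M_A^*,
\end{equation}
apply \cref{lem:localcontraction} with the roles of $A$ and $B$ exchanged to the bracketed expression, giving $\Tr_B(I\otimes M_B)\rho_{AB}(I\otimes M_B^*)\le\rho_A$, and then conjugate by $M_A$, which preserves the inequality since $X\mapsto M_A XM_A^*$ is positivity-preserving and monotone on Hermitian operators. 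The right-hand side equals $p_{\textnormal{s},A}\Tr_{BC}\ketbra{\psi'_A}{\psi'_A}=M_A\rho_AM_A^*$, so the claim follows. The $B$-version \eqref{eq:truncationABB} is identical after swapping $A\leftrightarrow B$.

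The only subtle point is checking that \cref{lem:localcontraction} is invoked in the right direction (contraction acts on the subsystem that is \emph{not} being traced out on the right of the lemma), but this is purely bookkeeping; once the operator inequality on the $A$-marginal is in hand, the entropy statements are immediate from monotonicity of $\|\cdot\|_\infty$ and the logarithm.
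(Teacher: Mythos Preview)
Your proposal is correct and essentially the same as the paper's proof: the paper cites the union bound for the probability inequality and then applies \cref{lem:localcontraction} once to obtain the operator inequality $p_{\textnormal{s},AB}\,\Tr_{BC}\ketbra{\psi'_{AB}}{\psi'_{AB}}\le p_{\textnormal{s},A}\,\Tr_{BC}\ketbra{\psi'_A}{\psi'_A}$, from which \eqref{eq:truncationABA} follows (and \eqref{eq:truncationABB} by symmetry). One small slip in your closing caveat: in \cref{lem:localcontraction} the contraction acts on the subsystem that \emph{is} traced out, not the one that survives---but your actual invocation $\Tr_B(I\otimes M_B)\rho_{AB}(I\otimes M_B^*)\le\Tr_B\rho_{AB}$ is in the correct direction.
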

\begin{proof}
The lower bound on $p_{\textnormal{s},AB}$ follows from the union bound.

By \cref{lem:localcontraction}, $\Tr_{BC}(M_A\otimes M_B\otimes I)\ketbra{\psi}{\psi}(M_A\otimes M_B\otimes I)^*\le \Tr_{BC}(M_A\otimes I\otimes I)\ketbra{\psi}{\psi}(M_A\otimes I\otimes I)^*$, therefore
\begin{equation}
\Tr_{BC}\ketbra{\psi'_{AB}}{\psi'_{AB}}\le \frac{p_{\textnormal{s},A}}{p_{\textnormal{s},AB}}\Tr_{BC}\ketbra{\psi'_{A}}{\psi'_{A}},
\end{equation}
which implies \eqref{eq:truncationABA}. The proof of \eqref{eq:truncationABB} is similar.
\end{proof}

\begin{theorem}\label{thm:directexponent}
Let $\ket{\psi}\in\mathcal{H}_A\otimes\mathcal{H}_B\otimes\mathcal{H}_C$ be a pure state vector. Then
\begin{equation}
R(\psi\to\EPR_{AB},r)=\min\left\{E_A(r),E_B(r)\right\},
\end{equation}
where
\begin{align}
E_A(r) & = \sup_{\alpha>1}\left[r\frac{1}{1-\alpha}+\entropy_\alpha(A)_\psi\right]  \\
E_B(r) & = \sup_{\alpha>1}\left[r\frac{1}{1-\alpha}+\entropy_\alpha(B)_\psi\right].
\end{align}
In particular, $R(\psi\to\EPR_{AB},r)=R_{\textnormal{det}}(\psi\to\EPR_{AB})$ if and only if $r\ge\min\{\entropy_\infty(A)_\psi,\entropy_\infty(B)_\psi\}$.
\end{theorem}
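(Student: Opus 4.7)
The plan is to establish matching upper and lower bounds on $R(\psi\to\EPR_{AB},r)$. The upper bound comes from viewing $\psi$ across the bipartitions $A:BC$ and $AC:B$; the lower bound will combine a simultaneous truncation on the $A$ and $B$ sides with the one-shot deterministic distillation of \cref{thm:oneshotdeterministic}.

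\textbf{Upper bound.} Any tripartite LOCC protocol is also LOCC for the bipartition $A:BC$, and with respect to this cut $\ket{\psi}$ is a pure bipartite state whose Schmidt distribution has the same spectrum as $\Tr_{BC}\ketbra{\psi}{\psi}$; in particular its R\'enyi entropies equal $\entropy_\alpha(A)_\psi$. I would apply the bipartite formula $R(\psi_P\to\EPR,r)=\sup_{\alpha>1}[r/(1-\alpha)+\entropy_\alpha(P)]$ of \cite{hayashi2002error} quoted in \cref{sec:preliminaries} to conclude $R(\psi\to\EPR_{AB},r)\le E_A(r)$; the symmetric argument for the cut $AC:B$ gives the bound $E_B(r)$.

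\textbf{Lower bound.} For parameters $\alpha_A,\alpha_B>1$ and $\delta>0$, I would set $R_X=r/(1-\alpha_X)+\entropy_{\alpha_X}(X)_\psi-\delta$ for $X\in\{A,B\}$, and, in bases that diagonalize the $A$- and $B$-marginals of $\ketbra{\psi}{\psi}$ with spectral distributions $P_A,P_B$, introduce for each $m$ the truncation projectors
\begin{equation*}
M_A=\sum_{x:\,P_A^{\otimes m}(x)\le 2^{-mR_A}}\ketbra{x}{x},\qquad M_B=\sum_{y:\,P_B^{\otimes m}(y)\le 2^{-mR_B}}\ketbra{y}{y}.
\end{equation*}
A Markov estimate using $(P_A^{\otimes m}(X)\cdot 2^{mR_A})^{\alpha_A-1}$ as a test function bounds the individual failure probabilities by $1-p_{\mathrm{s},A}\le 2^{-m(r+(\alpha_A-1)\delta)}$ (and analogously for $B$), and by construction $\entropy_\infty(X)_{\psi'_X}\ge mR_X+\log p_{\mathrm{s},X}\ge mR_X-o(1)$. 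Invoking \cref{prop:simultaneoustruncation} with $M_A,M_B$ yields $p_{\mathrm{s},AB}\ge 1-2^{-mr+o(m)}$ together with $\entropy_\infty(X)_{\psi'_{AB}}\ge mR_X-\log(p_{\mathrm{s},X}/p_{\mathrm{s},AB})-o(1)=mR_X-o(1)$ for both $X\in\{A,B\}$. Then \cref{thm:oneshotdeterministic} applied to $\ket{\psi'_{AB}}\in\mathcal{H}_A^{\otimes m}\otimes\mathcal{H}_B^{\otimes m}\otimes\mathcal{H}_C^{\otimes m}$ extracts at least $m\min\{R_A,R_B\}-O(\log m)$ EPR pairs deterministically, since the correction $\log(4\ln(2(d_C^m)^4(d_A^m+d_B^m)))$ grows only logarithmically in $m$. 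Composing the joint truncation with this deterministic distillation yields a protocol of rate $\min\{R_A,R_B\}$ and success probability $\ge 1-2^{-mr+o(m)}$; sending $\delta\to 0$ and independently optimizing $\alpha_A,\alpha_B>1$ makes $\min\{R_A,R_B\}\to\min\{E_A(r),E_B(r)\}$, matching the upper bound.

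\textbf{Main obstacle and the ``in particular'' claim.} The essential novelty over the bipartite proof of \cite{hayashi2002error} is that two truncations on different subsystems must be applied in parallel while keeping both min-entropy bounds under control; this is exactly what \cref{prop:simultaneoustruncation} provides, and it is the core new technical ingredient of the lower bound. For the final assertion, assume without loss of generality $\entropy_\infty(A)_\psi\le\entropy_\infty(B)_\psi$. A short computation from the definition of $\entropy_\alpha$ gives $(\alpha-1)(\entropy_\alpha(A)_\psi-\entropy_\infty(A)_\psi)\le\entropy_\infty(A)_\psi$ for every $\alpha>1$, so $r\ge\entropy_\infty(A)_\psi$ forces $E_A(r)=\entropy_\infty(A)_\psi$; combined with $E_B(r)\ge\entropy_\infty(B)_\psi\ge\entropy_\infty(A)_\psi$ this gives $\min\{E_A(r),E_B(r)\}=R_{\textnormal{det}}(\psi\to\EPR_{AB})$ via \cref{cor:deterministicrate}. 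The converse direction follows from the strict decrease of $E_A(r)$ on its non-saturation regime.
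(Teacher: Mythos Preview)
Your argument for the main equality follows exactly the route taken in the paper: the upper bound via the two bipartitions $A:BC$ and $AC:B$, and the lower bound by combining \cref{prop:simultaneoustruncation} (simultaneous truncation on $A$ and $B$) with \cref{thm:oneshotdeterministic}. The only difference is cosmetic: the paper invokes the soft truncation operators $f_t(\rho)$ from \cite{hayashi2002error} together with their threshold sequence, while you use hard spectral projectors and a direct Markov bound; both choices give success probability $1-2^{-rm+o(m)}$ and post-truncation min-entropy $mR_X-o(m)$, so either works and your version is in fact a bit more self-contained.

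Your treatment of the ``in particular'' clause goes beyond the paper, which does not spell it out. The ``if'' direction is fine via the inequality $(\alpha-1)(\entropy_\alpha(A)-\entropy_\infty(A))\le \entropy_\infty(A)$. The ``only if'' direction, however, is not established by the phrase ``strict decrease on the non-saturation regime'': when the top eigenvalue of $\rho_A$ has multiplicity $k_A>1$ one computes $\sup_{\alpha>1}(\alpha-1)(\entropy_\alpha(A)-\entropy_\infty(A))=\entropy_\infty(A)-\log k_A$, so $E_A(r)$ already saturates at $r=\entropy_\infty(A)-\log k_A<\entropy_\infty(A)$ (take $\psi=\EPR_{AB}\otimes\ket{0}_C$ for the extreme case). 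This is a defect in the stated ``iff'' rather than in your method, but your one-line justification does not prove the claim as written.
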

\begin{proof}
If we regard the state $\ket{\psi}_{ABC}$ as a bipartite state between $A$ and $BC$, then the optimal entanglement concentration rate with success probability at least $1-2^{-rn+o(n)}$ is $E_A(r)$ \cite[Corollary 7 and eq. (114)]{hayashi2002error}. Likewise, as a bipartite state between $B$ and $AC$, the optimal rate with the same probability is $E_B(r)$. It follows that $R(\psi\to\EPR_{AB},r)\le\min\left\{E_A(r),E_B(r)\right\}$.

To prove the achievability, we recall the method of \cite{hayashi2002error} for the bipartie case. There entanglement concentration is performed in two steps. The first, probabilistic step is a local two-outcome measurement designed to truncate the largest Schmidt coefficients, thereby increasing the min-entropy of entanglement with high probability. Introducing the functions
\begin{equation}
f_t(x)=\begin{cases}
\sqrt{\frac{t}{x}} & \text{if $t<x$}  \\
1 & \text{otherwise,}
\end{cases}
\end{equation}
the measurement operator on $n$ copies is $M_A^{(n)}=f_{t_A^{(n)}}(\Tr_{B^nC^n}\ketbra{\psi}{\psi}^{\otimes n})$ for a chosen threshold $t_A^{(n)}$. The second step is deterministic, and transforms the state to ebits. From Nielsen's theorem, the maximal number of ebits that can be extracted is the integer part of the min-entropy of entanglement.

An appropriate choice of the sequence $(t_A^{(n)})_{n=1}^\infty$ ensures that the success probability in the first step is $1-2^{-rn+o(n)}$ and the min-entropy of entanglement after the measurement is $nE_A(r)-o(n)$.

Likewise, for a suitable threshold sequence $(t_B^{(n)})_{n=1}^\infty$, the measurement operator $M_B^{(n)}=f_{t_B^{(n)}}(\Tr_{A^nC^n}\ketbra{\psi}{\psi}^{\otimes n})$ corresponds to an outcome with probability $1-2^{-rn+o(n)}$ and min-entropy of entanglement $nE_B(r)-o(n)$.

In the tripartite case, we apply the two probabilistic parts on subsystems $A$ and $B$ simultaneously. By \cref{prop:simultaneoustruncation}, the probability that both truncations are successful is also $1-2^{-rn+o(n)}$, and the local min-entropies at $A$ and $B$ are still $nE_A(r)-o(n)$ and $nE_B(r)-o(n)$, respectively. By \cref{thm:oneshotdeterministic}, the truncated state can be deterministically transformed into $\min\{nE_A(r)-o(n),nE_B(r)-o(n)\}$ EPR pairs between $A$ and $B$.
\end{proof}

We note that in the limit $r\to 0$ we get
\begin{equation}
\begin{split}
\lim_{r\to 0}R(\psi\to\EPR_{AB},r)
 & = \lim_{r\to 0}\min\left\{E_A(r),E_B(r)\right\}  \\
 & = \min\left\{\lim_{r\to 0}E_A(r),\lim_{r\to 0}E_B(r)\right\}  \\
 & = \min\left\{\entropy(A)_\psi,\entropy(B)_\psi\right\}.
\end{split}
\end{equation}
It follows that it is possible to transform copies of $\psi$ into EPR pairs between $A$ and $B$ with success probability $1-o(1)$ and rate $\min\left\{\entropy(A)_\psi,\entropy(B)_\psi\right\}$. This is optimal again because of the corresponding bipartite bounds, and provides a new proof of \cite[Theorem 1]{smolin2005entanglement}.

\section{Strong converse exponents}\label{sec:strongconverse}

We turn to the strong converse region, i.e., when the rate exceeds $\min\left\{\entropy(A)_\psi,\entropy(B)_\psi\right\}$ and the success probability approaches $0$ exponentially fast. In the case of bipartite entanglement concentration, the same truncation technique gives the optimal strong converse exponent as well \cite[Theorem 10]{hayashi2002error}. However, the method of \cref{thm:directexponent} does not work, because the lower bound on the probability of succesful simultaneous truncation in \cref{prop:simultaneoustruncation} becomes trivial when the success probability of both of the truncations is close to $0$. In fact, as we show in this section, the largest rate for a given strong converse exponent is in general not equal to the minimum of the bipartite rates $E^*_A(r)$ and $E^*_B(r)$, but is instead given by
\begin{equation}\label{eq:strongconverserateformula}
R^*(\psi\to\EPR_{AB},r)=\inf_{\substack{\alpha\in[0,1)  \\  x\in[0,1]}}\left[r\frac{\alpha}{1-\alpha}+E^{\alpha,(x,1-x,0)}(\psi)\right].
\end{equation}

\subsection{States with free support}

We start by analysing a special class of states, previously considered in \cite{bugar2024explicit}. The main ideas are similar to the general case, but this special case is technically simpler and in addition the resulting rate formula is more explicit. We recall the relevant definitions, specialised to tripartite states.
\begin{definition}
Let $\mathcal{X}_A$, $\mathcal{X}_B$, $\mathcal{X}_C$ be finite sets and $\Psi\subseteq\mathcal{X}_A\times\mathcal{X}_B\times\mathcal{X}_C$. We say that $\Psi$ is free if any two triples $x,y\in\Psi$ where $x\neq y$ are different in at least two positions, i.e., at most one of $x_A=y_A$, $x_B=y_B$, $x_C=y_C$ holds.

A state $\psi\in\mathcal{H}_A\otimes\mathcal{H}_B\otimes\mathcal{H}_C$ is said to have free support if there are local orthonormal bases $(\ket{x}_A)_{x\in\mathcal{X}_A}$, $(\ket{x}_B)_{x\in\mathcal{X}_B}$, $(\ket{x}_C)_{x\in\mathcal{X}_C}$ and a free subset $\Psi\subseteq\mathcal{X}_A\times\mathcal{X}_B\times\mathcal{X}_C$ such that
\begin{equation}
\psi=\sum_{(x_A,x_B,x_C)\in\Psi}\psi_{x_A,x_B,x_C}\ket{x_A}\otimes\ket{x_B}\otimes\ket{x_C}.
\end{equation}
\end{definition}
For instance, the weighted $W$ states $\sqrt{a}\ket{100}+\sqrt{b}\ket{010}+\sqrt{c}\ket{001}$ ($a+b+c=1$) have free support with respect to the standard three-qubit basis.

In this section, $\psi$ is always assumed to be a state with free support, and we will denote by $\mathcal{M}$ the measurement map with respect to a triple of local orthonormal bases as in the definition. Note that this measurement may not be unique, but the statements below hold for all of them (see \cite{bugar2024explicit} for more details). We regard the output of $\mathcal{M}$ as a classical probability distribution on $\mathcal{X}_A\times\mathcal{X}_B\times\mathcal{X}_C$. For probability distributions on product sets we consider the following quantities.
\begin{definition}
Let $\mathcal{X}_A$, $\mathcal{X}_B$, $\mathcal{X}_C$ be finite sets and let $P\in\distributions(\mathcal{X}_A\times\mathcal{X}_B\times\mathcal{X}_C)$, $\alpha\in[0,1)$, and $\theta\in\distributions(\{A,B,C\})$. We define
\begin{equation}
\entropy_{\alpha,\theta}(P)=\max_{Q\in\distributions(\support P)}\left[\theta(A)\entropy(Q_A)+\theta(B)\entropy(Q_B)+\theta(C)\entropy(Q_C)-\frac{\alpha}{1-\alpha}\relativeentropy{Q}{P}\right],
\end{equation}
where $Q_A,Q_B$ and $Q_C$ are the marginals of $Q$.
\end{definition}
The expression in the maximization is concave in $Q$, therefore $\entropy_{\alpha,\theta}(P)$ can be computed by convex optimization.
\begin{theorem}[{\cite[Theorem 4.7]{bugar2024explicit}}]
Let $\psi$ be a state with free support and $\mathcal{M}$ the measurement map as above. Then for all $\alpha\in[0,1)$ and $\theta\in\distributions(\{A,B,C\})$ we have $E^{\alpha,\theta}(\psi)=\entropy_{\alpha,\theta}(\mathcal{M}(\ketbra{\psi}{\psi}))$.
\end{theorem}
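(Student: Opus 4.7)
My plan is to identify $I_\psi(\overline{\lambda})$ with a classical Sanov-type rate function for the distribution $P:=\mathcal{M}(\ketbra{\psi}{\psi})$, after which the asserted equality follows from a straightforward interchange of suprema. Choosing phases appropriately, write $\ket{\psi}=\sum_{x\in\Psi}\sqrt{P(x)}\ket{x_A}\ket{x_B}\ket{x_C}$. The free support hypothesis guarantees that distinct elements of $\Psi$ agreeing in one coordinate must differ in the other two, so along the $A:BC$ cut the vectors $\ket{x_B}\ket{x_C}$ indexed by $x\in\Psi$ with fixed $x_A$ are pairwise orthogonal; this already displays a Schmidt decomposition, yielding $\spectrum\Tr_{BC}\ketbra{\psi}{\psi}=P_A$, and analogously for the $B$ and $C$ marginals.

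The central step is to prove
\begin{equation*}
I_\psi(\overline{\lambda})=\min_{\substack{Q\in\distributions(\Psi)\\Q_A=\overline{\lambda}_A,\,Q_B=\overline{\lambda}_B,\,Q_C=\overline{\lambda}_C}}\relativeentropy{Q}{P}.
\end{equation*}
Expanding $\psi^{\otimes n}$ in the product basis, $\norm{P_\lambda\psi^{\otimes n}}^2$ becomes a double sum over $x,y\in\Psi^n$ of products $\prod_{S\in\{A,B,C\}}\bra{y_S}P^{\mathcal{H}_S}_{\lambda_S}\ket{x_S}$ weighted by $\sqrt{P^{\otimes n}(x)P^{\otimes n}(y)}$. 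Each Schur--Weyl matrix coefficient vanishes unless $x_S$ and $y_S$ have a common type (the torus weight being preserved by $P^{\mathcal{H}_S}_{\lambda_S}$), while freeness forces any distinct $x,y\in\Psi^n$ to differ in at least two of the three subsystems. Combining these constraints with the dimension estimates \eqref{eq:unitarydimensionbound}--\eqref{eq:symmetricdimensionbound}, one finds that, up to subexponential prefactors, $\norm{P_\lambda\psi^{\otimes n}}^2=\sum_Q P^{\otimes n}(\typeclass{n}{Q})$, where $Q$ ranges over $n$-types on $\Psi$ with marginals $\lambda_S/n$; Sanov's theorem then yields the displayed formula.

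Substituting into the definition of $E^{\alpha,\theta}(\psi)$ and using that $\entropy(Q_S)=\entropy(\overline{\lambda}_S)$ whenever $Q_S=\overline{\lambda}_S$, the outer $\sup$ over $\overline{\lambda}$ and the inner $\max$ over marginally-constrained $Q$ collapse into a single supremum over $Q\in\distributions(\support P)$, which is exactly $\entropy_{\alpha,\theta}(P)$. The main obstacle is the cross-term analysis in the central step: the Schur--Weyl projectors are not diagonal in the product basis, so turning the identification of marginal spectra into an actual tight (up to polynomial factors) equality between $\norm{P_\lambda\psi^{\otimes n}}^2$ and the classical type-class probability requires careful bookkeeping of the $S_n$-orbit structure of sequences in $\Psi^n$, where freeness is needed precisely to decouple the off-diagonal contributions of the three subsystems.
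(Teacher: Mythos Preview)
The paper does not prove this statement; it is quoted from \cite{bugar2024explicit} (Theorem~4.7 there) and used as a black box, so there is no in-paper argument to compare against.

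Your overall strategy is correct: the formula you state for $I_\psi(\overline{\lambda})$ (with the constraint read as ``the decreasing rearrangement of $Q_S$ equals $\overline{\lambda}_S$'') does hold for free-support states, and once it is in hand the deduction of $E^{\alpha,\theta}(\psi)=\entropy_{\alpha,\theta}(P)$ by merging the two suprema is immediate and exactly as you describe. However, the sketched justification of that central step has a genuine gap. Your observation that distinct $x,y\in\Psi^n$ must differ in at least two of the three local strings does not by itself control the cross terms $\prod_S\bra{y_S}P_{\lambda_S}\ket{x_S}$: two local strings can differ as sequences while sharing the same type, and within a single type class the Schur--Weyl projector is highly non-diagonal, so ``same type in each factor'' does not force $x=y$ nor make the matrix element small. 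Moreover, $P_{\lambda_S}$ has nonzero components on every torus weight space whose sorted weight is majorized by $\lambda_S$, so a first pass at the comparison gives $\norm{P_\lambda\psi^{\otimes n}}^2$ close to $\sum_Q P^{\otimes n}(\typeclass{n}{Q})$ with the sum over types $Q$ satisfying a \emph{majorization} constraint on the marginals rather than the equality constraint you write; tightening this to equality on the exponential scale requires an additional argument (for instance, using the $S_n$-invariance of $\psi^{\otimes n}$ to isolate the dominant weight contribution in each isotypic block). You correctly flag the cross-term bookkeeping as the main obstacle, but as written the proposal is a sound outline with the hard step left unresolved.
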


After these preparations, we first rewrite the upper bound on the rate in a different form.
\begin{proposition}\label{prop:freesupportupperbound}
Let $\psi$ be a state with free support and $\mathcal{M}$ the measurement map as above. Then
\begin{equation}
R^*(\psi\to\EPR_{AB},r)\le\max_{\substack{Q\in\distributions(\mathcal{X}_A\times\mathcal{X}_B\times\mathcal{X}_C)  \\  \relativeentropy{Q}{\mathcal{M}(\ketbra{\psi}{\psi})}\le r}}\min\{\entropy(Q_A),\entropy(Q_B)\}.
\end{equation}
\end{proposition}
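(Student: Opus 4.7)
The plan is to start from the general strong-converse upper bound
\[
R^*(\psi\to\EPR_{AB},r)\le r\tfrac{\alpha}{1-\alpha}+E^{\alpha,(x,1-x,0)}(\psi),
\]
valid for every $\alpha\in[0,1)$ and $x\in[0,1]$ by the stochastic LOCC monotonicity of $E^{\alpha,\theta}$ from \cite{vrana2023family} together with the evaluation $E^{\alpha,(x,1-x,0)}(\EPR_{AB})=x\cdot 1+(1-x)\cdot 1+0=1$ (the marginals of $\EPR_{AB}$ on $A$ and $B$ are maximally mixed on two levels while the $C$-marginal is pure, so the supremum defining $E^{\alpha,\theta}$ is attained at $\overline{\lambda}_A=\overline{\lambda}_B=(1/2,1/2)$, $\overline{\lambda}_C=(1)$, where $I_{\EPR_{AB}}=0$). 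Substituting the free-support identity $E^{\alpha,(x,1-x,0)}(\psi)=\entropy_{\alpha,(x,1-x,0)}(P)$ with $P=\mathcal{M}(\ketbra{\psi}{\psi})$ and reparametrizing $\beta=\alpha/(1-\alpha)\in[0,\infty)$, I obtain
\[
R^*(\psi\to\EPR_{AB},r)\le\inf_{\substack{\beta\ge 0\\x\in[0,1]}}\max_{Q\in\distributions(\support P)}\bigl[\beta(r-\relativeentropy{Q}{P})+x\entropy(Q_A)+(1-x)\entropy(Q_B)\bigr].
\]

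The key step is a minimax exchange of this $\inf\max$. The objective is affine (hence continuous and quasi-convex) in $(\beta,x)$, and concave and continuous in $Q$ on the compact convex set $\distributions(\support P)$: the maps $Q\mapsto \entropy(Q_A)$ and $Q\mapsto\entropy(Q_B)$ are concave as compositions of Shannon entropy with linear marginal maps, and $Q\mapsto -\beta\relativeentropy{Q}{P}$ is concave for $\beta\ge 0$. Sion's minimax theorem therefore swaps the two operations, giving
\[
\max_{Q\in\distributions(\support P)}\inf_{\substack{\beta\ge 0\\x\in[0,1]}}\bigl[\beta(r-\relativeentropy{Q}{P})+x\entropy(Q_A)+(1-x)\entropy(Q_B)\bigr].
\]

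The inner infima decouple and evaluate explicitly: $\inf_{\beta\ge 0}\beta(r-\relativeentropy{Q}{P})$ equals $0$ if $\relativeentropy{Q}{P}\le r$ and $-\infty$ otherwise, while $\inf_{x\in[0,1]}[x\entropy(Q_A)+(1-x)\entropy(Q_B)]=\min\{\entropy(Q_A),\entropy(Q_B)\}$. Discarding the $Q$ for which the $\beta$-infimum is $-\infty$ restricts the outer maximum to the divergence ball $\{Q:\relativeentropy{Q}{P}\le r\}$, which is automatically contained in $\distributions(\support P)$, and produces exactly the claimed right-hand side.

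The main technical subtlety is that the $(\beta,x)$ minimization domain is non-compact in the $\beta$ direction, so one has to use the version of Sion's theorem that requires compactness of only one of the two sets (namely the maximizing set $\distributions(\support P)$); alternatively, one can truncate to $\beta\in[0,N]$, exchange, and let $N\to\infty$, using that $\entropy(Q_A)$ and $\entropy(Q_B)$ are uniformly bounded over $\distributions(\support P)$ to pass to the limit in the resulting optimization value.
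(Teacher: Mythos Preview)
Your proof is correct and follows essentially the same route as the paper: both start from the monotonicity bound $R^*\le r\frac{\alpha}{1-\alpha}+E^{\alpha,(x,1-x,0)}(\psi)$, insert the free-support formula $E^{\alpha,\theta}(\psi)=\entropy_{\alpha,\theta}(P)$, reparametrize $\alpha\mapsto\beta=\alpha/(1-\alpha)$, and use Sion's minimax theorem to exchange the $\inf$ and $\max$ before evaluating the inner infimum. The only cosmetic difference is that you perform a single minimax exchange over the joint variable $(\beta,x)$ (using joint affinity), whereas the paper swaps $x$ and $\beta$ in two separate steps; both are valid since $\distributions(\support P)$ is compact.
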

\begin{proof}
For all $\alpha\in[0,1)$ and $\theta\in\distributions(\{A,B,C\})$ the inequality
\begin{equation}
\begin{split}
R^*(\psi\to\EPR_{AB},r)
 & \le \frac{r\frac{\alpha}{1-\alpha}+E^{\alpha,\theta}(\psi)}{E^{\alpha,\theta}(\EPR_{AB})}  \\
 & = \frac{r\frac{\alpha}{1-\alpha}+\entropy_{\alpha,\theta}(\mathcal{M}(\ketbra{\psi}{\psi}))}{\theta(A)+\theta(B)}
\end{split}
\end{equation}
holds. We set $\theta=(x,1-x,0)$ (i.e., $\theta(A)=x$, $\theta(B)=1-x$, $\theta(C)=0$) with $x\in[0,1]$, and minimize the right hand side over $\alpha$ and $x$:
\begin{equation}
\begin{split}
R^*(\psi\to\EPR_{AB},r)
 & \le \inf_{\substack{\alpha\in[0,1)  \\  x\in[0,1]}}\left[r\frac{\alpha}{1-\alpha}+\entropy_{\alpha,(x,1-x,0)}(\mathcal{M}(\ketbra{\psi}{\psi}))\right]  \\
 & = \inf_{\substack{\alpha\in[0,1)  \\  x\in[0,1]}}\max_Q\left[r\frac{\alpha}{1-\alpha}+x\entropy(Q_A)+(1-x)\entropy(Q_B)-\frac{\alpha}{1-\alpha}\relativeentropy{Q}{\mathcal{M}(\ketbra{\psi}{\psi})}\right]  \\
 & = \inf_{\substack{t\in[0,\infty)  \\  x\in[0,1]}}\max_Q\left[rt+x\entropy(Q_A)+(1-x)\entropy(Q_B)-t\relativeentropy{Q}{\mathcal{M}(\ketbra{\psi}{\psi})}\right]  \\
 & = \inf_{t\in[0,\infty)}\max_Q\left[rt+\min\{\entropy(Q_A),\entropy(Q_B)\}-t\relativeentropy{Q}{\mathcal{M}(\ketbra{\psi}{\psi})}\right]  \\
 & = \max_Q\inf_{t\in[0,\infty)}\left[rt+\min\{\entropy(Q_A),\entropy(Q_B)\}-t\relativeentropy{Q}{\mathcal{M}(\ketbra{\psi}{\psi})}\right]  \\
 & = \max_{\substack{Q  \\  \relativeentropy{Q}{\mathcal{M}(\ketbra{\psi}{\psi})}\le r}}\min\{\entropy(Q_A),\entropy(Q_B)\}.
\end{split}
\end{equation}
The first equality is by the definition of $\entropy_{\alpha,(x,1-x,0)}$, in the second equality the new variable $t=\frac{\alpha}{1-\alpha}$ is introduced, the third equality follows from Sion's minimax theorem \cite{sion1958general} (affine in $x$ and concave in $Q$) and that the minimum of an affine function on $[0,1]$ is attained at one of the endpoints, the fourth equality is another application of Sion's minimax theorem (affine in $x$ and concave in $Q$), and the last equality uses that the infimum over $t$ is either attained at $t=0$ (when $\relativeentropy{Q}{\mathcal{M}(\ketbra{\psi}{\psi})}\le r$) or is $-\infty$ (when $\relativeentropy{Q}{\mathcal{M}(\ketbra{\psi}{\psi})}>r$).
\end{proof}

It remains to show that the upper bound on the rate is achievable. Since the marginal entropies and the relative entropy are continuous on the set of probability distributions on the support of $\mathcal{M}(\ketbra{\psi}{\psi})$, it is sufficient to consider distributions with rational entries.
\begin{proposition}\label{prop:freesupportlowerbound}
Let $n\in\naturals$, $Q\in\distributions[n](\support\mathcal{M}(\ketbra{\psi}{\psi}))$, and $\relativeentropy{Q}{\mathcal{M}(\ketbra{\psi}{\psi})}\le r$. Then $R^*(\psi\to\EPR_{AB},r)\ge\min\{\entropy(Q_A),\entropy(Q_B)\}$.
\end{proposition}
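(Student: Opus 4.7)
The plan is to construct an explicit two-stage LOCC protocol. For $m=kn$ with $k\in\positiveintegers$, the first stage applies the local triple type-class projection $\Pi := \typeclassprojector{m}{Q_A} \otimes \typeclassprojector{m}{Q_B} \otimes \typeclassprojector{m}{Q_C}$ in the preferred product bases; this is a trace-nonincreasing LOCC operation in which each party measures the empirical type of its own marginal and the parties postselect on the joint outcome $(Q_A,Q_B,Q_C)$. The second stage applies \cref{thm:oneshotdeterministic} to the normalized post-measurement state $\ket{\xi_m}:=\Pi\ket{\psi}^{\otimes m}/\norm{\Pi\ket{\psi}^{\otimes m}}$. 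Since $Q$ is an $n$-type, $Q$ and its three marginals are all $m$-types.

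The success probability is analyzed by expanding $\ket{\psi}^{\otimes m}$ in the product basis and grouping the surviving terms by joint type:
\begin{equation*}
p_m:=\norm{\Pi\ket{\psi}^{\otimes m}}^2=\sum_{T}P^{\otimes m}(\typeclass{m}{T}),
\end{equation*}
where $P=\mathcal{M}(\ketbra{\psi}{\psi})$ and $T$ ranges over joint types on $\support P$ with marginals $(Q_A,Q_B,Q_C)$. Retaining only the term $T=Q$ and applying the standard type-counting bound yields $p_m\ge(m+1)^{-\lvert\support P\rvert}2^{-m\relativeentropy{Q}{P}}$, so $\limsup_m-\tfrac{1}{m}\log p_m\le\relativeentropy{Q}{P}\le r$.

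The main technical step is bounding the local min-entropies of $\ket{\xi_m}$. The key observation is that freeness, applied position-by-position, makes $\Tr_{BC}\ketbra{\xi_m}{\xi_m}$ diagonal in the basis $\{\ket{s_A}\}$: a nonzero off-diagonal entry indexed by $s_A\neq s_A'$ would require pairs $(s_A,s_B,s_C),(s_A',s_B,s_C)\in\Psi^m$, but at any coordinate $i$ where $s_A^{(i)}\neq s_A'^{(i)}$ the two triples of $\Psi$ would differ only in the $A$-slot, contradicting freeness. Permutation symmetry across $\typeclass{m}{Q_A}$ then equates all nonzero diagonal entries, so $\Tr_{BC}\ketbra{\xi_m}{\xi_m}$ is maximally mixed on a subspace of dimension $\lvert\typeclass{m}{Q_A}\rvert$, yielding $\entropy_\infty(A)_{\xi_m}=\log\lvert\typeclass{m}{Q_A}\rvert\ge m\entropy(Q_A)-\lvert\support Q_A\rvert\log(m+1)$. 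The $B$-side bound is identical.

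To finish, I apply \cref{thm:oneshotdeterministic} to $\ket{\xi_m}$: the three local dimensions are at most exponential in $m$, so the correction $\log(4\ln(\cdots))$ is $O(\log m)$, and we extract $n_m\ge m\min\{\entropy(Q_A),\entropy(Q_B)\}-O(\log m)$ EPR pairs deterministically. Composing the two stages yields $\ketbra{\psi}{\psi}^{\otimes m}\loccto p_m\ketbra{\EPR_{AB}}{\EPR_{AB}}^{\otimes n_m}$ with $n_m/m\to\min\{\entropy(Q_A),\entropy(Q_B)\}$ and strong converse exponent at most $\relativeentropy{Q}{P}\le r$. For $m$ not a multiple of $n$, leave the spare copies untouched; the asymptotic quantities are unaffected, and the $\textnormal{poly}(m)$ prefactor in $p_m$ contributes $o(1)$ to $-\tfrac{1}{m}\log p_m$, so the sequence witnesses the claimed lower bound on $R^*(\psi\to\EPR_{AB},r)$. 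The delicate step is the diagonalization via freeness: without it, the marginal on $A$ after the joint local projection need not be maximally mixed on $\typeclass{m}{Q_A}$, and the $A$-side min-entropy could fall short of $\log\lvert\typeclass{m}{Q_A}\rvert$, spoiling the rate.
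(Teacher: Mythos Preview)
Your proof is correct and follows essentially the same route as the paper's: project $\psi^{\otimes mn}$ by the product of the marginal type-class projectors $\typeclassprojector{mn}{Q_A}\otimes\typeclassprojector{mn}{Q_B}\otimes\typeclassprojector{mn}{Q_C}$, use freeness to see that the marginals of the post-measurement state are diagonal, use $S_{mn}$-symmetry to see they are uniform on $\typeclass{mn}{Q_A}$ (respectively $\typeclass{mn}{Q_B}$), apply \cref{thm:oneshotdeterministic}, and bound the success probability from below via the joint type class $\typeclass{mn}{Q}$. Your explicit position-by-position argument for diagonality and your handling of $m$ not divisible by $n$ spell out details that the paper leaves implicit, but the strategy is identical.
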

\begin{proof}
For $m\in\naturals$, let $\Pi^{mn}_{Q_A}$, $\Pi^{mn}_{Q_B}$, $\Pi^{mn}_{Q_C}$ be the type class projections for the marginals and $\Pi^{mn}_Q$ the joint type class projector acting on $(\mathcal{H}_A\otimes\mathcal{H}_B\otimes\mathcal{H}_C)^{\otimes mn}$. $\psi^{\otimes mn}$ is has free support with respect to local orthonormal basis that diagonalize the marginal type class projectors, therefore the marginals of the state
\begin{equation}
\varphi:=\frac{(\Pi^{mn}_{Q_A}\otimes\Pi^{mn}_{Q_B}\otimes\Pi^{mn}_{Q_C})\psi^{\otimes mn}}{\norm{(\Pi^{mn}_{Q_A}\otimes\Pi^{mn}_{Q_B}\otimes\Pi^{mn}_{Q_C})\psi^{\otimes mn}}}
\end{equation}
are diagonal as well. They are also uniform since $S_{mn}$ acts transitively on the local basis vectors and $\varphi$ is symmetric. It follows from \cref{thm:oneshotdeterministic} that $\varphi$ can be transformed deterministically into at least
\begin{equation}
\min\{\log\lvert\typeclass{mn}{Q_A}\rvert,\log\lvert\typeclass{mn}{Q_B}\rvert\}-\log(4\ln(2d_C^{4mn}(d_A^{mn}+d_B^{mn})))
\end{equation}
EPR pairs between $A$ and $B$. The success probability of the overall transformation can be estimated as
\begin{equation}
\begin{split}
\norm{(\Pi^{mn}_{Q_A}\otimes\Pi^{mn}_{Q_B}\otimes\Pi^{mn}_{Q_C})\psi^{\otimes mn}}
 & \ge \norm{\Pi^{mn}_Q\psi^{\otimes mn}}  \\
 & \ge \frac{1}{(mn+1)^{d_Ad_Bd_C}}2^{-mn\relativeentropy{Q}{\mathcal{M}(\ketbra{\psi}{\psi})}}.
\end{split}
\end{equation}
We let $m\to\infty$ and use $\log\lvert\typeclass{mn}{Q_A}\rvert\ge(mn+1)^{-d_A}2^{mn\entropy(Q_A)}$ and $\log\lvert\typeclass{mn}{Q_B}\rvert\ge(mn+1)^{-d_B}2^{mn\entropy(Q_B)}$ to arrive at the stated bound.
\end{proof}

To summarize, \cref{prop:freesupportupperbound,prop:freesupportlowerbound} together prove the following.
\begin{theorem}\label{thm:freesupportscrate}
Let $\psi$ be a state with free support and $\mathcal{M}$ the measurement map as above. Then
\begin{equation}
R^*(\psi\to\EPR_{AB},r)=\max_{\substack{Q\in\distributions(\support\mathcal{M}(\ketbra{\psi}{\psi}))  \\  \relativeentropy{Q}{\mathcal{M}(\ketbra{\psi}{\psi})}\le r}}\min\{\entropy(Q_A),\entropy(Q_B)\}.
\end{equation}
\end{theorem}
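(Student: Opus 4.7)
My plan is to combine \cref{prop:freesupportupperbound} and \cref{prop:freesupportlowerbound}, which already yield matching bounds once a routine density argument is added. The upper bound is exactly the content of \cref{prop:freesupportupperbound}, so the only remaining work is to upgrade the lower bound from the rational case ($n$-types) treated in \cref{prop:freesupportlowerbound} to arbitrary distributions $Q$ on $\support\mathcal{M}(\ketbra{\psi}{\psi})$ satisfying $\relativeentropy{Q}{\mathcal{M}(\ketbra{\psi}{\psi})}\le r$.

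First I would observe that the constraint set $\{Q\in\distributions(\support\mathcal{M}(\ketbra{\psi}{\psi}))\mid\relativeentropy{Q}{\mathcal{M}(\ketbra{\psi}{\psi})}\le r\}$ is compact, since it is a closed subset of a finite-dimensional simplex over a fixed finite support, and that $Q\mapsto\entropy(Q_A)$ and $Q\mapsto\entropy(Q_B)$ are continuous on it, so the maximum on the right hand side is attained at some $Q^\star$. Next I would approximate $Q^\star$ by a sequence $Q^{(k)}$ of $n_k$-types with $\support Q^{(k)}\subseteq\support\mathcal{M}(\ketbra{\psi}{\psi})$ and $Q^{(k)}\to Q^\star$ in $\ell^1$; such approximants exist because the rationals are dense in the simplex on a fixed finite support and $n_k$ may be chosen arbitrarily large.

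Using the continuity of $\relativeentropy{\cdot}{\mathcal{M}(\ketbra{\psi}{\psi})}$ and of the marginal entropies on this restricted simplex, for every $\epsilon>0$ all but finitely many $Q^{(k)}$ will satisfy both $\relativeentropy{Q^{(k)}}{\mathcal{M}(\ketbra{\psi}{\psi})}\le r+\epsilon$ and $\min\{\entropy(Q^{(k)}_A),\entropy(Q^{(k)}_B)\}\ge\min\{\entropy(Q^\star_A),\entropy(Q^\star_B)\}-\epsilon$. Applying \cref{prop:freesupportlowerbound} to such a $Q^{(k)}$ at rate $r+\epsilon$ then gives $R^*(\psi\to\EPR_{AB},r+\epsilon)\ge\min\{\entropy(Q^\star_A),\entropy(Q^\star_B)\}-\epsilon$. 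Finally, sending $\epsilon\to 0$ and invoking the continuity of $r\mapsto R^*(\psi\to\EPR_{AB},r)$ (which is concave by \cref{prop:suprates}, hence continuous on $(0,\infty)$) closes the argument.

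The only delicate point, and essentially the only thing preventing a one-line proof, is the need to keep all approximating distributions supported on $\support\mathcal{M}(\ketbra{\psi}{\psi})$ so that the relative entropy stays finite throughout the limit; all substantive content has already been packaged into the two preceding propositions.
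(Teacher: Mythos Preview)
Your proposal is correct and follows essentially the same approach as the paper: the paper's proof is literally the sentence that \cref{prop:freesupportupperbound,prop:freesupportlowerbound} together yield the theorem, with the density reduction to rational $Q$ already noted in the text preceding \cref{prop:freesupportlowerbound} (``since the marginal entropies and the relative entropy are continuous on the set of probability distributions on the support of $\mathcal{M}(\ketbra{\psi}{\psi})$, it is sufficient to consider distributions with rational entries''). You have simply spelled out this continuity/density argument more carefully, including the use of concavity of $r\mapsto R^*(\psi\to\EPR_{AB},r)$ from \cref{prop:suprates} to absorb the $\epsilon$ slack in $r$.
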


As an example, the $W$ state $\ket{W}=\frac{1}{\sqrt{3}}(\ket{100}+\ket{010}+\ket{001})$ satisfies the support condition with respect to the standard basis. In \cite{bugar2024explicit} we numerically evaluated the upper bound
\begin{equation}
\inf_{\substack{\alpha\in[0,1)  \\  x\in[0,1]}}\left[r\frac{\alpha}{1-\alpha}+E^{\alpha,(x,1-x,0)}(W)\right]
 = \inf_{\alpha\in[0,1)}\left[r\frac{\alpha}{1-\alpha}+E^{\alpha,(1/2,1/2,0)}(W)\right],
\end{equation}
where the equality is due to the symmetry of the $W$ state and convexity of $E^{\alpha,\theta}$ in the weight $\theta$. \Cref{thm:freesupportscrate} implies that this is equal to the optimal rate $R^*(\W\to\EPR_{AB},r)$, shown in \cref{fig:WtoEPRrates}.
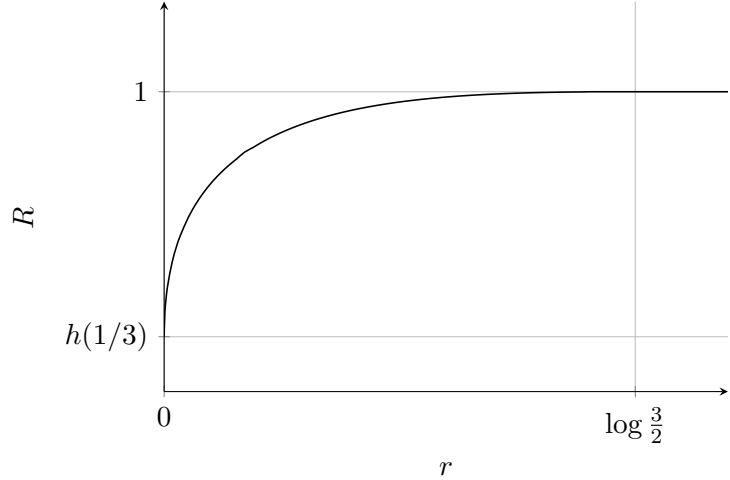
\begin{figure}
\begin{center}
\begin{tikzpicture}
\begin{axis}[
  rateplotstyle,
  ytick={0.666666666666,0.9182958340544896,1},
  yticklabels={$\frac{2}{3}$,$h(1/3)$,$1$},
  ymin=0.9,
  ymax=1.03,
  xtick={0,0.584962500721156},
  xticklabels={0,$\log\frac{3}{2}$},
  extra y ticks = {0.6},
]
\addplot[plotline,smooth] table [x=r,y=R] {rAndRW.dat} -- (1,1);
\end{axis}
\end{tikzpicture}
\end{center}
\caption{\label{fig:WtoEPRrates} The optimal rate $R$ as a function of the strong converse exponent $r$ for transforming \W{} states into \EPR{} pairs between Alice and Bob. The rate saturates at $r=\log\frac{3}{2}$.}
\end{figure}

\subsection{General states}

For general states, one might try to follow a similar route starting with the upper bound
\begin{equation}
\begin{split}
R^*(\psi\to\EPR_{AB},r)
 & \le \inf_{\substack{\alpha\in[0,1)  \\  x\in[0,1]}}\left[r\frac{\alpha}{1-\alpha}+E^{\alpha,(x,1-x,0)}(\psi)\right]  \\
 & = \inf_{\substack{\alpha\in[0,1)  \\  x\in[0,1]}}\sup_{\overline{\lambda}_A,\overline{\lambda}_B,\overline{\lambda}_C}\left[r\frac{\alpha}{1-\alpha}+x\entropy(\overline{\lambda}_A)+(1-x)\entropy(\overline{\lambda}_B)-\frac{\alpha}{1-\alpha}I_\psi(\overline{\lambda}_A,\overline{\lambda}_B,\overline{\lambda}_C)\right].
\end{split}
\end{equation}
If $I_\psi$ was convex, then an argument similar to the one in the preceding section would lead to an analogous formula with the optimization over $Q$ replaced with $(\overline{\lambda}_A,\overline{\lambda}_B,\overline{\lambda}_C$ subject to $I_\psi(\overline{\lambda}_A,\overline{\lambda}_B,\overline{\lambda}_C)\le r$, and $\min\{\entropy(\overline{\lambda}_A),\entropy(\overline{\lambda}_B)\}$ instead of the minimum of the two marginal entropies. However, $I_\psi$ is not known to be convex in general.

To overcome this difficulty, we introduce a modification of the rate function that contains less information, but which is convex, and for our purposes interchangeble with $I_\psi$. We introduce the construction and prove its basic properties in a more general context.
\begin{definition}
Let $(X,\preccurlyeq)$ be a compact pospace, i.e., $X$ is a compact topological space and ${\preccurlyeq}\subseteq X\times X$ a closed partial order. For an extended real-valued function $f:X\to\reals\cup\{\infty\}$, not necessarily continuous but bounded from below, we define
\begin{equation}
f^\searrow(y)=\inf_{\substack{x\in X  \\  x\preccurlyeq y}}f(x).
\end{equation}
\end{definition}

Recall that the domain of an extended real-valued function $f:X\to\reals\cup\{\infty\}$ is $\domain f=\setbuild{x\in X}{f(x)<\infty}$, and the characteristic function of a subset $A\subseteq X$ is
\begin{equation}
\chi_A(x)=\begin{cases}
0 & \text{if $x\in A$}  \\
\infty & \text{if $x\notin A$.}
\end{cases}
\end{equation}
$f$ is lower semicontinuous if its epigraph $\epigraph f=\setbuild{(x,z)\in X\times\reals}{f(x)\le z}$ is closed. $\chi_A$ is lower semicontinuous if and only if $A$ is closed.

\begin{proposition}\label{prop:fsearrowproperties}
Let $(X,\preccurlyeq)$ be a compact pospace and $f:X\to\reals\cup\{\infty\}$.
\begin{enumerate}
\item\label{it:fsearrowlsc} If $f$ is lower semicontinuous, then $f^\searrow$ is also lower semicontinuous.
\item\label{it:fsearrowdom} If $\domain f$ is compact, then $\domain f^\searrow$ is also compact.
\item\label{it:fsearrowdec} $f^\searrow$ is monotone decreasing.
\end{enumerate}
\end{proposition}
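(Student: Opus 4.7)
The plan is to handle the three parts in reverse order of difficulty, starting with the trivial monotonicity claim and building up to lower semicontinuity.

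For part (iii), I would simply observe that if $y_1\preccurlyeq y_2$, then the set $\{x\in X\mid x\preccurlyeq y_1\}$ is contained in $\{x\in X\mid x\preccurlyeq y_2\}$ by transitivity, so taking the infimum of $f$ over the larger set produces a value no larger than over the smaller set, i.e.\ $f^\searrow(y_1)\ge f^\searrow(y_2)$.

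For part (ii), I would show that $\domain f^\searrow$ is closed in $X$ (hence compact, since $X$ is). Unfolding definitions, $y\in\domain f^\searrow$ iff there exists some $x\in\domain f$ with $x\preccurlyeq y$. Given a convergent net (or sequence, if $X$ is metrizable) $y_n\to y$ with each $y_n\in\domain f^\searrow$, pick $x_n\in\domain f$ with $x_n\preccurlyeq y_n$. Compactness of $\domain f$ yields a subnet with $x_{n_k}\to x\in\domain f$, and closedness of $\preccurlyeq$ as a subset of $X\times X$ applied to $(x_{n_k},y_{n_k})\to(x,y)$ yields $x\preccurlyeq y$, so $y\in\domain f^\searrow$.

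Part (i) is where the work lies, and the main obstacle is ensuring that the infimum defining $f^\searrow(y_n)$ is either attained or closely approximated by elements that we can then push to a limit. The key observation is that since $\preccurlyeq$ is closed in $X\times X$, the set $\{x\mid x\preccurlyeq y\}$ is closed for each $y$, hence compact, so a lower semicontinuous $f$ attains its infimum there; in particular, for every $y_n$ there is an $x_n\preccurlyeq y_n$ with $f(x_n)=f^\searrow(y_n)$. I would then verify closedness of $\epigraph f^\searrow$ directly: given $(y_n,z_n)\to(y,z)$ with $f^\searrow(y_n)\le z_n$, extract a subnet along which the above $x_n$ converge to some $x\in X$, use closedness of $\preccurlyeq$ to conclude $x\preccurlyeq y$, and then apply lower semicontinuity of $f$ to get
\begin{equation}
f^\searrow(y)\le f(x)\le\liminf_k f(x_{n_k})\le\liminf_k z_{n_k}=z,
\end{equation}
so $(y,z)\in\epigraph f^\searrow$. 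The only subtlety is to be careful that these compactness and attainment arguments still work if $f$ takes the value $\infty$ on parts of $\{x\mid x\preccurlyeq y_n\}$, but since $f$ is bounded below this causes no problems: the infimum is either finite and attained, or $+\infty$, in which case $f^\searrow(y_n)=\infty$ contradicts $f^\searrow(y_n)\le z_n<\infty$ (and the case $z_n\to\infty$ is vacuous for closedness).
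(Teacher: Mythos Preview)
Your proof is correct. Part (iii) is identical to the paper's argument. For parts (i) and (ii), however, you take a different route.

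The paper argues both parts via the closed-map property of the projection $p_2:X\times Y\to Y$ when $X$ is compact. For (ii) it writes $\domain f^\searrow = p_2\bigl({\preccurlyeq}\cap(\domain f\times X)\bigr)$, a continuous image of a compact set; for (i) it introduces the auxiliary function $\hat f = \chi_{\preccurlyeq} + f\circ p_1$ on $X\times X$, notes it is lower semicontinuous, and observes that $\epigraph f^\searrow$ is the image of $\epigraph\hat f$ under a closed projection, hence closed. Your approach instead works directly with nets: for (ii) you pick witnesses $x_n\in\domain f$ and pass to a convergent subnet; for (i) you first use compactness of the down-set $\{x\mid x\preccurlyeq y_n\}$ to guarantee that the infimum is attained, then again extract a convergent subnet of the minimizers. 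Both methods rely on the same underlying ingredients (compactness of $X$ and closedness of $\preccurlyeq$), but the paper's version packages them into a single topological fact and avoids the separate attainment step, while yours is more self-contained and makes the role of compactness of the down-sets explicit.
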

\begin{proof}
We use that if $X$ is a compact topological space and $Y$ is an arbitrary topological space, then the projection $p_2:X\times Y\to Y$ is a closed map.

\ref{it:fsearrowlsc}: $\hat{f}:=\chi_\preccurlyeq+(f\circ p_1):X\times X\to\reals\cup\{\infty\}$ is a lower semicontinuous function, and $\epigraph f^\searrow$ is the projection of $\epigraph \hat{f}$ along $p_2:X\times X \to X$, therefore it is closed.

\ref{it:fsearrowdom}: We may express the domain of $f^\searrow$ as
\begin{equation}
\begin{split}
\domain f^\searrow
 & = \setbuild{y\in X}{\exists x\in\domain f:x\preccurlyeq y}  \\
 & = p_2(\preccurlyeq\cap(\domain f\times X)),
\end{split}
\end{equation}
which is a continuous image of a compact set, therefore compact.

\ref{it:fsearrowdec}: If $y\preccurlyeq y'$, then $x\preccurlyeq y$ implies $x\preccurlyeq y'$, therefore
\begin{equation}
f^\searrow(y)
 = \inf_{\substack{x\in X  \\  x\preccurlyeq y}}f(x)
 \ge \inf_{\substack{x\in X  \\  x\preccurlyeq y'}}f(x)
 = f^\searrow(y').
\end{equation}
\end{proof}
In fact, $f^\searrow$ is the (pointwise) largest monotone decreasing function not exceeding $f$.

Returning to the rate function, our next goal is to prove that $\domain I_\psi^\searrow$ is compact and $I_\psi^{\smash{\searrow}}$ is lower semicontinuous and convex. We show the first two properties for $I_\psi$ and then \cref{prop:fsearrowproperties} implies that they also hold for $I_\psi^{\smash{\searrow}}$.
\begin{proposition}
Let $\psi\in\mathcal{H}_A\otimes\mathcal{H}_B\otimes\mathcal{H}_C$. Then $\domain I_\psi$ is compact and $I_\psi$ is lower semicontinuous.
\end{proposition}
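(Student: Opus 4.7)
I would prove the two claims separately.

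For lower semicontinuity, write $I_\psi(\overline{\lambda}) = \sup_{\epsilon > 0} f_\epsilon(\overline{\lambda})$ where
\begin{equation*}
f_\epsilon(\overline{\lambda}) = \lim_{n\to\infty}-\frac{1}{n}\log\sum_{\substack{\lambda\in\partitions[n]^3\\\frac{1}{n}\lambda\in\ball{\epsilon}{\overline{\lambda}}}}\norm{P_\lambda\psi^{\otimes n}}^2.
\end{equation*}
This supremum coincides with the limit as $\epsilon\to 0$ because $\epsilon\mapsto f_\epsilon(\overline{\lambda})$ is non-increasing: shrinking $\epsilon$ removes terms from the sum, shrinking it and thereby enlarging $-\frac{1}{n}\log(\cdot)$. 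Now suppose $\overline{\lambda}^{(k)}\to\overline{\lambda}$ and fix $\epsilon>0$. Once $k$ is large enough that $\overline{\lambda}^{(k)}\in\ball{\epsilon/2}{\overline{\lambda}}$, the triangle inequality yields $\ball{\epsilon/2}{\overline{\lambda}^{(k)}}\subseteq\ball{\epsilon}{\overline{\lambda}}$, and inclusion of summation sets gives $f_{\epsilon/2}(\overline{\lambda}^{(k)})\ge f_\epsilon(\overline{\lambda})$. Therefore $I_\psi(\overline{\lambda}^{(k)})\ge f_\epsilon(\overline{\lambda})$ for all large $k$; taking $\liminf$ over $k$ and then $\sup$ over $\epsilon$ produces $\liminf_k I_\psi(\overline{\lambda}^{(k)})\ge I_\psi(\overline{\lambda})$.

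For compactness of the domain, I first observe that the Schur--Weyl irreducible $\mathbb{S}_{\lambda_X}(\mathcal{H}_X)$ vanishes whenever $\lambda_X$ has more than $d_X$ nonzero parts, so $P_\lambda=0$ outside this range, forcing $I_\psi(\overline{\lambda})=\infty$ whenever some $\overline{\lambda}_X$ has more than $d_X$ nonzero entries. Hence $\domain I_\psi\subseteq K$, where $K\subseteq\overline{\partitions}^3$ is the compact set of triples whose components have at most $d_A,d_B,d_C$ nonzero parts (a product of three probability simplices). It remains to show $\domain I_\psi$ is closed in $K$. My plan is to establish a uniform upper bound: a constant $C<\infty$ such that $I_\psi(\overline{\lambda})\le C$ for every $\overline{\lambda}\in\domain I_\psi$. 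Granting this, $\domain I_\psi=\{\overline{\lambda}\in K:I_\psi(\overline{\lambda})\le C\}$ is closed by the lower semicontinuity just proved, and hence compact. For the uniform bound, the idea is to combine the normalization $\sum_\lambda\norm{P_\lambda\psi^{\otimes n}}^2=1$, the polynomial count $\lvert\partitions[n]^3\rvert\le(n+1)^{d_A+d_B+d_C}$, and a tensorization estimate linking Schur--Weyl projectors on $\mathcal{H}_X^{\otimes n}$ to those on $\mathcal{H}_X^{\otimes nm}$: these should force $\sum_{\frac{\lambda}{n}\in\ball{\epsilon}{\overline{\lambda}}}\norm{P_\lambda\psi^{\otimes n}}^2\ge 2^{-Cn}/\textnormal{poly}(n)$ for a dimension-dependent $C$ uniformly over the domain.

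The main obstacle is the compactness step. Lower semicontinuity alone does \emph{not} imply closedness of $\domain I_\psi$: the set $\{I_\psi=\infty\}$ is in general only a $G_\delta$ (each $\{I_\psi>M\}$ is open, but their countable intersection need not be), so a genuine quantitative estimate is necessary. The technical heart is the tensorization step relating projectors $P_\mu^{\otimes m}$ to projectors $P_\nu$ on $\mathcal{H}_X^{\otimes nm}$ with $\nu$ close to $m\mu$, with only polynomial losses; I expect the required estimates to be available from the representation-theoretic machinery in \cite{vrana2023family}, but they have to be executed with explicit bookkeeping.
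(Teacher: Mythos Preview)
Your lower semicontinuity argument is correct and is essentially the paper's proof: both exploit the ball inclusion $\ball{\epsilon/2}{\overline{\mu}}\subseteq\ball{\epsilon}{\overline{\lambda}}$ for nearby $\overline{\mu}$ to compare the pre-limit quantities, then send $\epsilon\to 0$. The only cosmetic difference is that you phrase it via sequences and the decomposition $I_\psi=\sup_\epsilon f_\epsilon$, whereas the paper works directly with the $\liminf$.

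For compactness, your approach is different from the paper's and has a genuine gap. The paper does not attempt any quantitative estimate: it simply identifies $\domain I_\psi$ with the entanglement polytope of $\psi$ and cites \cite{walter2013entanglement} (the invariant-theoretic characterization plus the semigroup property of highest weights of nonvanishing covariants), which gives compactness immediately since the entanglement polytope is a polytope. Your proposed route through a uniform bound $I_\psi\le C$ on the domain is not obtainable from the three ingredients you list. Normalization plus the polynomial count only tells you that \emph{some} $\lambda$ has $\norm{P_\lambda\psi^{\otimes n}}^2\ge 1/\textnormal{poly}(n)$---namely one with $\frac{1}{n}\lambda$ near the typical marginal spectra---and says nothing about an arbitrary $\overline{\lambda}\in\domain I_\psi$. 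Tensorization (supermultiplicativity in $m$ of the relevant norms) does yield a bound $I_\psi(\overline{\lambda})\le -\frac{1}{n_0}\log\norm{P_{\lambda_0}\psi^{\otimes n_0}}^2$ once you locate a single $\lambda_0$ with $\frac{1}{n_0}\lambda_0$ near $\overline{\lambda}$ and nonzero projection, but both $n_0$ and the value of that norm depend on $\overline{\lambda}$, so the resulting bound is not uniform. Upgrading this to a uniform $C$ by a finite-cover argument would require compactness of the domain, which is exactly what you are trying to prove. I do not see how to close this circle without effectively invoking the entanglement polytope result that the paper cites.
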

\begin{proof}
That $\domain I_\psi$ is compact follows from the fact that it is equal to the entanglement polytope of $\psi$, which is a direct consequence of the invariant-theoretic characterisation of the entanglement polytope and the semigroup property of the highest weights of nonvanishing covariants \cite[Theorem 1]{walter2013entanglement}.

To see that $I_\psi$ is lower semicontinuous, let $\overline{\lambda}_0\in\overline{\partitions}^3$ and $\epsilon>0$, and let $\overline{\mu}\in\ball{\epsilon/2}{\overline{\lambda}_0}$ be such that $I_\psi(\overline{\mu})\le\liminf_{\overline{\lambda}\to\overline{\lambda}_0}I_\psi(\overline{\lambda})+\epsilon$. Then we have
\begin{equation}
\begin{split}
\lim_{n\to\infty}-\frac{1}{n}\log\sum_{\substack{\lambda\in\partitions[n]^3  \\  \frac{1}{\lambda}\in\ball{\epsilon}{\overline{\lambda}_0}}}\norm{P_\lambda\psi^{\otimes n}}^2
 & \le \lim_{n\to\infty}-\frac{1}{n}\log\sum_{\substack{\lambda\in\partitions[n]^3  \\  \frac{1}{\lambda}\in\ball{\epsilon/2}{\overline{\mu}}}}\norm{P_\lambda\psi^{\otimes n}}^2  \\
 & \le I_\psi(\overline{\mu})  \\
 & \le \liminf_{\overline{\lambda}\to\overline{\lambda}_0}I_\psi(\overline{\lambda})+\epsilon.
\end{split}
\end{equation}
Taking the limit $\epsilon\to 0$ gives
\begin{equation}
I_\psi(\overline{\lambda}_0)\le\liminf_{\overline{\lambda}\to\overline{\lambda}_0}I_\psi(\overline{\lambda}).
\end{equation}
\end{proof}

Before proving convexity, we need another expression for $I_\psi^\searrow$. For an arbitrary subset $U\subseteq\overline{\partitions}^3$ and all $m\in\naturals$ we introduce the projections
\begin{equation}
Q^m_U=\sum_{\substack{\mu\in\partitions[m]^3  \\  \exists\overline{\mu}\in U:\frac{1}{m}\mu\preccurlyeq\overline{\mu}}}P_\mu.
\end{equation}
\begin{proposition}
For $U,V\subseteq\overline{\partitions}^3$ and $m,n\in\naturals$ the inequality
\begin{equation}
Q^m_U\otimes Q^n_V\le Q^{m+n}_{\frac{mU+nV}{m+n}}
\end{equation}
holds.
\end{proposition}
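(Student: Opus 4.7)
The plan is to establish the inequality by showing that the range of the left-hand projection lies inside the range of the right-hand projection. I would first expand $Q^m_U \otimes Q^n_V = \sum_{\mu,\nu} P_\mu \otimes P_\nu$, the sum running over $\mu \in \partitions[m]^3$, $\nu \in \partitions[n]^3$ admitting witnesses $\overline{u} \in U$ and $\overline{v} \in V$ with $\tfrac{1}{m}\mu \preccurlyeq \overline{u}$ and $\tfrac{1}{n}\nu \preccurlyeq \overline{v}$. It then suffices to show, for each such summand, that every $\lambda \in \partitions[m+n]^3$ whose isotypic component meets the range of $P_\mu \otimes P_\nu$ nontrivially satisfies $\tfrac{1}{m+n}\lambda \preccurlyeq \tfrac{m\overline{u}+n\overline{v}}{m+n}$, since $\tfrac{m\overline{u}+n\overline{v}}{m+n}$ lies in $\tfrac{mU+nV}{m+n}$ and so $P_\lambda$ appears in $Q^{m+n}_{(mU+nV)/(m+n)}$.

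The main ingredient is the classical Littlewood--Richardson dominance bound, applied on each subsystem $X \in \{A,B,C\}$ separately. Under the identification $\mathcal{H}_X^{\otimes m} \otimes \mathcal{H}_X^{\otimes n} \simeq \mathcal{H}_X^{\otimes (m+n)}$, the image of $P^{\mathcal{H}_X}_{\mu_X} \otimes P^{\mathcal{H}_X}_{\nu_X}$ decomposes into $\lambda_X$-isotypics with multiplicity $c^{\lambda_X}_{\mu_X\nu_X}$, and whenever this coefficient is positive one has $\lambda_X \preccurlyeq \mu_X + \nu_X$. This follows from the standard highest-weight argument: every weight of $\mathbb{S}_{\mu_X}(\mathcal{H}_X) \otimes \mathbb{S}_{\nu_X}(\mathcal{H}_X)$ is dominated by the highest weight $\mu_X + \nu_X$, so in particular the highest weight $\lambda_X$ of any irreducible constituent satisfies $\lambda_X \preccurlyeq \mu_X + \nu_X$.

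Chaining these bounds with the hypothesis is then routine, using that dominance is preserved under entrywise addition of nonincreasing sequences (partial sums add): $\mu_X \preccurlyeq m\overline{u}_X$ and $\nu_X \preccurlyeq n\overline{v}_X$ combine to give $\mu_X + \nu_X \preccurlyeq m\overline{u}_X + n\overline{v}_X$, and transitivity yields $\lambda_X \preccurlyeq m\overline{u}_X + n\overline{v}_X$ for each $X$. Dividing by $m+n$ produces the desired $\tfrac{1}{m+n}\lambda \preccurlyeq \tfrac{m\overline{u}+n\overline{v}}{m+n}$. I do not expect a serious obstacle here; the only points requiring care are the passage between $\overline{\partitions}$ and its scaled integer version, and tracking the three subsystems independently in the Littlewood--Richardson step.
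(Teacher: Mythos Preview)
Your proposal is correct and follows essentially the same route as the paper's proof: both arguments expand the left-hand side as a sum of $P_\mu\otimes P_\nu$, invoke the Littlewood--Richardson dominance bound $\lambda\preccurlyeq\mu+\nu$ on each factor, and then chain this with the witness majorizations via additivity and transitivity to place each contributing $\lambda$ inside the index set of $Q^{m+n}_{(mU+nV)/(m+n)}$. The paper writes this out as an explicit chain of operator (in)equalities inserting $\sum_\lambda P_\lambda=I$, whereas you phrase it as range inclusion, but for commuting projections these are equivalent.
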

\begin{proof}
For all $\mu\in\partitions[m]$, $\nu\in\partitions[n]$ and $\lambda\in\partitions[m+n]$ the projections $P_\lambda(P_\mu\otimes P_\nu)\in\boundeds(\mathcal{H}_A^{\otimes(m+n)}\otimes\mathcal{H}_B^{\otimes(m+n)}\otimes\mathcal{H}_C^{\otimes(m+n)})$ are orthogonal to each other, and nonzero if and only if the Littlewood--Richardson coefficient $c^\lambda_{\mu\nu}$ does not vanish, which implies $\lambda\preccurlyeq\mu+\nu$. Therefore
\begin{equation}
\begin{split}
Q^m_U\otimes Q^n_V
 & = \sum_{\substack{\mu\in\partitions[m]^3  \\  \exists\overline{\mu}\in U:\frac{1}{m}\mu\preccurlyeq\overline{\mu}}}\sum_{\substack{\nu\in\partitions[n]^3  \\  \exists\overline{\nu}\in V:\frac{1}{n}\nu\preccurlyeq\overline{\nu}}}P_\mu\otimes P_\nu  \\
 & = \sum_{\substack{\mu\in\partitions[m]^3  \\  \exists\overline{\mu}\in U:\frac{1}{m}\mu\preccurlyeq\overline{\mu}}}\sum_{\substack{\nu\in\partitions[n]^3  \\  \exists\overline{\nu}\in V:\frac{1}{n}\nu\preccurlyeq\overline{\nu}}}\sum_{\lambda\in\partitions[m+n]}P_\lambda(P_\mu\otimes P_\nu)  \\
 & = \sum_{\substack{\mu\in\partitions[m]^3  \\  \exists\overline{\mu}\in U:\frac{1}{m}\mu\preccurlyeq\overline{\mu}}}\sum_{\substack{\nu\in\partitions[n]^3  \\  \exists\overline{\nu}\in V:\frac{1}{n}\nu\preccurlyeq\overline{\nu}}}\sum_{\substack{\lambda\in\partitions[m+n]  \\  \lambda\preccurlyeq\mu+\nu}}P_\lambda(P_\mu\otimes P_\nu)  \\
 & = \sum_{\substack{\mu\in\partitions[m]^3  \\  \exists\overline{\mu}\in U:\frac{1}{m}\mu\preccurlyeq\overline{\mu}}}\sum_{\substack{\nu\in\partitions[n]^3  \\  \exists\overline{\nu}\in V:\frac{1}{n}\nu\preccurlyeq\overline{\nu}}}\sum_{\substack{\lambda\in\partitions[m+n]  \\  \exists\overline{\mu}\in U\,\exists\overline{\nu}\in V:\lambda\preccurlyeq m\overline{\mu}+n\overline{\nu}}}P_\lambda(P_\mu\otimes P_\nu)  \\
 & \le \sum_{\mu\in\partitions[m]^3}\sum_{\nu\in\partitions[n]^3}\sum_{\substack{\lambda\in\partitions[m+n]  \\  \exists\overline{\mu}\in U\,\exists\overline{\nu}\in V:\lambda\preccurlyeq m\overline{\mu}+n\overline{\nu}}}P_\lambda(P_\mu\otimes P_\nu)  \\
 & = \sum_{\substack{\lambda\in\partitions[m+n]  \\  \exists\overline{\mu}\in U\,\exists\overline{\nu}\in V:\lambda\preccurlyeq m\overline{\mu}+n\overline{\nu}}}P_\lambda  \\
 & = Q^{m+n}_{\frac{mU+nV}{m+n}}.
\end{split}
\end{equation}
\end{proof}
\begin{corollary}\label{cor:supermultiplicative}
If $\overline{\mu},\overline{\nu}\in\overline{\partitions}^3$, $m,n\in\naturals$ and $\epsilon>0$, then
\begin{equation}
Q^m_{\ball{\epsilon}{\overline{\mu}}}\otimes Q^n_{\ball{\epsilon}{\overline{\nu}}}\le Q^{m+n}_{\ball{\epsilon}{\frac{m}{m+n}\overline{\mu}+\frac{n}{m+n}\overline{\nu}}}.
\end{equation}
If in addition $\psi\in\mathcal{H}_A\otimes\mathcal{H}_B\otimes\mathcal{H}_C$, then
\begin{equation}
\norm{Q^m_{\ball{\epsilon}{\overline{\mu}}}\psi^{\otimes m}}^2\norm{Q^n_{\ball{\epsilon}{\overline{\nu}}}\psi^{\otimes n}}^2\le \norm{Q^{m+n}_{\ball{\epsilon}{\frac{m}{m+n}\overline{\mu}+\frac{n}{m+n}\overline{\nu}}}\psi^{\otimes m+n}}^2.
\end{equation}
\end{corollary}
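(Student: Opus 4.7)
The plan is to derive both assertions directly from the preceding proposition combined with an elementary triangle-inequality argument. For the operator inequality I would apply that proposition with $U=\ball{\epsilon}{\overline{\mu}}$ and $V=\ball{\epsilon}{\overline{\nu}}$ to obtain
\[
Q^m_{\ball{\epsilon}{\overline{\mu}}}\otimes Q^n_{\ball{\epsilon}{\overline{\nu}}}\le Q^{m+n}_{(m\ball{\epsilon}{\overline{\mu}}+n\ball{\epsilon}{\overline{\nu}})/(m+n)},
\]
and then combine this with the set inclusion
\[
\frac{m\ball{\epsilon}{\overline{\mu}}+n\ball{\epsilon}{\overline{\nu}}}{m+n}\subseteq\ball{\epsilon}{\tfrac{m}{m+n}\overline{\mu}+\tfrac{n}{m+n}\overline{\nu}}
\]
and the obvious monotonicity $W\subseteq W'\Rightarrow Q^{m+n}_W\le Q^{m+n}_{W'}$, which is immediate from the definition of $Q^{m+n}_W$ as a partial sum of mutually orthogonal projections indexed by a set that grows with $W$.

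The set inclusion is the triangle inequality applied componentwise. Given $\overline{\mu}'\in\ball{\epsilon}{\overline{\mu}}$ and $\overline{\nu}'\in\ball{\epsilon}{\overline{\nu}}$, for each $X\in\{A,B,C\}$ one has
\[
\left\|\tfrac{m\overline{\mu}'_X+n\overline{\nu}'_X}{m+n}-\tfrac{m\overline{\mu}_X+n\overline{\nu}_X}{m+n}\right\|_1\le\tfrac{m\norm[1]{\overline{\mu}'_X-\overline{\mu}_X}+n\norm[1]{\overline{\nu}'_X-\overline{\nu}_X}}{m+n}<\epsilon,
\]
so the convex combination $\tfrac{m\overline{\mu}'+n\overline{\nu}'}{m+n}$ lies in the $\epsilon$-ball around $\tfrac{m}{m+n}\overline{\mu}+\tfrac{n}{m+n}\overline{\nu}$. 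The strictness is preserved because $m\epsilon+n\epsilon=(m+n)\epsilon$ and the input bounds are already strict.

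For the squared-norm inequality I would use that each $Q^k_W$ is an orthogonal projection, hence $\norm{Q^k_W\xi}^2=\bra{\xi}Q^k_W\ket{\xi}$, and factorise $\psi^{\otimes(m+n)}=\psi^{\otimes m}\otimes\psi^{\otimes n}$ to write
\[
\norm{Q^m_{\ball{\epsilon}{\overline{\mu}}}\psi^{\otimes m}}^2\norm{Q^n_{\ball{\epsilon}{\overline{\nu}}}\psi^{\otimes n}}^2=\bra{\psi^{\otimes(m+n)}}Q^m_{\ball{\epsilon}{\overline{\mu}}}\otimes Q^n_{\ball{\epsilon}{\overline{\nu}}}\ket{\psi^{\otimes(m+n)}}.
\]
Sandwiching the operator inequality from the first part between $\bra{\psi^{\otimes(m+n)}}$ and $\ket{\psi^{\otimes(m+n)}}$ immediately produces the claim. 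No real obstacle is anticipated: the entire argument is a routine combination of the preceding proposition with the $\ell^1$ triangle inequality and the trivial monotonicity of $Q^{m+n}_{(\cdot)}$ in its subscript.
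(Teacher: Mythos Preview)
Your proposal is correct and is exactly the argument the paper implicitly has in mind: the corollary is stated without proof because it follows immediately from the preceding proposition via the Minkowski-type inclusion $\frac{m\ball{\epsilon}{\overline{\mu}}+n\ball{\epsilon}{\overline{\nu}}}{m+n}\subseteq\ball{\epsilon}{\frac{m}{m+n}\overline{\mu}+\frac{n}{m+n}\overline{\nu}}$ (triangle inequality) and the obvious monotonicity of $Q^{m+n}_{(\cdot)}$, with the norm statement obtained by sandwiching with $\psi^{\otimes(m+n)}$. There is nothing to add.
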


\begin{proposition}
\begin{equation}
I_\psi^\searrow(\overline{\lambda})=\lim_{\epsilon\to 0}\lim_{n\to\infty}-\frac{1}{n}\log\norm{Q^n_{\ball{\epsilon}{\overline{\lambda}}}\psi^{\otimes n}}^2.
\end{equation}
\end{proposition}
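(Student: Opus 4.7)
The plan is to prove the two inequalities
\begin{equation*}
\lim_{\epsilon\to 0}\lim_{n\to\infty}-\frac{1}{n}\log\norm{Q^n_{\ball{\epsilon}{\overline{\lambda}}}\psi^{\otimes n}}^2\le I_\psi^\searrow(\overline{\lambda})
\end{equation*}
and the reverse. Before anything else, \cref{cor:supermultiplicative} (with $\overline{\mu}=\overline{\nu}=\overline{\lambda}$) shows that $-\log\norm{Q^n_{\ball{\epsilon}{\overline{\lambda}}}\psi^{\otimes n}}^2$ is subadditive in $n$, so by Fekete's lemma the inner limit exists, and monotonicity of the balls in $\epsilon$ guarantees the outer limit exists as well. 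Denote the resulting quantity by $F(\overline{\lambda})$.

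For the inequality $F(\overline{\lambda})\le I_\psi^\searrow(\overline{\lambda})$, fix $\overline{\mu}\in\overline{\partitions}^3$ with $\overline{\mu}\preccurlyeq\overline{\lambda}$, pick $\delta>0$ small, and set $\overline{\tau}:=(1-\delta)\overline{\lambda}+\delta\,\overline{e}$, where $\overline{e}$ is the triple whose three components all equal $(1,0,\ldots,0)$. Then $\overline{\tau}\succcurlyeq\overline{\lambda}$ componentwise in majorization and $\overline{\tau}\in\ball{2\delta}{\overline{\lambda}}$. The partial sums of $\overline{\tau}$ strictly exceed those of $\overline{\lambda}$ by $\delta(1-\sum_{i\le m}\overline{\lambda}_i)$ at each level $m$ below the support, and this gap is positive with a positive minimum. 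Hence there exists $\delta'>0$ (depending on $\delta$ and on $\overline{\lambda}$) such that any $\overline{\sigma}\in\ball{\delta'}{\overline{\mu}}$ satisfies $\overline{\sigma}\preccurlyeq\overline{\tau}$; beyond the support the inequality is automatic since partial sums are at most $1$. Consequently every $\mu\in\partitions[n]^3$ with $\frac{1}{n}\mu\in\ball{\delta'}{\overline{\mu}}$ is counted in $Q^n_{\ball{2\delta}{\overline{\lambda}}}$, so
\begin{equation*}
\sum_{\substack{\mu\in\partitions[n]^3\\\frac{1}{n}\mu\in\ball{\delta'}{\overline{\mu}}}}\norm{P_\mu\psi^{\otimes n}}^2\le\norm{Q^n_{\ball{2\delta}{\overline{\lambda}}}\psi^{\otimes n}}^2.
\end{equation*}
Applying $-\frac{1}{n}\log$ and letting $n\to\infty$ gives a lower bound by $\lim_n -\frac{1}{n}\log\norm{Q^n_{\ball{2\delta}{\overline{\lambda}}}\psi^{\otimes n}}^2$; since the sequence in $\delta'$ is monotone, letting $\delta'\to 0$ produces $I_\psi(\overline{\mu})$ on the left, and finally letting $\delta\to 0$ yields $I_\psi(\overline{\mu})\ge F(\overline{\lambda})$. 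Taking infimum over $\overline{\mu}\preccurlyeq\overline{\lambda}$ gives $I_\psi^\searrow(\overline{\lambda})\ge F(\overline{\lambda})$.

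For the reverse inequality, define $U_\epsilon:=\{\overline{\sigma}\in\overline{\partitions}^3:\overline{\sigma}\preccurlyeq\overline{\tau}\text{ for some }\overline{\tau}\in\ball{\epsilon}{\overline{\lambda}}\}$, a compact subset of $\overline{\partitions}^3$. Cover $U_\epsilon$ by finitely many $\epsilon$-balls centered at points $\overline{\rho}_1,\ldots,\overline{\rho}_N\in U_\epsilon$, where $N$ depends only on $\epsilon$. Since $Q^n_{\ball{\epsilon}{\overline{\lambda}}}=\sum_{\mu:\frac{1}{n}\mu\in U_\epsilon}P_\mu$, a union bound gives
\begin{equation*}
\norm{Q^n_{\ball{\epsilon}{\overline{\lambda}}}\psi^{\otimes n}}^2\le N\,\max_i\sum_{\substack{\mu\in\partitions[n]^3\\\frac{1}{n}\mu\in\ball{\epsilon}{\overline{\rho}_i}}}\norm{P_\mu\psi^{\otimes n}}^2.
\end{equation*}
Applying $-\frac{1}{n}\log$ and letting $n\to\infty$ gives a lower bound on $F(\overline{\lambda})$ by $\min_i$ of the pre-limit quantities defining $I_\psi(\overline{\rho}_i)$, hence by $\inf_{\overline{\sigma}\in U_\epsilon}$ of those pre-limits. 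As $\epsilon\to 0$, the sets $U_\epsilon$ decrease to the closed set $\{\overline{\sigma}\preccurlyeq\overline{\lambda}\}$ by closedness of the majorization order, and by lower semicontinuity of $I_\psi$ together with compactness of $\overline{\partitions}^3$, the infimum converges to $\inf_{\overline{\sigma}\preccurlyeq\overline{\lambda}}I_\psi(\overline{\sigma})=I_\psi^\searrow(\overline{\lambda})$, giving $F(\overline{\lambda})\ge I_\psi^\searrow(\overline{\lambda})$.

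The main obstacle is the geometric argument in the upper-bound direction: the perturbed triple $\overline{\tau}$ must simultaneously lie in a small ball around $\overline{\lambda}$ and majorize every element of a fixed neighborhood of $\overline{\mu}$, uniformly in all three components. The choice of perturbation direction $\overline{e}$ makes this work, but the verification that the resulting partial-sum gaps dominate the perturbation requires a careful case analysis based on the supports of the three components of $\overline{\lambda}$. The lower bound is more routine, with the only subtlety being the exchange of the $\epsilon\to 0$ limit with the infimum over $U_\epsilon$, justified by lower semicontinuity and compactness.
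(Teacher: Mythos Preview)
Your argument is correct, and for the inequality $F(\overline{\lambda})\le I_\psi^\searrow(\overline{\lambda})$ it coincides with the paper's, only making explicit the perturbation $\overline{\tau}=(1-\delta)\overline{\lambda}+\delta\overline{e}$ that the paper hides in the claim ``for any $\epsilon>0$ and all sufficiently large $n$ there is an element of $\ball{\epsilon}{\overline{\lambda}}$ that majorizes $\frac{1}{n}\mu_n$''.

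For the reverse inequality the routes diverge. The paper avoids covering $U_\epsilon$ altogether: since there are only polynomially many triples in $\partitions[n]^3$, it picks for each $m$ a single $\mu_m$ maximising $\norm{P_{\mu_m}\psi^{\otimes n_m}}$ among those majorized by an element of $\ball{1/m}{\overline{\lambda}}$, so that $\norm{Q^{n_m}_{\ball{1/m}{\overline{\lambda}}}\psi^{\otimes n_m}}^2\le (n_m+1)^{d_A+d_B+d_C}\norm{P_{\mu_m}\psi^{\otimes n_m}}^2$; a convergent subsequence of $\frac{1}{n_m}\mu_m$ then furnishes a limit $\overline{\mu}\preccurlyeq\overline{\lambda}$ with $I_\psi(\overline{\mu})$ bounding the right-hand side from below. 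This is shorter and sidesteps two soft spots in your version. First, $U_\epsilon$ is the majorization-downward closure of an \emph{open} ball and is not compact; you should cover its closure. Second, the passage from $\inf_{\overline{\sigma}\in U_\epsilon}J_\epsilon(\overline{\sigma})$ to $\inf_{\overline{\sigma}\preccurlyeq\overline{\lambda}}I_\psi(\overline{\sigma})$ as $\epsilon\to 0$ does not follow from lower semicontinuity of $I_\psi$ alone, because both the domain and the function vary with $\epsilon$. What actually justifies it is extracting a convergent subsequence of minimisers $\overline{\rho}^{(\epsilon_k)}\to\overline{\rho}^*\preccurlyeq\overline{\lambda}$ and observing that $\ball{\epsilon_k}{\overline{\rho}^{(\epsilon_k)}}\subseteq\ball{\delta}{\overline{\rho}^*}$ for large $k$ forces $J_{\epsilon_k}(\overline{\rho}^{(\epsilon_k)})\ge J_\delta(\overline{\rho}^*)\to I_\psi(\overline{\rho}^*)$. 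Once spelled out, this is the same subsequence mechanism the paper uses directly, so the covering layer adds nothing.
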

\begin{proof}
The limit exists since by \cref{cor:supermultiplicative} the sequence $n\mapsto \log\norm{Q^n_{\ball{\epsilon}{\overline{\lambda}}}\psi^{\otimes n}}^2$ is superadditive, and monotone in $\epsilon$. Let $\overline{\lambda}\in\overline{\partitions}^3$. For all $m\in\positiveintegers$ choose $n_m\in\naturals$ such that the sequence $(n_m)_{m\in\positiveintegers}$ is strictly increasing and
\begin{equation}
\lim_{n\to\infty}-\frac{1}{n}\log\norm{Q^n_{\ball{1/m}{\overline{\lambda}}}\psi^{\otimes n}}^2 \ge -\frac{1}{n_m}\log\norm{Q^{n_m}_{\ball{1/m}{\overline{\lambda}}}\psi^{\otimes n_m}}^2-\frac{1}{m},
\end{equation}
and let $\mu_m\in\partitions[n_m]^3$ such that $\frac{1}{n_m}\mu_m$ is majorized by some element $\overline{\lambda}^{(m)}$ of $\ball{1/m}{\overline{\lambda}}$ and $\norm{P_{\mu_m}\psi^{n_m}}$ is maximal under this condition. Then
\begin{equation}
\norm{Q^{n_m}_{\ball{1/m}{\overline{\lambda}}}\psi^{\otimes n_m}}^2\le(n_m+1)^{d_A+d_B+d_C} \norm{P_{\mu_m}\psi^{n_m}}^2,
\end{equation}
therefore
\begin{equation}
\lim_{\epsilon\to 0}\lim_{n\to\infty}-\frac{1}{n}\log\norm{Q^n_{\ball{\epsilon}{\overline{\lambda}}}\psi^{\otimes n}}^2
 \ge \lim_{m\to\infty}-\frac{1}{n_m}\log\norm{P_{\mu_m}\psi^{n_m}}^2.
\end{equation}
Let $\mu_{m_l}$ be a subsequence such that $\frac{1}{n_{m_l}}\mu_{m_l}$ has a limit $\overline{\mu}$. Since $\frac{1}{n_{m_l}}\mu_{m_l}\preccurlyeq\overline{\lambda}^{(m_l)}$ for all $l$ and majorization is a closed partial order, $\overline{\mu}\preccurlyeq\lim_{l\to\infty}\overline{\lambda}^{(m_l)}=\overline{\lambda}$. It follows that
\begin{equation}
\begin{split}
I_\psi^\searrow(\overline{\lambda})
 & \le I_\psi(\overline{\mu})  \\  
 & \le\liminf_{l\to\infty}-\frac{1}{n_{m_l}}\log\norm{P_{\mu_{m_l}}\psi^{n_{m_l}}}^2  \\
 & \le \lim_{\epsilon\to 0}\lim_{n\to\infty}-\frac{1}{n}\log\norm{Q^n_{\ball{\epsilon}{\overline{\lambda}}}\psi^{\otimes n}}^2
\end{split}
\end{equation}

For the reverse inequality, suppose that $\overline{\mu}\preccurlyeq\overline{\lambda}$ and choose a sequence $\mu_n\in\partitions[n]^3$ such that $\frac{1}{n}\mu_n=\overline{\mu}$ and
\begin{equation}
\lim_{n\to\infty}-\frac{1}{n}\log\norm{P_{\mu_n}\psi^{\otimes n}}^2=I_\psi(\overline{\mu})
\end{equation}
For any $\epsilon>0$ and all sufficiently large $n$ there is an element of $\ball{\epsilon}{\overline{\lambda}}$ that majorizes $\frac{1}{n}\mu_n$, therefore $Q^n_{\ball{\epsilon}{\overline{\lambda}}}\ge P_{\mu_n}$, which implies
\begin{equation}
\lim_{n\to\infty}-\frac{1}{n}\log\norm{Q^n_{\ball{\epsilon}{\overline{\lambda}}}\psi^{\otimes n}}^2
 \le \lim_{n\to\infty}-\frac{1}{n}\log\norm{P_{\mu_n}\psi^{\otimes n}}^2
 = I_\psi(\overline{\mu}).
\end{equation}
Taking the limit $\epsilon\to 0$ and then the infimum over $\overline{\mu}\preccurlyeq\overline{\lambda}$ gives
\begin{equation}
\lim_{\epsilon\to 0}\lim_{n\to\infty}-\frac{1}{n}\log\norm{Q^n_{\ball{\epsilon}{\overline{\lambda}}}\psi^{\otimes n}}^2
\le I_\psi^\searrow(\overline{\lambda}).
\end{equation}
\end{proof}

\begin{proposition}
$I_\psi^\searrow$ is convex.
\end{proposition}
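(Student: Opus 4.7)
The plan is to derive convexity directly from the supermultiplicative inequality of \cref{cor:supermultiplicative} and the alternative expression
\begin{equation*}
I_\psi^\searrow(\overline{\lambda})=\lim_{\epsilon\to 0}\lim_{n\to\infty}-\frac{1}{n}\log\norm{Q^n_{\ball{\epsilon}{\overline{\lambda}}}\psi^{\otimes n}}^2
\end{equation*}
just established. The bulk of the work is translating the multiplicative structure on the probabilities into additivity on rational convex combinations; extension to real convex combinations will then be immediate from lower semicontinuity.

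First I would handle rational weights. Fix $\overline{\mu},\overline{\nu}\in\overline{\partitions}^3$ and positive integers $m,n$, set $t=m/(m+n)$, and apply \cref{cor:supermultiplicative} with $Nm,Nn$ in place of $m,n$ to obtain
\begin{equation*}
\norm{Q^{Nm}_{\ball{\epsilon}{\overline{\mu}}}\psi^{\otimes Nm}}^2\norm{Q^{Nn}_{\ball{\epsilon}{\overline{\nu}}}\psi^{\otimes Nn}}^2\le\norm{Q^{N(m+n)}_{\ball{\epsilon}{t\overline{\mu}+(1-t)\overline{\nu}}}\psi^{\otimes N(m+n)}}^2
\end{equation*}
for every $N\in\naturals$. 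Taking $-\frac{1}{N(m+n)}\log$ of both sides and letting $N\to\infty$, superadditivity of $n\mapsto\log\norm{Q^n_{\ball{\epsilon}{\overline{\lambda}}}\psi^{\otimes n}}^2$ (which implies via Fekete's lemma that the limits along the subsequences $Nm$, $Nn$, $N(m+n)$ all equal the corresponding full limits in $n$) yields
\begin{equation*}
t\lim_{n\to\infty}\!\left[-\tfrac{1}{n}\log\norm{Q^n_{\ball{\epsilon}{\overline{\mu}}}\psi^{\otimes n}}^2\right]+(1-t)\lim_{n\to\infty}\!\left[-\tfrac{1}{n}\log\norm{Q^n_{\ball{\epsilon}{\overline{\nu}}}\psi^{\otimes n}}^2\right]\ge\lim_{n\to\infty}\!\left[-\tfrac{1}{n}\log\norm{Q^n_{\ball{\epsilon}{t\overline{\mu}+(1-t)\overline{\nu}}}\psi^{\otimes n}}^2\right].
\end{equation*}
Letting $\epsilon\to 0$ converts this into rational convexity of $I_\psi^\searrow$, namely $I_\psi^\searrow(t\overline{\mu}+(1-t)\overline{\nu})\le t I_\psi^\searrow(\overline{\mu})+(1-t)I_\psi^\searrow(\overline{\nu})$ for every $t\in\mathbb{Q}\cap[0,1]$.

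For the second step I would extend to arbitrary $t\in[0,1]$ by invoking lower semicontinuity of $I_\psi^\searrow$, which follows from \cref{prop:fsearrowproperties}\ref{it:fsearrowlsc} together with the lower semicontinuity of $I_\psi$ proved just above. Pick rationals $t_k\to t$; then $x_k:=t_k\overline{\mu}+(1-t_k)\overline{\nu}$ converges to $x:=t\overline{\mu}+(1-t)\overline{\nu}$, so
\begin{equation*}
I_\psi^\searrow(x)\le\liminf_{k\to\infty}I_\psi^\searrow(x_k)\le\liminf_{k\to\infty}\bigl[t_k I_\psi^\searrow(\overline{\mu})+(1-t_k)I_\psi^\searrow(\overline{\nu})\bigr]=tI_\psi^\searrow(\overline{\mu})+(1-t)I_\psi^\searrow(\overline{\nu}),
\end{equation*}
where the cases in which either $I_\psi^\searrow(\overline{\mu})$ or $I_\psi^\searrow(\overline{\nu})$ equals $\infty$ are trivial.

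The only delicate step is the first one: making sure that the limits in $N$ of the three different subsequences can be identified with the respective $\lim_{n\to\infty}$, and that the order of limits (first $N\to\infty$, then $\epsilon\to 0$) is legitimate. Both are handled by Fekete's lemma applied separately for each fixed $\epsilon$, together with the monotonicity of $\epsilon\mapsto\lim_{n\to\infty}-\frac{1}{n}\log\norm{Q^n_{\ball{\epsilon}{\overline{\lambda}}}\psi^{\otimes n}}^2$, so no further machinery is needed beyond what has already been assembled.
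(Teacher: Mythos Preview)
Your proof is correct and follows essentially the same approach as the paper: both derive convexity from \cref{cor:supermultiplicative} together with the alternative limit expression for $I_\psi^\searrow$. The only difference is that the paper treats an arbitrary $q\in(0,1)$ in one shot by splitting $n=\lfloor qn\rfloor+(n-\lfloor qn\rfloor)$ and absorbing the resulting $o(1)$ shift of the center into a $2\epsilon$-ball, whereas you first establish the inequality for rational $t=m/(m+n)$ and then pass to real $t$ via lower semicontinuity.
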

\begin{proof}
Let $\overline{\mu},\overline{\nu}\in\overline{\partitions}^3$ and $q\in(0,1)$. By \cref{cor:supermultiplicative}, for all $n$ (large enough so that $\lfloor qn\rfloor\neq 0$ and $\lfloor qn\rfloor\neq n$)
\begin{equation}
\begin{split}
& -\frac{\lfloor qn\rfloor}{n}\frac{1}{\lfloor qn\rfloor}\log\norm{Q^{\lfloor qn\rfloor}_{\ball{\epsilon}{\overline{\mu}}}\psi^{\otimes \lfloor qn\rfloor}}^2-\frac{n-\lfloor qn\rfloor}{n}\frac{1}{n-\lfloor qn\rfloor}\log\norm{Q^{n-\lfloor qn\rfloor}_{\ball{\epsilon}{\overline{\nu}}}\psi^{\otimes (n-\lfloor qn\rfloor)}}^2  \\
& \ge -\frac{1}{n}\log\norm{Q^n_{\ball{\epsilon}{\frac{\lfloor qn\rfloor}{n}\overline{\mu}+\frac{n-\lfloor qn\rfloor}{n}\overline{\nu}}}\psi^{\otimes n}}^2  \\
& \ge -\frac{1}{n}\log\norm{Q^n_{\ball{2\epsilon}{q\overline{\mu}+(1-q)\overline{\nu}}}\psi^{\otimes n}}^2.
\end{split}
\end{equation}
In the last inequality using that $\frac{\lfloor qn\rfloor}{n}\overline{\mu}+\frac{n-\lfloor qn\rfloor}{n}\overline{\nu}\to q\overline{\mu}+(1-q)\overline{\nu}$, therefore the distance is eventually less than $\epsilon$. We take the limit $n\to\infty$ and then $\epsilon\to 0$ to get
\begin{equation}
qI_\psi^\searrow(\overline{\mu})+(1-q)I_\psi^\searrow(\overline{\mu})\ge I_\psi^\searrow(q\overline{\mu}+(1-q)\overline{\nu}).
\end{equation}
\end{proof}

We have so far established that $I_\psi^\searrow$ is convex, lower semicontinuous and $\domain I_\psi^\searrow$ is compact. Since $\domain I_\psi^\searrow$ is a polytope, it is also continuous. Now we can prove the analogue of \cref{prop:freesupportupperbound} for general tripartite states.
\begin{proposition}\label{prop:generalupperbound}
Let $\psi\in\mathcal{H}_A\otimes\mathcal{H}_B\otimes\mathcal{H}_C$ be a state vector. Then
\begin{equation}
R^*(\psi\to\EPR_{AB},r)\le\max_{\substack{\overline{\lambda}\in\overline{\partitions}^3  \\  I_\psi^\searrow(\overline{\lambda})\le r}}\min\{\entropy(\overline{\lambda}_A),\entropy(\overline{\lambda}_B)\}.
\end{equation}
\end{proposition}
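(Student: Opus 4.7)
The plan is to start from the single-letter strong converse upper bound
\[
R^*(\psi\to\EPR_{AB},r) \le \inf_{\alpha\in[0,1),\,x\in[0,1]}\left[r\frac{\alpha}{1-\alpha}+E^{\alpha,(x,1-x,0)}(\psi)\right]
\]
(which follows from the general bound $R^*(\rho\to\sigma,r)\le[r\alpha/(1-\alpha)+E^{\alpha,\theta}(\rho)]/E^{\alpha,\theta}(\sigma)$ together with $E^{\alpha,(x,1-x,0)}(\EPR_{AB})=1$) and transform the right-hand side into the claimed expression via two applications of Sion's minimax theorem, after first replacing $I_\psi$ by its downward envelope $I_\psi^\searrow$.

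The first step is to show that, for all $\alpha\in[0,1)$ and $x\in[0,1]$,
\[
E^{\alpha,(x,1-x,0)}(\psi)=\sup_{\overline{\lambda}\in\overline{\partitions}^3}\left[x\entropy(\overline{\lambda}_A)+(1-x)\entropy(\overline{\lambda}_B)-\frac{\alpha}{1-\alpha}I_\psi^\searrow(\overline{\lambda})\right].
\]
The inequality ``$\le$'' is immediate from $I_\psi^\searrow\le I_\psi$. For ``$\ge$'', fix $\overline{\lambda}$ and choose $\overline{\mu}_n\preccurlyeq\overline{\lambda}$ with $I_\psi(\overline{\mu}_n)\to I_\psi^\searrow(\overline{\lambda})$; since majorization on $\overline{\partitions}^3$ is componentwise, Schur concavity of $\entropy$ gives $\entropy(\overline{\mu}_n{}_A)\ge\entropy(\overline{\lambda}_A)$ and analogously for $B$, so the value at $\overline{\lambda}$ computed with $I_\psi^\searrow$ is dominated by the value at $\overline{\mu}_n$ computed with $I_\psi$, which is in turn at most $E^{\alpha,(x,1-x,0)}(\psi)$.

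Introducing $t=\alpha/(1-\alpha)\in[0,\infty)$, I would then apply Sion's minimax theorem twice. The integrand is affine in both $x\in[0,1]$ and $t\in[0,\infty)$, and because $I_\psi^\searrow$ is convex and $\entropy$ is concave, it is also concave and upper semicontinuous in $\overline{\lambda}$ on the compact polytope $\domain I_\psi^\searrow$ (\cref{prop:fsearrowproperties}). The first swap collapses $x\entropy(\overline{\lambda}_A)+(1-x)\entropy(\overline{\lambda}_B)$ to $\min\{\entropy(\overline{\lambda}_A),\entropy(\overline{\lambda}_B)\}$ via endpoint attainment of the infimum of an affine function on $[0,1]$. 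The second swap reduces the infimum over $t\ge 0$ of $c+t(r-I_\psi^\searrow(\overline{\lambda}))$ to $c=\min\{\entropy(\overline{\lambda}_A),\entropy(\overline{\lambda}_B)\}$ when $I_\psi^\searrow(\overline{\lambda})\le r$ and to $-\infty$ otherwise, delivering the claimed bound; attainment of the maximum then follows from compactness of the feasible set and continuity of the objective.

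The main obstacle I anticipate is verifying the hypotheses of Sion's theorem in the second exchange, where $t$ ranges over the unbounded set $[0,\infty)$; this is resolved by invoking the form of the theorem requiring compactness of only one of the two sides, supplied here by $\domain I_\psi^\searrow$. The upper semicontinuity in $\overline{\lambda}$ needed for the suprema to behave correctly rests on lower semicontinuity of $I_\psi^\searrow$ together with continuity of the entropy on this compact polytope.
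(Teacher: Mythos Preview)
Your proposal is correct and follows essentially the same route as the paper: start from the strong converse bound in terms of $E^{\alpha,(x,1-x,0)}$, pass from $I_\psi$ to $I_\psi^\searrow$, then apply Sion's minimax theorem twice (once for $x$, once for $t=\alpha/(1-\alpha)$) using convexity and lower semicontinuity of $I_\psi^\searrow$ on its compact domain. The only noteworthy difference is that you prove the \emph{equality} $E^{\alpha,(x,1-x,0)}(\psi)=\sup_{\overline{\lambda}}[x\entropy(\overline{\lambda}_A)+(1-x)\entropy(\overline{\lambda}_B)-\frac{\alpha}{1-\alpha}I_\psi^\searrow(\overline{\lambda})]$ via Schur concavity of the entropy (exploiting $\theta(C)=0$), whereas the paper only uses the inequality $I_\psi\ge I_\psi^\searrow$ to pass to an upper bound; your observation is a slight sharpening but not needed for the proposition as stated.
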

\begin{proof}
Similarly to \cref{prop:freesupportupperbound}, we start with the upper bound
\begin{equation}
\begin{split}
R^*(\psi\to\EPR_{AB},r)
 & \le \frac{r\frac{\alpha}{1-\alpha}+E^{\alpha,\theta}(\psi)}{E^{\alpha,\theta}(\EPR_{AB})}  \\
 & = \frac{r\frac{\alpha}{1-\alpha}+E^{\alpha,\theta}(\psi)}{\theta(A)+\theta(B)},
\end{split}
\end{equation}
and optimize over $\alpha\in[0,1)$ and $\theta=(x,1-x,0)$:
\begin{equation}
\begin{split}
R^*(\psi\to\EPR_{AB},r)
 & \le \inf_{\substack{\alpha\in[0,1)  \\  x\in[0,1]}}\left[r\frac{\alpha}{1-\alpha}+E^{\alpha,(x,1-x,0)}(\psi)\right]  \\ 
 & = \inf_{\substack{\alpha\in[0,1)  \\  x\in[0,1]}}\max_{\overline{\lambda}\in\overline{\partitions}^3}\left[r\frac{\alpha}{1-\alpha}+x\entropy(\overline{\lambda}_A)+(1-x)\entropy(\overline{\lambda}_B)-\frac{\alpha}{1-\alpha}I_\psi(\overline{\lambda})\right]  \\ 
 & \le \inf_{\substack{\alpha\in[0,1)  \\  x\in[0,1]}}\max_{\overline{\lambda}\in\overline{\partitions}^3}\left[r\frac{\alpha}{1-\alpha}+x\entropy(\overline{\lambda}_A)+(1-x)\entropy(\overline{\lambda}_B)-\frac{\alpha}{1-\alpha}I_\psi^\searrow(\overline{\lambda})\right]  \\ 
 & = \inf_{\alpha\in[0,1)}\max_{\overline{\lambda}\in\overline{\partitions}^3}\left[r\frac{\alpha}{1-\alpha}+\min\{\entropy(\overline{\lambda}_A),\entropy(\overline{\lambda}_B)\}-\frac{\alpha}{1-\alpha}I_\psi^\searrow(\overline{\lambda})\right]  \\ 
 & = \inf_{t\in[0,\infty)}\max_{\overline{\lambda}\in\overline{\partitions}^3}\left[\min\{\entropy(\overline{\lambda}_A),\entropy(\overline{\lambda}_B)\}+t\left(r-I_\psi^\searrow(\overline{\lambda})\right)\right]  \\ 
 & = \max_{\overline{\lambda}\in\overline{\partitions}^3}\inf_{t\in[0,\infty)}\left[\min\{\entropy(\overline{\lambda}_A),\entropy(\overline{\lambda}_B)\}+t\left(r-I_\psi^\searrow(\overline{\lambda})\right)\right]  \\ 
 & = \max_{\substack{\overline{\lambda}\in\overline{\partitions}^3  \\  I_\psi^\searrow(\overline{\lambda})\le r}}\min\{\entropy(\overline{\lambda}_A),\entropy(\overline{\lambda}_B)\}.
\end{split}
\end{equation}
The first equality uses the definition of $E^{\alpha,\theta}$, the second inequality holds since $I_\psi\ge I_\psi^\searrow$, the second equality follows from interchanging the minimum over $x$ and the maximum over $\overline{\lambda}$ (the function is affine in $x$ and concave and upper semicontinuous in $\overline{\lambda}$), in the next step the new variable $t=\frac{\alpha}{1-\alpha}$ is introduced, the fourth equality uses Sion's minimax theorem \cite{sion1958general} (affine in $t$ and concave and upper semicontinuous in $\overline{\lambda}$), and the last step uses that the infimum over $t$ is $\min\{\entropy(\overline{\lambda}_A),\entropy(\overline{\lambda}_B)\}$ if the coefficient of $t$ is nonnegative and $-\infty$ otherwise.
\end{proof}

\begin{proposition}\label{prop:partitiontoratebound}
Let $n\in\positiveintegers$, $\lambda\in\partitions[n]^3$ such that $-\frac{1}{n}\log\norm{Q_\lambda\psi^{\otimes n}}^2\le r$. Then
\begin{equation}
R^*(\psi\to\EPR_{AB},r)\ge\min\left\{\entropy\left(\frac{1}{n}\lambda_A\right),\entropy\left(\frac{1}{n}\lambda_B\right)\right\}.
\end{equation}
\end{proposition}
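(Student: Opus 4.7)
The plan parallels the proof of \cref{prop:freesupportlowerbound}, with Schur--Weyl projections taking the role of type class projections. For each $m\in\positiveintegers$, I would first apply the projection $Q^{nm}_{\{\lambda/n\}}=\sum_{\mu\in\partitions[nm]^3:\,\mu\preccurlyeq m\lambda}P_\mu$ to $\psi^{\otimes nm}$, where $\preccurlyeq$ denotes componentwise majorization of partition triples. Iterated use of the supermultiplicativity $Q^{m_1}_U\otimes Q^{m_2}_V\le Q^{m_1+m_2}_{(m_1U+m_2V)/(m_1+m_2)}$ with $U=V=\{\lambda/n\}$ gives
\begin{equation*}
\norm{Q^{nm}_{\{\lambda/n\}}\psi^{\otimes nm}}^2\ge\norm{Q_\lambda\psi^{\otimes n}}^{2m}\ge 2^{-rnm}.
\end{equation*}
The number of triples $\mu\in\partitions[nm]^3$ with $\mu\preccurlyeq m\lambda$ and $P_\mu\ne 0$ is at most $(nm+1)^{d_A+d_B+d_C}$ (each $\mu_X$ has at most $d_X$ parts), so by pigeonhole there is some $\mu^*\preccurlyeq m\lambda$ with $\norm{P_{\mu^*}\psi^{\otimes nm}}^2\ge 2^{-rnm}/(nm+1)^{d_A+d_B+d_C}$. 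I would then project further onto this single isotypic block to obtain the pure state $\ket{\varphi}=P_{\mu^*}\psi^{\otimes nm}/\norm{P_{\mu^*}\psi^{\otimes nm}}$.

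Next I would bound the local min-entropies of $\ket{\varphi}$. Since $\psi^{\otimes nm}$ is $S_{nm}$-invariant under the diagonal action and each local Schur--Weyl projector $P^X_{\mu^*_X}$ commutes with $S_{nm}$, the state $\ket{\varphi}$ is itself $S_{nm}$-invariant. Its $A$-marginal is supported on $\mathbb{S}_{\mu^*_A}(\mathcal{H}_A)\otimes[\mu^*_A]$ and invariant under the irreducible $S_{nm}$-action on the $[\mu^*_A]$ factor, so Schur's lemma forces it to have the form $\sigma_A\otimes I_{[\mu^*_A]}/\dim[\mu^*_A]$. Thus $\entropy_\infty(A)_\varphi\ge\log\dim[\mu^*_A]\ge nm\entropy(\mu^*_A/(nm))-O(\log(nm))$ by \eqref{eq:symmetricdimensionbound}. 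Since $\mu^*\preccurlyeq m\lambda$ implies $\mu^*_A/(nm)\preccurlyeq\lambda_A/n$ as probability distributions, Schur-concavity of Shannon entropy gives $\entropy(\mu^*_A/(nm))\ge\entropy(\lambda_A/n)$; the analogous bound holds on the $B$ side.

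Finally, \cref{thm:oneshotdeterministic} applied to $\ket{\varphi}$ produces at least $nm\min\{\entropy(\lambda_A/n),\entropy(\lambda_B/n)\}-O(\log(nm))$ EPR pairs deterministically, since the correction term $\log(4\ln(2d_C^{4nm}(d_A^{nm}+d_B^{nm})))$ is only $O(\log(nm))$. Composing the probabilistic projections with this deterministic protocol realizes $\psi^{\otimes nm}\loccto p_m\EPR_{AB}^{\otimes N_m}$ with $p_m\ge 2^{-rnm}/(nm+1)^{d_A+d_B+d_C}$ and $N_m/(nm)\to\min\{\entropy(\lambda_A/n),\entropy(\lambda_B/n)\}$, while $-\frac{1}{nm}\log p_m\to r$; the desired lower bound on $R^*(\psi\to\EPR_{AB},r)$ then follows from \cref{prop:suprates}. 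I expect the main subtlety to be the Schur's lemma step that converts $S_{nm}$-invariance into exact uniformity on the symmetric-group representation factor of each marginal; this is what replaces the transitive permutation action argument from the free-support case and is what enables \cref{thm:oneshotdeterministic} to give a useful bound on $\ket{\varphi}$ despite its not being a classical type class state.
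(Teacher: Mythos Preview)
Your proposal is correct and follows essentially the same route as the paper: supermultiplicativity of the $Q$-projections to control the success probability at level $mn$, $S_{mn}$-invariance plus Schur's lemma to force each local marginal to be maximally mixed on the symmetric-group factor $[\mu]$, the bound $\log\dim[\mu]\ge mn\,\entropy(\lambda_X/n)-O(\log mn)$ via \eqref{eq:symmetricdimensionbound} and Schur-concavity, and finally \cref{thm:oneshotdeterministic}.

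The one difference is that you insert a pigeonhole step to pass from $Q^{nm}_{\{\lambda/n\}}\psi^{\otimes nm}$ to a single isotypic block $P_{\mu^*}\psi^{\otimes nm}$ before analyzing the marginals. The paper skips this: it observes that the $A$-marginal of the normalized $Q_{m\lambda}\psi^{\otimes mn}$ already has the block form $\bigoplus_{\mu\preccurlyeq m\lambda_A}p_\mu\,\rho_\mu\otimes I_{[\mu]}/\dim[\mu]$ (by the same Schur argument, applied to each $\mu$-block simultaneously), so its min-entropy is at least $\min_{\mu\preccurlyeq m\lambda_A}\log\dim[\mu]$, and one can invoke \cref{thm:oneshotdeterministic} directly on that state. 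Your extra step costs a polynomial factor in the probability, which is harmless asymptotically, but is not needed.
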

\begin{proof}
By \cref{cor:supermultiplicative}, for all $m\in\positiveintegers$ we have $-\frac{1}{mn}\log\norm{Q_{m\lambda}\psi^{\otimes mn}}^2\le r$. With respect to the Schur--Weyl decomposition, the marginal of the normalization of $Q_{m\lambda}\psi^{\otimes mn}$ on subsystem $A$ has the form
\begin{equation}
\bigoplus_{\mu\preccurlyeq m\lambda_A}p_\mu\rho_\mu\otimes \frac{I_{[\mu]}}{\dim[\mu]}
\end{equation}
with $\rho_\mu\in\states(\mathbb{S}_\mu(\mathcal{H}_A))$, therefore its min-entropy is at least
\begin{equation}
\begin{split}
\min_{\mu\preccurlyeq m\lambda_A}\log\dim[\mu]
 & \ge \log\min_{\mu\preccurlyeq m\lambda_A}\frac{1}{(mn+d_A)^{(d_A+2)(d_A-1)/2}}2^{mn\entropy(\mu/mn)}  \\
 & \ge mn\entropy(\lambda_A/n)-\frac{(d_A+2)(d_A-1)}{2}\log(mn+d_A).
\end{split}
\end{equation}
Similarly, the min-entropy of the $B$ marginal is at least
\begin{equation}
mn\entropy(\lambda_B/n)-\frac{(d_B+2)(d_B-1)}{2}\log(mn+d_B)
\end{equation}

By \cref{thm:oneshotdeterministic}, the normalized state can be transformed into
\begin{equation}
mn\min\{\entropy(\lambda_A/n),\entropy(\lambda_B/n)\}-O(\log mn)
\end{equation}
copies of $EPR_{AB}$ deterministically, therefore
\begin{equation}
\begin{split}
R^*(\psi\to\EPR_{AB},r)
 & \ge\lim_{m\to\infty}\frac{1}{mn}\left(mn\min\{\entropy(\lambda_A/n),\entropy(\lambda_B/n)\}-O(\log mn)\right)  \\
 & = \min\left\{\entropy\left(\frac{1}{n}\lambda_A\right),\entropy\left(\frac{1}{n}\lambda_B\right)\right\}.
\end{split}
\end{equation}
\end{proof}

\begin{proposition}\label{prop:generallowerbound}
Let $\overline{\lambda}\in\overline{\partitions}^3$ such that $I_\psi^\searrow(\overline{\lambda})\le r$. Then
\begin{equation}
R^*(\psi\to\EPR_{AB},r)\ge\min\{\entropy(\overline{\lambda}_A),\entropy(\overline{\lambda}_B)\}.
\end{equation}
\end{proposition}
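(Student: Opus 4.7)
The plan is to reduce the inequality to Proposition~\ref{prop:partitiontoratebound} by producing, from the hypothesis $I_\psi^\searrow(\overline{\lambda})\le r$, a sequence of partition triples with the required properties. The expression $I_\psi^\searrow(\overline{\lambda})=\lim_{\epsilon\to 0}\lim_{n\to\infty}-\frac{1}{n}\log\norm{Q^n_{\ball{\epsilon}{\overline{\lambda}}}\psi^{\otimes n}}^2$ together with monotonicity of the inner limit in $\epsilon$ implies that for every $\epsilon>0$ and all sufficiently large $n$,
\begin{equation*}
-\frac{1}{n}\log\norm{Q^n_{\ball{\epsilon}{\overline{\lambda}}}\psi^{\otimes n}}^2\le r+\epsilon.
\end{equation*}
Since $Q^n_{\ball{\epsilon}{\overline{\lambda}}}$ is a sum of at most $(n+1)^{d_A+d_B+d_C}$ pairwise orthogonal projections $P_\mu$, each indexed by a $\mu\in\partitions[n]^3$ with $\frac{1}{n}\mu$ majorized by some element of $\ball{\epsilon}{\overline{\lambda}}$, pigeonhole produces a distinguished triple $\lambda^{(n,\epsilon)}$ for which
\begin{equation*}
-\frac{1}{n}\log\norm{P_{\lambda^{(n,\epsilon)}}\psi^{\otimes n}}^2\le r+\epsilon+o_n(1).
\end{equation*}
Since $Q_{\lambda^{(n,\epsilon)}}\ge P_{\lambda^{(n,\epsilon)}}$, Proposition~\ref{prop:partitiontoratebound} applies and yields
\begin{equation*}
R^*(\psi\to\EPR_{AB},r+\epsilon+o_n(1))\ge\min\bigl\{\entropy(\lambda^{(n,\epsilon)}_A/n),\entropy(\lambda^{(n,\epsilon)}_B/n)\bigr\}.
\end{equation*}

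Next I would take a diagonal limit. Choose sequences $\epsilon_k\searrow 0$ and $n_k\to\infty$ for which the bound holds at $(n_k,\epsilon_k)$ with the total excess $\epsilon_k+o_{n_k}(1)$ tending to zero. Since $\overline{\partitions}^3$ is compact, after passing to a subsequence the normalized triples $\frac{1}{n_k}\lambda^{(n_k,\epsilon_k)}$ converge to some $\overline{\nu}\in\overline{\partitions}^3$, while the majorizing elements $\overline{\mu}^{(k)}\in\ball{\epsilon_k}{\overline{\lambda}}$ automatically converge to $\overline{\lambda}$. Closedness of the dominance order then gives $\overline{\nu}\preccurlyeq\overline{\lambda}$ componentwise, hence $\overline{\nu}_A\preccurlyeq\overline{\lambda}_A$ and $\overline{\nu}_B\preccurlyeq\overline{\lambda}_B$, and the entropy--majorization inequality recalled in the preliminaries yields
\begin{equation*}
\min\{\entropy(\overline{\nu}_A),\entropy(\overline{\nu}_B)\}\ge\min\{\entropy(\overline{\lambda}_A),\entropy(\overline{\lambda}_B)\}.
\end{equation*}
Passing to the limit $k\to\infty$ in the rate bound, the right-hand side converges by continuity of Shannon entropy, while the left-hand side converges to $R^*(\psi\to\EPR_{AB},r)$ since $r\mapsto R^*(\psi\to\EPR_{AB},r)$ is concave by Proposition~\ref{prop:suprates} and hence continuous on the interior of its domain.

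The main obstacle is the joint handling of the two limits: $n\to\infty$ is nested inside $\epsilon\to 0$ in the definition of $I_\psi^\searrow$, so one must extract a diagonal subsequence along which the normalized partition converges to a point provably majorized by $\overline{\lambda}$ itself, not merely by a point close to it. The enabling facts are (i) polynomiality of $|\partitions[n]^3|$, so that a single projection $P_\mu$ captures the logarithmic asymptotics of $Q^n_{\ball{\epsilon}{\overline{\lambda}}}$ up to an $o_n(1)$ correction; (ii) closedness of majorization, so that the relation $\frac{1}{n_k}\lambda^{(n_k,\epsilon_k)}\preccurlyeq\overline{\mu}^{(k)}$ transfers in the limit to $\overline{\nu}\preccurlyeq\overline{\lambda}$; and (iii) continuity of $R^*$ in its second argument, which absorbs the vanishing excess $\epsilon_k+o_{n_k}(1)$.
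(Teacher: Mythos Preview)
Your argument is correct and follows essentially the same route as the paper: extract from the $Q^n$-expression for $I_\psi^\searrow$ a single partition triple per $n$ via the polynomial bound on $|\partitions[n]^3|$, apply \cref{prop:partitiontoratebound}, pass to a convergent subsequence in $\overline{\partitions}^3$, and use closedness of majorization plus Schur-concavity of entropy to finish. The only organizational difference is that the paper diagonalizes up front (choosing $\epsilon_n\to 0$ directly so that the double limit collapses to a single one), whereas you keep $\epsilon$ and $n$ separate and diagonalize at the end; your version is slightly more explicit about invoking concavity of $r\mapsto R^*(\psi\to\EPR_{AB},r)$ to absorb the vanishing excess in the rate argument.
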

\begin{proof}
Choose a sequence $\epsilon_n\to 0$ such that
\begin{equation}
I_\psi^\searrow(\overline{\lambda})=\lim_{n\to\infty}-\frac{1}{n}\log\norm{Q^n_{\ball{\epsilon_n}{\overline{\lambda}}}\psi^{\otimes n}}^2.
\end{equation}
For each $n$, choose $\mu_n\in\partitions[n]^3$ such that $\mu_n$ is majorized by an element of $\ball{\epsilon_n}{\overline{\lambda}}$ and $\norm{P_{\mu_n}\psi^{\otimes n}}^2$ is maximal subject to this condition. Since there are polynomially many partitions, we also have
\begin{equation}
I_\psi^\searrow(\overline{\lambda})=\lim_{n\to\infty}-\frac{1}{n}\log\norm{P_{\mu_n}\psi^{\otimes n}}^2.
\end{equation}
Let $r_n=-\frac{1}{n}\log\norm{P_{\mu_n}\psi^{\otimes n}}^2$. By \cref{prop:partitiontoratebound}, for each $n$ we have
\begin{equation}
R^*(\psi\to\EPR_{AB},r_n)\ge\min\left\{\entropy\left(\frac{1}{n}\mu_{n,A}\right),\entropy\left(\frac{1}{n}\mu_{n,B}\right)\right\}.
\end{equation}
By compactness of $\overline{\partitions}^3$, along a subsequence $\frac{1}{n}\mu_n$ converges to some limit $\overline{\mu}$. Since majorization is a closed relation, $\overline{\mu}\preccurlyeq\overline{\lambda}$. Taking the limit along this subsequence, we therefore obtain
\begin{equation}
R^*(\psi\to\EPR_{AB},r)
\ge\min\{\entropy(\overline{\mu}_A),\entropy(\overline{\mu}_B)\}
\ge\min\{\entropy(\overline{\lambda}_A),\entropy(\overline{\lambda}_B)\}.
\end{equation}
\end{proof}

To summarize, \cref{prop:generalupperbound,prop:generallowerbound} together prove our general result:
\begin{theorem}\label{thm:generalscrate}
Let $\psi\in\mathcal{H}_A\otimes\mathcal{H}_B\otimes\mathcal{H}_C$ be a state vector and $r\ge 0$. Then
\begin{equation}
R^*(\psi\to\EPR_{AB},r)=\max_{\substack{\overline{\lambda}\in\overline{\partitions}^3  \\  I_\psi^\searrow(\overline{\lambda})\le r}}\min\{\entropy(\overline{\lambda}_A),\entropy(\overline{\lambda}_B)\}.
\end{equation}
\end{theorem}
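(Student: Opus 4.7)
The plan is to establish the theorem by proving matching upper and lower bounds on $R^*(\psi\to\EPR_{AB},r)$, both expressed in terms of the monotone envelope $I_\psi^\searrow$ of the large deviation rate function $I_\psi$. The reason for working with $I_\psi^\searrow$ rather than $I_\psi$ itself is that the latter is not known to be convex, which obstructs a clean duality argument; replacing it with the largest monotone decreasing minorant preserves enough information and yields a convex, lower semicontinuous function with compact domain (via the supermultiplicativity of the projectors $Q^n_U$ through Littlewood--Richardson, together with \cref{prop:fsearrowproperties}).

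For the upper bound, I would start from the established strong converse bound $R^*(\psi\to\EPR_{AB},r)\le\inf_{\alpha,\theta}\frac{r\frac{\alpha}{1-\alpha}+E^{\alpha,\theta}(\psi)}{E^{\alpha,\theta}(\EPR_{AB})}$ and restrict to weights $\theta=(x,1-x,0)$ supported on the $AB$ cut so that the denominator equals $1$. Expanding the definition of $E^{\alpha,\theta}$ as a supremum over $\overline{\lambda}$ and then replacing $I_\psi$ by $I_\psi^\searrow$ (legal since $I_\psi\ge I_\psi^\searrow$ so the bound only loosens) produces a triple optimization $\inf_{\alpha,x}\sup_{\overline{\lambda}}$ of a function that is affine in $x$, concave and upper semicontinuous in $\overline{\lambda}$, and affine in $t:=\frac{\alpha}{1-\alpha}$ after reparametrization. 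Two applications of Sion's minimax theorem collapse the $x$-infimum into $\min\{\entropy(\overline{\lambda}_A),\entropy(\overline{\lambda}_B)\}$ and swap the $t$-infimum with the $\overline{\lambda}$-supremum; the inner $t$-infimum is then $-\infty$ unless $I_\psi^\searrow(\overline{\lambda})\le r$, producing the claimed constrained maximum.

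For the lower bound, given $\overline{\lambda}$ with $I_\psi^\searrow(\overline{\lambda})\le r$, I would use the characterization $I_\psi^\searrow(\overline{\lambda})=\lim_{\epsilon\to0}\lim_{n\to\infty}-\frac{1}{n}\log\norm{Q^n_{\ball{\epsilon}{\overline{\lambda}}}\psi^{\otimes n}}^2$ to pull out, for each $n$, a single triple $\mu_n\in\partitions[n]^3$ majorized by an element of a shrinking ball around $\overline{\lambda}$ and achieving (up to a polynomial factor) the same asymptotic weight. Projecting $\psi^{\otimes n}$ onto the Schur--Weyl sector indexed by $\mu_n$ costs probability $\approx 2^{-nr}$, and the resulting normalized state, by the Schur--Weyl decomposition and \eqref{eq:symmetricdimensionbound}, has local min-entropies at least $n\entropy(\mu_{n,A}/n)-O(\log n)$ and $n\entropy(\mu_{n,B}/n)-O(\log n)$. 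Feeding this into the one-shot deterministic distillation result \cref{thm:oneshotdeterministic} (applied to $m$-fold tensor powers to absorb the subleading terms, exactly as in \cref{prop:partitiontoratebound}) yields $\min\{\entropy(\mu_{n,A}/n),\entropy(\mu_{n,B}/n)\}$ ebits per copy asymptotically. Passing to a convergent subsequence $\frac{1}{n}\mu_n\to\overline{\mu}$ and using that majorization is closed gives $\overline{\mu}\preccurlyeq\overline{\lambda}$, so Schur concavity of the entropy yields $\min\{\entropy(\overline{\mu}_A),\entropy(\overline{\mu}_B)\}\ge\min\{\entropy(\overline{\lambda}_A),\entropy(\overline{\lambda}_B)\}$.

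The main obstacle is the non-convexity of $I_\psi$, which is sidestepped by the two structural facts proved earlier: first, that $I_\psi^\searrow$ admits the alternative description via the cumulative projectors $Q^n_U$, and second, that these projectors are supermultiplicative under tensor products (\cref{cor:supermultiplicative}), from which convexity and lower semicontinuity of $I_\psi^\searrow$ follow. Once these are in hand, the minimax manipulations in the upper bound and the subsequence extraction in the lower bound both go through cleanly, and combining \cref{prop:generalupperbound} with \cref{prop:generallowerbound} proves the theorem.
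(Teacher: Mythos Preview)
Your proposal is correct and follows essentially the same approach as the paper: the upper bound via the $E^{\alpha,\theta}$ strong converse bound, restriction to $\theta=(x,1-x,0)$, replacement of $I_\psi$ by $I_\psi^\searrow$, and two applications of Sion's minimax theorem is exactly \cref{prop:generalupperbound}; the lower bound via extraction of a dominant $\mu_n$, the one-shot deterministic bound applied as in \cref{prop:partitiontoratebound}, and passage to a majorization-closed limit is exactly \cref{prop:generallowerbound}. The paper's proof of the theorem is simply the sentence that these two propositions combine to give the result, which is also your concluding line.
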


\section{Conclusion}\label{sec:conclusion}

We determined the optimal rates for transforming many copies of a tripartite pure state into ebits between the first two of the subsystems, under various success criteria. Specifically, when either the success probability or the fidelity is required to be $1$ for any finite number of copies, or approach $1$ exponentially fast, the rate is given by the minimum of the rates for the two bipartite transformations. In contrast, when the success probability or the fidelity is allowed to approach $0$ with a specified strong converse exponent, the bipartite upper bound is not achievable, and the optimal rate is given by optimization over a set of R\'enyi-type tripartite entanglement measures. The rate formula in the strong converse case interpolates between two previously understood limits: with asymptotically vanishing error (either in a probabilistic or approximate sense), the optimal rate is the minimum of the two marginal von Neumann entropies \cite{smolin2005entanglement}, while for asymptotic SLOCC transformations the rate is the minimum over a one-parameter subfamily of the quantum functionals \cite{christandl2023universal,christandl2023weighted}. When the full range of strong converse exponents is considered, an additional order parameter appears, and the relevant entanglement measures can be seen as R\'enyi generalizations of the convex combinations of the two marginal von Neumann entropies, which are nevertheless not simply convex combinations themselves when the order is less than $1$ \cite{vrana2023family}.

An open problem suggested by our work is to find similar rate formulas for multipartite-to-bipartite transformations. Such a generalization is far from straightforward. In the vanishing-error case, the optimal rate is the minimum of the entanglement entropies for all cuts separating the $A$ and $B$ subsystems \cite{smolin2005entanglement,horodecki2005partial}. However, the optimal rate under asymptotic SLOCC is not known. A major obstacle to finding it as well as the optimal rate in the strong converse domain is that it likely involves R\'enyi generalizations of all convex combinations of the entanglement entropies for these cuts. While such generalizations are known to exist, our understanding of them is still very much limited \cite{bugar2024interpolating}. It seems reasonable to expect that for each convex combination and every order parameter $\alpha\in[0,1]$ there is a unique such R\'enyi type entanglement measure, and finding them explicitly might also bring the problem of multipartite-to-bipartite transformations within reach.

The direct regime looks somewhat simpler because at least in the tripartite-to-bipartite setting only involves bipartite entanglement measures. In fact, in the case of deterministic transformations it is not difficult to extend our results to get the exact optimal rate. The idea is to measure one subsystem at a time, using \cref{prop:goodvector} multiple times for each projection, with respect to different tripartitions. For instance, for three subsystems $ABCD$, while measuring on $D$ one needs to consider the tripartitions $A:BC:D$ and $AC:B:D$ to be able to conclude that the entanglement between $A$ and $BC$ remains close to the smallest of the min-entropies of entanglement for $A:BCD$ and $AD:BC$ and at the same time the entanglement between $B$ and $AC$ remains close to the smallest of the min-entropies of entanglement for $B:ACD$ and $BD:AC$. By looking at the proof of \cref{thm:oneshotdeterministic} one can see that the bounds still work when a (bounded) number of such tripartitions is simultaneously considered (by the union bound, one needs to add a bounded number of doubly exponentially small probabilities). In this way one can see that the maximal deterministic rate is equal to the minimum of the min-entropies of entanglement for all bipartitions separating $A$ and $B$. On the other hand, the case of exponential convergence of the success probability to $1$ is less clear. The difficulty is that \cref{prop:simultaneoustruncation} cannot be applied, simply because a truncation for e.g. the subsystem $BC$ is not a local operation.

Our work also implies an improved approximation of the entanglement measures $E^{\alpha,\theta}$. The methods of \cref{sec:strongconverse} (when extended to multipartite systems) and the upper bound of \cite[Proposition 3.6]{bugar2024explicit} gives that if $\psi\in\mathcal{H}_1\otimes\dots\otimes\mathcal{H}_k$ with local dimensions $d_1,\dots,d_k$ and $n\in\positiveintegers$, then for
\begin{equation}
M_n:=\max_{\lambda\in\partitions[n]^k}\left[\sum_{j=1}^k\theta(j)\entropy\left(\frac{1}{n}\lambda_j\right)+\frac{\alpha}{1-\alpha}\frac{1}{n}\log\norm{P_\lambda\psi^{\otimes n}}\right]
\end{equation}
the inequality
\begin{equation}
M_n\le E^{\alpha,\theta}(\psi) \le M_n + \left[\sum_{j=1}^k\theta(j)d_j+\frac{\alpha}{1-\alpha}\sum_{j=1}^kd_j\right]\frac{\log(n+1)}{n}
\end{equation}
holds. An undesirable feature of this approximation guarantee is its dependence on $\alpha$, more precisely that the difference between the lower and upper bounds approaches infinity as $\alpha\to 1$. Since $\lim_{\alpha\to 1}E^{\alpha,\theta}(\psi)=\sum_{j=1}^k\theta(j)\entropy(j)_\psi$, we expect that a better approximation is possible near $\alpha=1$, and in the tripartite case this would facilitate the computation of $R^*(\psi\to\EPR_{AB},r)$ for general states.

\section{Acknowledgement}

This work was supported by the Ministry of Culture and Innovation of Hungary from the National Research, Development and Innovation Fund via the research grants FK~146643, K~146380, and EXCELLENCE~151342, by the János Bolyai Research Scholarship of the Hungarian Academy of Sciences, and by the Ministry of Culture and Innovation and the National Research, Development and Innovation Office within the Quantum Information National Laboratory of Hungary (Grant No.~2022-2.1.1-NL-2022-00004).

\bibliography{refs}{}

\end{document}